\documentclass[10pt]{article}

\usepackage{geometry}  
\geometry{verbose,tmargin=1in,bmargin=1in,lmargin=1in,rmargin=1in}
\usepackage[english]{babel}
\usepackage[utf8x]{inputenc}
\usepackage[T1]{fontenc}

\usepackage[dvipsnames]{xcolor}
\usepackage{paralist}
\usepackage{graphicx}
\usepackage{subcaption}
\usepackage{longtable} 
\usepackage{multirow}
\usepackage{listings}
\usepackage{makecell}
\usepackage{array}
\usepackage{float}
\usepackage{dsfont}
\usepackage{rotating}
\usepackage{booktabs}
\usepackage{enumerate}
\usepackage{tikz}
\usepackage{pgf}

\usepackage{amsmath}
\usepackage{amssymb}
\usepackage{amsthm}
\usepackage{bm}
\usepackage{mathtools}
\usepackage{mathrsfs}
\usepackage{algorithm}
\usepackage{algorithmic}

\usepackage{natbib}
\usepackage[
  colorlinks,
  citecolor=blue,
  linkcolor=red,
  anchorcolor=red,
  urlcolor=blue
]{hyperref}
\mathtoolsset{showonlyrefs}
\usepackage{authblk}
\usepackage{todonotes}
\usepackage{hyperref}
\theoremstyle{plain}
\newtheorem{definition}{Definition}
\newtheorem{theorem}{Theorem}
\newtheorem{lemma}{Lemma}

\newtheorem{property}{Property}
\newtheorem{example}{Example}
\newtheorem{corollary}{Corollary}
\newtheorem{remark}{Remark}
\newtheorem{assumption}{Assumption}
\newtheorem{proposition}{Proposition}

\newcommand{\diff}{\textnormal{d}}
\newcommand{\R}{\mathbb{R}}
\newcommand{\E}{\mathbb{E}}
\newcommand{\PP}{\mathbb{P}}
\newcommand{\ind}{\mathds{1}}

\newcommand{\indep}{\rotatebox[origin=c]{90}{$\models$}}

\newcommand{\argmax}[1]{\underset{#1}{\arg\!\max}}
\newcommand{\argmin}[1]{\underset{#1}{\arg\!\min}}

\usepackage{xspace}

\newcommand{\given}{{\,|\,}}
\newcommand{\biggiven}{\,\big{|}\,}
\newcommand{\Biggiven}{\,\Big{|}\,}
\newcommand{\bigggiven}{\,\bigg{|}\,}
\newcommand{\Bigggiven}{\,\Bigg{|}\,}
\def\##1\#{\begin{align}#1\end{align}}
\def\$#1\${\begin{align*}#1\end{align*}}

\definecolor{myblue}{rgb}{.8, .8, 1}
\definecolor{mathblue}{rgb}{0.2472, 0.24, 0.6} 
\definecolor{mathred}{rgb}{0.6, 0.24, 0.442893}
\definecolor{mathyellow}{rgb}{0.6, 0.547014, 0.24}


\newcommand{\calL}{{\mathcal{L}}}

\newcommand{\calU}{{\mathcal{U}}}

\newcommand{\calX}{{\mathcal{X}}}

\usepackage{ying}

\long\def\comment#1{}

\let\hat\widehat
\let\tilde\widetilde

\def \iid {\stackrel{\textnormal{i.i.d.}}{\sim}}

\def \obs {\textnormal{obs}}
\def \TPP {\tilde{\PP}}

\def \att {{\textnormal{ATT}}}
\def \atc {{\textnormal{ATC}}}
\def \ate {{\textnormal{ATE}}}
\def \ORR {{\textnormal{OR}}}
\def \ess {{\textnormal{ess}}}
\def \uly {\underline{y}}
\def \pto {{\stackrel{P}{\to}}}

\usepackage{hyperref}
\def\##1\#{\begin{align}#1\end{align}}
\def\$#1\${\begin{align*}#1\end{align*}}

\makeatletter
\newcommand{\printfnsymbol}[1]{%
  \textsuperscript{\@fnsymbol{#1}}%
}
\makeatother

\usepackage{xcolor}

\title{Sensitivity analysis under the $f$-sensitivity models:\\
a distributional robustness perspective}
\author[1]{Ying Jin\thanks{Author names listed alphabetically.}}
\author[2]{Zhimei Ren\printfnsymbol{1}}
\author[3]{Zhengyuan Zhou}
\affil[1]{Department of Statistics, Stanford University}
\affil[2]{Department of Statistics, University of Chicago}
\affil[3]{Stern School of Business, New York University}
\date{}

\begin{document}

\maketitle

\begin{abstract}

This paper introduces the $f$-sensitivity model, a new 
sensitivity model
that characterizes the violation of unconfoundedness in causal inference.
It assumes the selection bias due to unmeasured confounding
is bounded ``on average''; compared with
the widely used point-wise sensitivity models in the literature, 
it is able to capture the strength of unmeasured confounding 
by not only its magnitude but also the chance of 
encountering such a magnitude. 

We propose a framework for sensitivity analysis
under our new model based on a distributional robustness 
perspective. We first  show that the bounds on counterfactual means
under the $f$-sensitivity model are optimal solutions
to a new class of distributionally robust
optimization (DRO) programs, whose dual forms
are essentially risk minimization problems.
We then construct point estimators for these bounds 
by applying a novel
debiasing technique to 
the output of the corresponding empirical
risk minimization (ERM) problems.  
Our estimators are shown to 
converge to valid bounds on counterfactual means 
if any nuisance component can be 
estimated consistently, and to 
the exact bounds when the ERM step
is additionally consistent. 
We further establish asymptotic normality
and Wald-type inference
for these estimators
under slower-than-root-$n$ convergence 
rates of the estimated nuisance components.
Finally, the performance of
our method is demonstrated with numerical experiments.
\end{abstract}
 
\section{Introduction}

In a variety of areas, conducting randomized trial 
can be costly, unethical or even infeasible. 
To draw causal conclusions, researchers/policy-makers 
need to resort to observational data.
The particular challenge in 
observational studies is confounding: 
because the treatment allocation 
mechanism 
is completely unknown, 
there might exist variables that
affect both the treatment and the outcomes. 
With unmeasured confounding, causal 
conclusions drawn from na\"ively comparing the 
outcomes for 
the treated and untreated units -- even after 
adjusting for the difference in 
the observable characteristics -- can be invalid. 

An example is the well-known debate over 
the effect of smoking (the treatment) 
on the development of lung cancer (the outcome), 
where one observes a higher prevalence of
lung cancer among smokers and concludes
that smoking causes lung cancer. 
The criticism of~\citet{fisher1958cigarettes} 
argues that this effect may instead be 
entirely driven by genetics: 
even for two people with the same 
observed characteristics (e.g., demographic
information and medical history), 
the one who is genetically more likely 
to develop lung cancer 
may also be genetically more likely to smoke; 
should this be true, even though smoking may not cause the lung 
cancer, we could still observe a higher proportion 
of lung cancer among the treated group 
even after matching the observed characteristics. 
In this case, the genetic factor 
could be an unmeasured confounder that induces
the nontrivial observed effect and
potentially leads to 
a faulty causal conclusion.

To formalize the discussion,  
we follow the potential outcome framework \citep{neyman1923applications,imbens2015causal}
and posit a data-generating distribution $\PP$ on $(X, U, T, Y(1), Y(0))$, where $X \in \mathcal{X}$ is the observed covariate vector 
in a compact set $\mathcal{X}\subset\RR^p$,
$U \in \mathcal{U}$ is the unobserved confounding factor, 
$T \in \{0,1\}$ is the treatment option ($T=1$ for receiving the treatment 
and $T=0$ for control),
and $Y(1) \in \RR$ and $Y(0) \in \RR$ are the two potential outcomes. 
We assume access to a dataset $\{(X_i, T_i, Y_i)\}_{i=1}^n$ of $n$ i.i.d.~triplets generated from $\PP$, 
where for unit $i$,  
$Y_i = Y_i (T_i)$ is the observed outcome under 
treatment $T_i$,\footnote{here we implicitly make the Stable Unit Treatment Value Assumption (SUTVA).}   
Without loss of generality (since $U$ is arbitrary), under $\PP$, one has 
\#\label{eq:u_confoundedness}
\big(Y(1),Y(0)\big)~\indep~T \given X,U.
\# 

We are interested in 
the average treatment effect (ATE): 
	$\EE[Y(1) - Y(0)]$, 
the average treatment effect on the control (ATC):
	$\EE[Y(1) - Y(0) \given T = 0]$, 
and the average treatment effect on the treated (ATT):
	$\EE[Y(1) - Y(0) \given T = 1]$, 
where in all quantities the expectation is taken with respect to the underlying joint distribution. To make progress, we assume that there is sufficient exploration in the dataset, known as the overlap assumption in the literature. 
We define the observed propensity score $e(x) := \PP(T=1\given X=x)$. 

\begin{assumption}[Overlap]
	 $0 < e(x) < 1$ for $\PP$-almost all $x\in \cX$.\footnote{Since $\mathcal{X}$ is compact, this is equivalent to assuming that $\eta \le e(x) \le 1 -\eta$ for some positive $\eta$, as used in certain versions of overlap in the literature.} 
	\end{assumption}

Under the overlap assumption, the 
identification and estimation of 
treatment effects in observational studies 
have mostly relied on  
the unconfoundedness condition (a.k.a. strong ignorability~\citep{rosenbaum1983central}): 
$ \big(Y(1),Y(0)\big)~\indep~T \given X$. 
That is, all confounders 
that could simultaneously affect the treatment assignment
and the outcomes have been measured in $X$.  
In the lung cancer example, this condition 
imposes that for all people with the same value of $X$, 
even though their genetics and potential outcomes 
differ, they are equally likely to become
a smoker (receive the treatment).

The strong ignorability condition, however,
is not testable and is often hard to justify 
in practice. 
Sensitivity analysis offers a way 
to bypass this obstacle.  
In 
the lung cancer example,~\citet{cornfield1959smoking}  
for the first time used the method of sensitivity analysis: 
it strongly supports the existence of treatment effects 
by showing
that a genetic factor must be nine times more prevalent in smokers than in
non-smokers in order to explain the observed
effect should there be no actual treatment effects
(and it is high implausible to find such a genetic
factor). 
At a high level, sensitivity analysis 
starts with a sensitivity model 
on how the unknown  
data generating process deviates from the
strong ignorability condition, 
and then estimates the range -- rather than a single value -- of the treatment effects, thus 
offering a quantitative understanding 
of how robust the causal conclusion is 
against potential unmeasured confounding.  
The method in~\citet{cornfield1959smoking} was generalized by 
Rosenbaum's $\Gamma$-selection condition~\citep{rosenbaum1987sensitivity}, a pioneering model on the selection bias that has become a classic.~\citet{tan2006distributional}
later proposed the marginal sensitivity model, 
based on which
a series of work have developed various treatment effects estimation and inference schemes 
~\citep{zhao2017sensitivity,kallus2019interval,lee2020causal,dorn2021sharp,jin2021sensitivity,nie2021covariate,dorn2021doubly}. 
The marginal sensitivity model centers around 
the key quantity 
\#\label{eq:def_or}
\ORR(x,u)=\frac{\PP(T = 1 \given X = x) / \PP(T = 0 \given X = x)}{\PP(T = 1 \given X = x, U = u)/
	\PP (T = 0 \given X = x, U = u)},
\#
the odds ratio of receiving treatment conditional 
only on observed covariates versus conditional 
on both unmeasured confounders and observed covariates. 
Intuitively, $\ORR(x,u)$ quantifies 
the impact of unmeasured confounding 
on the treatment probability. 
\citet{tan2006distributional} 
assumes uniformly bounded odds ratio: 
\begin{equation}\label{eq:Tan2006}
 \frac{1}{\Gamma} \le \ORR(x,u)
\le \Gamma  
\end{equation}
for $\PP$-almost all $x \in \calX$ and $u\in \calU$ for some $\Gamma \ge 1$. 
When $\Gamma = 1$, this 
assumption recovers the unconfoundedness assumption, and the larger the $\Gamma$, the more confoundness the model tolerates.
 
Despite being widely used,  the marginal $\Gamma$-selection model~\eqref{eq:Tan2006} can be limited 
in some cases.  
We illustrate this point 
with a simple and natural parametric example, 
which also motivates our new sensitivity model.   

\begin{example}\normalfont~\label{ex:motivation}
Let us consider a simple example  
without covariates. 
We assume the observed probability of treatment is 
$\PP(T=1)=1/2$. 
In this context, the strong ignorability condition means
all units receive treatments with the 
same probability. The researcher would like to 
estimate the range of $\EE[Y(1)\given T=0]$ if the observational 
data is confounded to some extent.   
Based on some background knowledge, 
she is in particular worried about a confounder 
$U\sim \mathcal{N}(0,1)$, where
$T\given U \sim \textsf{Bern}\big(\frac{\exp(\delta U)}{1+\exp(\delta U)}\big)$ 
for some $\delta\in(0,1)$.\footnote{
Here, $Y(1) = g(U,\delta)$ for some 
measurable function $g$, so 
that~\eqref{eq:u_confoundedness} holds. 
One can show that for any $\delta$, 
the distribution of $T$ and $Y(1)$ agrees with 
the observable if $g$ is properly chosen. 
} 
By construction, the odds ratio characterizing 
the selection bias caused by $U$ is
\#\label{ex:logit}
\ORR(U):= \frac{\PP (T=0 \given U )}{\PP (T = 1 \given U )}
\cdot
\frac{\PP(T=1)}{\PP(T=0)}
=e^{-\delta U}.
\#
Since $U$ is unbounded, the above odds ratio cannot be 
uniformly bounded by any constant. 
In this simple stylized example,  
this researcher cannot obtain any
informative range of the treatment
effects from the sensitivity analysis 
under a hypothesized marginal $\Gamma$-selection assumption~\eqref{eq:Tan2006}. 
\hfill 
$\square$
\end{example}

More generally, if a researcher concerns a 
scenario where the  
unmeasured confounding
is drastically severe in a small region of the sample space 
but non-exists in the remaining,  
it would require a very large, if not infinity,
value of $\Gamma$ for~\eqref{eq:Tan2006} 
to be practically meaningful. 
Sensitivity analysis under~\eqref{eq:Tan2006} 
thus provides a wide (thus uninformative) range 
of treatment effects.
However, since the magnitude of selection bias 
is large only in a small region, its overall impact
(on ATE for instance)
should still be small. 
A desirable sensitivity model should 
still produce informative bounds on the 
treatment effects in such situations, 
and more generally, capture the strength of unmeasured confounding 
beyond its maximum magnitude. 

We do mention that 
a few works in the literature~\citep{imbens2003sensitivity,franks2019flexible} postulate   
parametric models for treatment assignment 
that is affected by unobserved  confounders; 
such parametric models include the simple example discussed here as a special case and hence allow for unbounded local confounded effects. However, the key limitation is that the proposed confounding model is highly specialized to the logistic form, whereas the marginal $\Gamma$-selection criterion provides a non-parametric model that is quite general. 

Motivated by the merits of both worlds, 
in this paper, we develop 
a novel sensitivity model that 
describes the ``average''
strength of unmeasured confounding.  
We also develop a framework to conduct sensitivity analysis 
with our new models, which informs the range of treatment effects 
under various overall strength of unmeasured confounding.  
Our contributions are summarized in the following. 
\begin{itemize}
    \item \emph{A new sensitivity model}. We propose the $f$-sensitivity model, 
    a general, non-parametric model that characterizes the overall strength of 
    unmeasured confounding. 
    It is suitable for situations where the confounding 
    may be unbounded yet with a limited overall impact at the population level. 
    \item \emph{A new class of distributional robustness problems}.
    We show that the 
    partial identification bounds on 
    treatment effects under the $f$-sensitivity model
    can be represented by the solution to a class of
    DRO programs that are new to 
    the literature, providing a distributional robustness perspective 
    to sensitivity analysis under unmeasured confounding. 
    \item \emph{A new framework 
    for robust estimation and inference}. 
    We develop a set of 
    tools to estimate the optimal objective 
    of the new DRO problems; 
    the objective can be expressed via the solution to
    a weighted risk minimization problem, 
    with the unknown 
    weights determined by the covariate shift 
    between treatment and control groups.  
    We then propose estimators for the bounds  
    using a new debiasing technique 
    applied to the output of the corresponding 
    ERM problem. 
    We prove that our estimators are doubly-robust 
    to the estimation of nuisance components. 
    Furthermore, they
    enjoy an interesting one-sided validity 
    property that is specific to the partial identification 
    setting:  
    our estimators are still valid yet 
    perhaps conservative bounds when 
    the ERM step is completely off. 
\end{itemize}

\section{The new $f$-sensitivity model}
\label{sec:shift}


\subsection{The $(f,\rho)$-selection condition}

Our new $f$-sensitivity model is specified by
the following $(f,\rho)$-selection condition.

\begin{definition}[The $(f,\rho)$-selection condition]
\label{def:cond}
Suppose $f:\R_+ \mapsto \R$  is a convex function
such that $f(1) = 0$. 
Let $\ORR(x,u)$ be defined in~\eqref{eq:def_or}. 
$\PP$ satisfies the 
$(f, \rho)$-selection condition if for $\PP$-almost all $x$,
\#\label{eq:def_cond}
d_f(\PP) := \max\bigg\{
&\int f\big(\ORR(x,U) 
\big) 
\diff \PP_{U \mid X = x,T=1}, ~ 
\int f\big( \ORR(x,U)^{-1}\big) 
\big) \diff \PP_{U \given X = x,T=0}
\bigg\} \le \rho.
\#

\end{definition}

This new model addresses the unbounded confounding issue in Example~\ref{ex:motivation}: even though the odds ratio is not uniformly bounded, it is 
controlled overall; in this case, our new model 
can be a more reasonable description of the practical 
situation. 
We will discuss shortly about 
more settings where  
our method may be sensible. 
Now, let us first address the concern 
in Example~\ref{ex:motivation} using our framework. 

\addtocounter{example}{-1}
\begin{example}[Continued]\normalfont
We take $f(t) = t\log t$, a convex function with $f(1) = 0$. 
Continuing the computation in Example~\ref{ex:logit}, 
 the first term of $d_f(\PP)$ in Definition~\ref{def:cond} can be computed as 
\#\label{eq:ex_upper}
& \int f\big(\ORR(U) \big)\ud \PP_{U \given T = 1}
= \int -\delta U  \cdot e^{-\delta U}   
\ud \PP_{U \given T = 1}
= -\delta \int  u \cdot e^{-\delta u}   \cdot \frac{2e^{\delta u}}{1+e^{\delta u}} 
\cdot \frac{1}{\sqrt{2\pi}}e^{-u^2/2}\ud u <\infty.
\#
The right-handed side is approximately $0.2$ if we take $\delta = 1$ 
and $0.6$ if we take $\delta=2$. 
Note that $\int f\big(\ORR(U) \big)\ud \PP_{U \given T = 1}$
can be interpreted as the overall deviation of $\ORR(U)$ from $1$ 
in the treated (observed) group, which is bounded, even though $\ORR(U) \rightarrow \infty$ when $U \rightarrow -\infty$.
In this way, one could seamlessly use our framework
to conduct sensitivity analysis; this will inform the 
impact of the overall strength of unmeasured confounding 
on the treatment effects. \hfill $\square$ 
\end{example}

Two remarks on the $(f,\rho)$-selection condition
are in order.

\begin{remark}\normalfont
If $\PP$ satisfies the marginal $\Gamma$-selection 
condition~\eqref{eq:Tan2006}, then it automatically satisfies the $(f,\rho)$-selection condition with
any qualified $f$ and $\rho = \max\{f(1/\Gamma),f(\Gamma)\}$. 
This can easily be checked by noting that~\eqref{eq:Tan2006} implies
$ 
f ( \ORR(x,U)
 ) \le \max \{f(1/\Gamma),f(\Gamma) \} 
$ 
by convexity of $f$, 
and similarly by the symmetry of~\eqref{eq:Tan2006} 
$f ( \ORR(x,U)^{-1}
 ) \le \max \{f(1/\Gamma),f(\Gamma) \} $,
thereby leading to the bound on $d_f(\PP)$. 
It might appear that the $(f,\rho)$-selection condition 
is weaker than the marginal $\Gamma$-selection condition. 
However, we do note the two models give different characterizations, as for a distribution that 
satisfies marginal $\Gamma$-selection condition, 
it might satisfy $(f,\rho)$-selection condition 
for some $\rho$ that is much smaller than 
$\max\{f(1/\Gamma),f(\Gamma)\}$. 
\end{remark}
 
\begin{remark}\normalfont
In the definition of $d_f(\PP)$, we take the maximum 
of two integrals, each from one direction.
This is mainly to keep the condition symmetric 
with regards to the choice of the treated or control groups, 
and align with the convention in the sensitivity models in the literature~\citep{tan2006distributional}. 
That said, as we would see shortly in Section~\ref{subsec:distr_shift}, 
it might be more natural to only work with one of them (i.e. assume one of them be bounded by $\rho$)
when one of the 
counterfactuals is of primal interest. 
\end{remark}

\subsection{Comparison with other sensitivity models}
\label{subsec:interpret}

To better interpret the $(f,\rho)$-selection condition
and illustrate its difference from the (marginal) 
$\Gamma$-selection condition~\eqref{eq:Tan2006}, 
we provide a unified
perspective on the sensitivity models.
We first note a crucial property of $\ORR(X,U)$, 
a key quantity that characterizes the 
impact of unmeasured confounding. 
 
\begin{property}
\label{property:exp}
Let $\ORR(x,u)$ be defined in~\eqref{eq:def_or}. 
Then $\EE\big[\ORR(X,U) \given X,T=1 \big] = 1$ almost surely, where the conditional distribution 
is induced by the joint distribution of $(X,U,T)$. 
Also, $\ORR(X,U)=1$ holds 
$\PP_{X,U\given T=1}$-almost surely under the 
strong ignorability condition 
$T\,\indep\, (Y(1),Y(0))\given X$.
\end{property}
 
At a high level, both the $(f,\rho)$-selection
and the marginal $\Gamma$-selection condition
quantify how faraway the nonnegative mean-one 
random variable $\ORR(x,U)$ is from the constant one.
The marginal $\Gamma$-selection condition~\eqref{eq:Tan2006}
requires the maximum fluctuation of $\ORR(x, U)$ to be bounded within $[1/\Gamma, \Gamma]$ all the time.  
The $(f,\rho)$-selection condition, on the other hand,
characterizes the overall distance of $\ORR(x,U)$ from a constant. 
When we take $f(x) = \frac{1}{2}|x-1|$, the $(f,\rho)$-selection  
condition is similar to bounds on the total variation (TV) distance; 
when $f(x) = (x-1)^2$, the $(f,\rho)$-selection condition
resembles bounds on the $\chi^2$-distance
between $\ORR(x,U)$ and one. 
Different choices of $f$ pose different penalty 
for large values of confounding. 
For example, taking $f(x)=\frac{1}{2}|x-1|$, 
the contribution to the confounding measure 
is proportional to the absolute distance from $1$. 
For $f(x)=(x-1)^2$, 
the contribution to the confounding strength is larger for larger scale of confounding. 

We now illustrate the distinction between the 
$(f,\rho)$-selection condition and the 
marginal $\Gamma$-selection condition 
in two cases.
First,  
in the left panel of
Figure~\ref{fig:or} we see
three possible $\ORR(x,U)$ as functions of $U$,
all of which integrate to $1$ and with $U\sim \textrm{Unif}[0,1]$. 
There, the solid line is a constant function and indicates no 
unmeasured confounding. 
Among the other two, 
intuitively, 
the dotted curve has ``smaller'' confounding 
because most of the time
the odds ratio is quite close to  
$1$; one could imagine that 
in these regions, erroneously making 
the strong ignorability assumption may not 
incur too much bias. 
However, the upper bound on $\ORR(x,U)$ is 
large due to a small proportion of severe confounding 
at the left. 
In this case, although we imagine that 
the impact of $U$ for 
the dotted and dashed curves are drastically different, 
it requires the same $\Gamma$ in~\eqref{eq:Tan2006} 
to characterize them. 
As a result, partial identification bounds 
for treatment effects under the marginal $\Gamma$-selection 
condition may be uninformative;
a better measure 
for the confounding strength may instead be 
the \emph{overall fluctuation} of $\ORR(x,U)$ around $1$. 

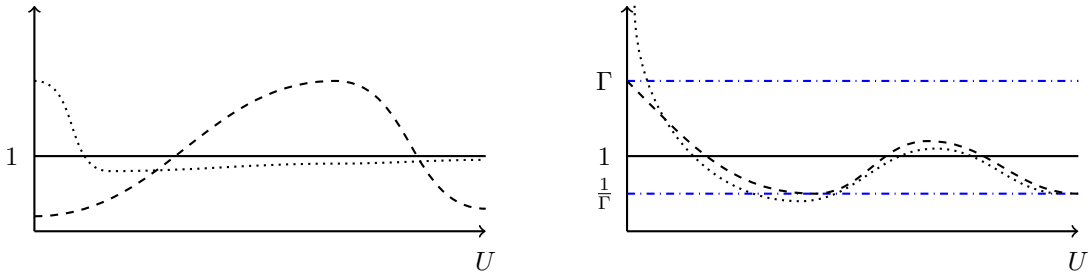
\begin{figure}[ht!]
\centering
\begin{minipage}{0.45\textwidth}
\centering
\begin{tikzpicture}
  \draw[->, thick] (-3, 0) -- (3, 0) node[right] {};
  \draw[->, thick] (-3, 0) -- (-3, 3) node[right] {};
  \draw[thick] (-3, 1) -- (3, 1) {};
  \node[] at (-3.3,1) {$1$};
  \node[] at (3,-0.4) {$U$};
    \coordinate (c1) at (-3,2);
    \coordinate (c2) at (-2,0.8);
    \coordinate (c3) at (1,0.9);
    \coordinate (c4) at (3,.95);
    \coordinate (b1) at (-3,.2);
    \coordinate (b2) at (1,2);
    \coordinate (b3) at (3,.3);
    \draw [dotted,thick] (c1) to[out=0,in=180] (c2) to[out=0,in=180](c3)
    to[out=0,in=180] (c4);
    \draw [dashed,thick] (b1) to[out=0,in=180] (b2) to[out=0,in=180](b3);
\end{tikzpicture} 
\end{minipage}
\hspace{0.2in}
\begin{minipage}{0.4\textwidth}
\centering
\begin{tikzpicture}
  \draw[->, thick] (-3, 0) -- (3, 0) node[right] {};
  \draw[->, thick] (-3, 0) -- (-3, 3) node[right] {};
  \draw[dash dot, blue, thick] (-3, 2) -- (3, 2)  {};
  \draw[dash dot, blue, thick] (-3, .5) -- (3, .5)  {};
  \node[] at (-3.3,1) {$1$};
  \node[] at (-3.3,2) {$\Gamma$};
  \node[] at (-3.3,.5) {$\frac{1}{\Gamma}$};
  \node[] at (3,-0.4) {$U$};
    \coordinate (c1) at (-3,2);
    \coordinate (c2) at (-.5,0.5);
    \coordinate (c3) at (1,1.2);
    \coordinate (c4) at (3,.5); 
    \coordinate (b1) at (-2.9,3);
    \coordinate (b2) at (-0.7,0.4);
    \coordinate (b3) at (1.1,1.1);
    \coordinate (b4) at (2.8,0.5);
    \draw [dashed,thick] (c1) to[out=315,in=180] (c2) to[out=0,in=180](c3)
    to[out=0,in=180] (c4);
    \draw [dotted,thick] (b1) to[out=270,in=180] (b2) to[out=0,in=180](b3)
    to[out =0, in =180](b4);
    \draw[thick] (-3, 1) -- (3, 1) {};
\end{tikzpicture} 
\end{minipage}
\caption{Left: examples of $\ORR(x,U)$ that are quite different but have similar upper bounds. Right: examples of $\ORR(x,U)$ that are similar but have drastically different upper bounds.}
\label{fig:or}
\end{figure}

The right panel of Figure~\ref{fig:or} 
plots another scenario where the uniform bound 
can be inaccurate. Here, the dotted and dashed lines 
describe two confounded cases where $\ORR(x,U)$ almost 
coincide except for the tail
at the left end. The dotted 
thus requires a much larger $\Gamma$
than the dashed one in the marginal sensitivity
model, if not infinity. 
In this case, because the treatment  probabilities 
(decided by the odds ratio) in these cases are 
so close and the tail region only takes 
a tiny part of the population, 
one could imagine the impact of confounding 
on the treatment effect to be close. 
Hence, besides the \emph{scale} 
of confounding, a sensitivity model 
should also takes into account the change
of having certain confounding strength.


Our $(f,\rho)$-selection condition exactly 
aims at resolving the above issues. 
For general choice of $f$, 
our sensitivity measure would give starkly different 
measures for the two cases in the left panel of Figure~\ref{fig:or}, while 
providing similar measures for those in the right panel. 
This is because it 
is an ``average'' measure of
the deviation of $\ORR(x,U)$ from the constant one
for strong ignorability. 
Correspondingly, 
sensitivity analysis from 
our model informs what would happen 
under a specific level of overall confounding strength.  

\subsection{Distributional shifts under the
$f$-sensitivity model}

\label{subsec:distr_shift}

The first observation in this paper relates the observables 
to the counterfactuals. 
We characterize the distributional shifts between the two 
under our $f$-sensitivity model, 
which identifies a new class 
of robust inference problems that, as far as we know, 
are new to the literature.  

We cast causal inference 
as a counterfactual inference problem:  
one needs to impute the missing outcome, i.e., 
the counterfactual, to estimate treatment effects 
at the population level. 
For example, to estimate the ATC:
$\EE[Y(1)\given T=0]-\EE[Y(0)\given T=0]$, 
one needs to impute the first term, the counterfactual 
mean of $Y(1)$ in the control group.  
The distribution
of the unobservable $(X,Y(1))$ in the control group is 
\$
\PP_{X,Y(1)\given T=0} = \underbrace{\PP_{X\given T=0}}_{\rm (a)} 
\times \underbrace{\PP_{Y(1)\given X,T=0}}_{\rm (b)}.
\$
Here part (a) is identifiable from the observations, 
but part (b) is not when there is unmeasured confounding.   
Our first result states that under the $(f,\rho)$-selection condition, 
(b) is bounded from 
its counterpart in the observable
in terms of $f$-divergence.

\begin{definition}[$f$-divergence]
Let $\PP$ and $\QQ$ be two probability distributions over a space $\Omega$ such that $\PP$ is absolutely continuous with respect to $\QQ$. For a convex function $f$ such that $f(1)=0$, the $f$-divergence of $\PP$ from $\QQ$ is defined as $D_f(\PP\,\|\, \QQ) = \EE_{\QQ}\big[f(\frac{\ud \PP}{\ud \QQ})\big]$, where $\frac{\ud \PP}{\ud \QQ}$ is the Radon-Nikodym derivative. 
\end{definition}

Popular examples for $f$-divergence in the literature include 
the Kullback–Leibler (KL) divergence with $f(t) = t\log t$, 
the total variation (TV) distance with $f(t)=|t-1|/2$, 
Pearson $\chi^2$-divergence with $f(t) = (t-1)^2$, and the 
Cressie-Read family of $f$-divergences~\citep{cressie1984multinomial} 
parametrized by $k$, where 
$f_k(t):= \frac{t^k - kt + k-1}{k(k-1)}$. 
Throughout this paper, we work with generic forms of $f$-divergence, 
and provide discussions on concrete examples where proper conditions are satisfied 
for our analysis. 

\begin{lemma}\label{lem:bound_shift_single}  
Under the $(f,\rho)$-selection condition, we have 
\$
D_f\big(  \PP_{Y(1)\given X=x,T=0} \,\|\, \PP_{Y( 1)\given X=x,T= 1}   \big) \leq \rho 
\$
for $\PP_{X\given T=1}$-almost all $x$; 
that is, the $f$-divergence between the conditional distributions 
in the two groups are bounded by $\rho$ for almost all $X$ in group $T=1$. 
\end{lemma}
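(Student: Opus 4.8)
The plan is to compute the Radon--Nikodym derivative $\diff\PP_{Y(1)\given X=x,T=0}/\diff\PP_{Y(1)\given X=x,T=1}$ explicitly, recognize it as a conditional expectation of the odds ratio appearing in the first integral of $d_f(\PP)$, and then close the argument with a single application of Jensen's inequality. Fix $x$ and abbreviate $\PP_0 := \PP_{Y(1)\given X=x,T=0}$ and $\PP_1 := \PP_{Y(1)\given X=x,T=1}$, with densities $p_0,p_1$; by definition $D_f(\PP_0\,\|\,\PP_1)=\EE_{\PP_1}\big[f(\diff\PP_0/\diff\PP_1)\big]$.

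First I would decompose each conditional law of $Y(1)$ over the confounder $U$. The unconfoundedness relation~\eqref{eq:u_confoundedness} gives $\PP_{Y(1)\given X=x,U=u,T=t}=\PP_{Y(1)\given X=x,U=u}$ for both $t\in\{0,1\}$, i.e.\ the conditional law of $Y(1)$ given $(X,U)$ does not depend on $T$; write $p(\cdot\given x,u)$ for its density. Mixing over $U$ then yields $p_t(y\given x)=\int p(y\given x,u)\,\diff\PP_{U\given X=x,T=t}(u)$ for $t\in\{0,1\}$. Next I would apply Bayes' rule to relate the two confounder laws, obtaining
\[
\frac{\diff\PP_{U\given X=x,T=0}}{\diff\PP_{U\given X=x,T=1}}(u)
=\frac{\PP(T=0\given X=x,U=u)}{\PP(T=1\given X=x,U=u)}\cdot\frac{\PP(T=1\given X=x)}{\PP(T=0\given X=x)}=:r(u),
\]
which is precisely the argument of $f$ in the first integral of $d_f(\PP)$.

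Substituting $r(u)$ into the mixture for $p_0$ and dividing by $p_1$ identifies the likelihood ratio as a posterior average:
\[
\frac{\diff\PP_0}{\diff\PP_1}(y)=\frac{p_0(y\given x)}{p_1(y\given x)}
=\frac{\int p(y\given x,u)\,r(u)\,\diff\PP_{U\given X=x,T=1}(u)}{\int p(y\given x,u)\,\diff\PP_{U\given X=x,T=1}(u)}
=\EE\big[r(U)\,\big|\,Y(1)=y,X=x,T=1\big],
\]
where the last equality uses $\diff\PP_{U\given Y(1)=y,X=x,T=1}(u)=p(y\given x,u)\,\diff\PP_{U\given X=x,T=1}(u)/p_1(y\given x)$. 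Finally I would invoke convexity: plugging this into $D_f(\PP_0\,\|\,\PP_1)=\EE_{\PP_1}[f(\diff\PP_0/\diff\PP_1)]$ and applying Jensen's inequality to the inner conditional expectation gives $f(\EE[r(U)\given Y(1),x,T=1])\le\EE[f(r(U))\given Y(1),x,T=1]$; the tower property over $Y(1)\sim\PP_1$ then collapses the bound to $\int f(r(u))\,\diff\PP_{U\given X=x,T=1}(u)$, which is the first integral in $d_f(\PP)$ and hence at most $d_f(\PP)\le\rho$. The symmetric claim for $Y(0)$ follows verbatim from the second integral with the roles of $T=0$ and $T=1$ exchanged.

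The step I expect to require the most care is the change-of-measure bookkeeping rather than any hard estimate: one must justify that $p(y\given x,u)$ and the Bayes factors are genuinely well defined and that the relevant absolute-continuity holds so that the Radon--Nikodym derivatives can be manipulated pointwise. Under the overlap assumption the propensity factors are bounded away from $0$ and $1$, so these ratios are legitimate. The crux that makes the single application of Jensen both clean and tight is recognizing the likelihood ratio $p_0/p_1$ as an \emph{honest} conditional expectation of $r(U)$; once that representation is in hand, the rest is routine.
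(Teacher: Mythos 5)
Your proof is correct and follows essentially the same route as the paper's: both hinge on unconfoundedness~\eqref{eq:u_confoundedness} to reduce the $Y(1)$-shift to the $U$-shift, Bayes' rule to identify $\diff\PP_{U\given X=x,T=0}/\diff\PP_{U\given X=x,T=1}$ with the odds ratio in $d_f(\PP)$, and convexity of $f$. The only cosmetic difference is that the paper invokes the data-processing inequality as a black box, whereas you unfold its proof explicitly by writing $\diff\PP_0/\diff\PP_1$ as a conditional expectation of $r(U)$ and applying Jensen plus the tower property.
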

\begin{proof}[Proof of Lemma~\ref{lem:bound_shift_single}]
Suppose a distribution $\PP$ over $(X,U,T,Y(0),Y(1))$
satisfies the $(f,\rho)$-selection condition. 
By condition~\eqref{eq:u_confoundedness} and the data-processing inequality,
\$
D_f\big(\PP_{Y(1) \given X = x, T=0} 
~\|~ \PP_{Y(1) \given X = x, T=1}\big)
\le & 
D_f\big(\PP_{Y(1),U \given X = x, T=0} 
~\|~ \PP_{Y(1),U \given X = x, T=1}\big)\\
= & 
\E_{Y(1),U \given X = x,T=1}
\bigg[f\Big(\frac{\ud \PP_{Y(1),U \given X = x,T = 0}}
{\ud \PP_{Y(1), U \given X=x, T=1}}\Big)\bigg].
\$
We note that the likelihood ratio can be decomposed as
\$
\frac{\ud \PP_{Y(1),U \given X = x,T = 0}}
{\ud \PP_{Y(1), U \given X = x, T=1}}
&= \frac{\ud \PP_{Y(1) \given U, X = x,T = 0}}
{\ud \PP_{Y(1) \given U, X = x, T=1}}
\cdot
\frac{\ud \PP_{U \given X = x,T = 0}}
{\ud \PP_{U \given X = x, T=1}} \\
&\stackrel{\rm (a)}{=}  
 \frac{\ud \PP_{U \given X = x,T = 0}}
{\ud \PP_{U \given X = x, T=1}} 
=  \frac{\PP(T = 0 \given X = x, U)}
{\PP(T = 1 \given X = x, U )}
\cdot
\frac{\PP(T = 1 \given X = x)}
{\PP(T = 0 \given X = x)},
\$
where step (a) is due to
condition~\eqref{eq:u_confoundedness}.
Combining the above two facts yields 
\$
D_f\big(\PP_{Y(1) \given X, T=0} 
~\|~ \PP_{Y(1) \given X, T=1}\big)
\le & \E_{Y(1),U \given X , T=1}
\bigg[f\Big(\frac{\PP(T = 0 \given X, U)}
{\PP(T = 1 \given X, U )}
\cdot
\frac{\PP(T = 1 \given X)}
{\PP(T = 0 \given X)}\Big)\bigg]\\
= & \E_{U \given X , T=1}
\bigg[f\Big(\frac{\PP(T = 0 \given X, U)}
{\PP(T = 1 \given X, U )}
\cdot
\frac{\PP(T = 1 \given X)}
{\PP(T = 0 \given X)}\Big)\bigg]
\le \rho
\$
almost surely, 
where the last inequality is due to the 
$(f,\rho)$-selection condition.
\end{proof}
 
We complete our characterization of 
the counterfactual distributions $\PP_{X,Y(1)\given T=0}$ 
induced by all super-populations that 
agrees with the observables 
and satisfies our sensitivity models.

\begin{proposition}\label{prop:robust_set}
Let $\PP^{\sup}$ be the true
unknown super-population over $(X,U,T,Y(0),Y(1))$
and let $\cP$ be the set of all 
distributions over $(X,U,T,Y(0),Y(1))$. 
Let $\PP^{\obs}_{X,Y,T}$ be the joint distribution 
of all observable random variables $(X,Y,T)$.  
Define $\cQ_{1,0}$ to be the 
ambiguity set of all 
counterfactual distributions that 
agrees with the observables and satisfies the $(f,\rho)$ selection condition, 
i.e.,
\$
\cQ_{1,0} = \big\{ \PP^{ }_{X,Y(1)\given T=0}  
\colon \PP\in \cP,~ \PP^{ }_{X,Y,T} = \PP^{\obs}_{X,Y,T},
~ \PP^{ }\,\text{satisfies Definition~\ref{def:cond}} \,  \big\}.
\$
Then  
$\PP^{\sup}_{X,Y(1)\given T=0} \in \cQ_{1,0}$, and 
\#\label{eq:large_idset}
\cQ_{1,0} \subset  \Big\{ \QQ \colon  {\textstyle \frac{\ud \QQ_X}{\ud \PP_{X\given T=1}^{\obs}}}(x) = r_{1,0}(x),
  ~&D_f\big( \QQ_{Y\given X=x} ~\big\|~  \PP_{Y\given X=x,T=1}^{\obs} \big) \leq \rho,
~\text{for }\PP^\obs_{X\given T=t}\text{-almost all }x  \Big\} ,
\#
where $r_{1,0}(x) = \frac{(1-e(x))p_1}{e(x)(1-p_1)}$, 
and 
$e(x) = \PP^\obs(T=1\given X=x)$, $p_1=\PP^\obs(T=1)$.  
\end{proposition}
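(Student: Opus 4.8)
The plan is to establish the two assertions separately: first that the true super-population's counterfactual distribution lies in $\cQ_{1,0}$, and then the set inclusion~\eqref{eq:large_idset}. The first assertion is essentially definitional: $\PP^{\sup}$ agrees with the observables by construction and satisfies the $(f,\rho)$-selection condition by assumption, so $\PP^{\sup}_{X,Y(1)\given T=0}$ is one of the distributions over which $\cQ_{1,0}$ ranges, hence a member. The substance of the proposition is the inclusion, which I would prove by taking an arbitrary $\QQ = \PP_{X,Y(1)\given T=0} \in \cQ_{1,0}$ and verifying that it satisfies the two defining constraints of the right-hand set: the density ratio condition on the $X$-marginal and the $f$-divergence bound on the conditional of $Y$ given $X$.

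For the $f$-divergence constraint, I would invoke Lemma~\ref{lem:bound_shift_single} directly. Any $\PP \in \cP$ entering the definition of $\cQ_{1,0}$ satisfies the $(f,\rho)$-selection condition and matches the observables, so the lemma gives $D_f\big(\PP_{Y(1)\given X=x,T=0} \,\|\, \PP_{Y(1)\given X=x,T=1}\big) \le \rho$ for $\PP_{X\given T=1}$-almost all $x$; since $\PP_{Y(1)\given X=x,T=1} = \PP^\obs_{Y\given X=x,T=1}$ by the observable-matching requirement, this is exactly the stated bound on $\QQ_{Y\given X=x}$.

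For the $X$-marginal constraint, the task is to compute $\frac{\ud \QQ_X}{\ud \PP^\obs_{X\given T=1}}$ where $\QQ_X = \PP_{X\given T=0}$. The key point is that this ratio involves only the \emph{observable} joint law of $(X,T)$, not the unidentifiable counterfactual component, so it is pinned down uniquely. Concretely I would apply Bayes' rule to write the conditional densities of $X$ given $T=0$ and given $T=1$ in terms of the marginal density of $X$, the propensity $e(x)=\PP^\obs(T=1\given X=x)$, and the marginal treatment probability $p_1 = \PP^\obs(T=1)$. Taking the ratio, the marginal density of $X$ cancels, leaving
\$
\frac{\ud \PP_{X\given T=0}}{\ud \PP_{X\given T=1}}(x)
= \frac{\PP(T=0\given X=x)/\PP(T=0)}{\PP(T=1\given X=x)/\PP(T=1)}
= \frac{(1-e(x))\,p_1}{e(x)\,(1-p_1)} = r_{1,0}(x),
\$
where all probabilities are observable and hence shared by every $\PP$ matching $\PP^\obs_{X,Y,T}$. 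This confirms the first constraint.

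The main obstacle is conceptual rather than computational: one must be careful that the inclusion is genuinely one-directional. The right-hand set is \emph{larger} than $\cQ_{1,0}$ because it forgets the requirement that $\QQ$ arise as a marginal of some full joint law $\PP$ over $(X,U,T,Y(0),Y(1))$ realizing the confounding structure~\eqref{eq:u_confoundedness}; it retains only the two shadow constraints on the $X$-marginal ratio and the conditional $f$-divergence. I would emphasize that we only claim containment, so it suffices to verify that each member of $\cQ_{1,0}$ obeys both constraints, and no converse construction is needed. The only delicate points are the almost-everywhere qualifications---ensuring the $f$-divergence bound holds for $\PP^\obs_{X\given T=1}$-almost all $x$ and reconciling this with the $\PP^\obs_{X\given T=t}$-almost-all phrasing in~\eqref{eq:large_idset}, which follows since under overlap the two conditional $X$-laws are mutually absolutely continuous (their density ratio $r_{1,0}$ is bounded away from $0$ and $\infty$ by Assumption~1).
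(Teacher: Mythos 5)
Your proposal is correct and follows essentially the same route as the paper: the paper's proof of the ``$\subset$'' direction likewise reads off $\frac{\ud \PP_{X\given T=0}}{\ud \PP_{X\given T=1}} = r_{1,0}$ from the observable-matching requirement and invokes Lemma~\ref{lem:bound_shift_single} (the data-processing argument) for the conditional $f$-divergence bound, treating membership of $\PP^{\sup}_{X,Y(1)\given T=0}$ as definitional. The only difference is that the paper actually proves the stronger Proposition~\ref{prop:robust_set_general} with a converse construction of a super-population, which is not needed for the one-directional inclusion claimed here; your explicit Bayes'-rule computation of $r_{1,0}$ and the remark on mutual absolute continuity under overlap are fine additions.
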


We defer the proof of Proposition~\ref{prop:robust_set}
to Appendix~\ref{app:proof_bound_shift}, where we 
prove a stronger version 
that gives a tight characterization of $\cQ_{1,0}$.
Symmetrically, we can also define $\cQ_{0,1}$ as 
the identification set of $\PP_{X,Y(0) \given T=1}$; the tight 
characterization of $\cQ_{0,1}$ is also given 
in Appendix~\ref{app:proof_bound_shift}. 
From now on, we  
only consider the ambiguity sets in Proposition~\ref{prop:robust_set} 
to emphasize the 
more general 
distributional robustness aspect of this problem. 

Proposition~\ref{prop:robust_set} identifies a new class 
of robust inference problems, where the target distribution 
has an identifiable $X$-shift and the unidentifiable
conditional distribution is restricted in $f$-divergence
ball; 
it is similar to the ambiguity set studied in~\citet{jin2021sensitivity}. 
This model is closely related to, but quite distinct from 
other ambiguity sets involving $f$-divergence  
in the literature~\citep{duchi2021learning,si2020distributional,andrews2020informativeness} 
that only concern marginal (joint) distributions.

\begin{remark}[Relation to other $f$-divergence bounds] \normalfont
Previous works in the literature (see Section~\ref{subsec:related_work} for a summary) often work  under the $f$-divergence ball around 
the marginal distribution $\PP_{X,Y}$, 
characterized by 
\#\label{eq:id_set_lit}
\tilde\cQ = \big\{ \QQ \colon    D_f\big( \QQ_{X,Y} \big\|  \PP_{X,Y}  \big)  \ \leq \rho  \big\}.
\#
While in our formulation, the ambiguity sets take the form 
\#\label{eq:id_set_ours}
\begin{array}{l}
\cQ = \Big\{ \QQ \colon  \frac{\ud \QQ_{X}}{\ud \PP_X}(x) =  r(x),~  D_f\big( \QQ_{Y\given X} \big\|  \PP_{Y\given X}  \big)  \ \leq \rho  \Big\}, 
\end{array}
\#
where $r(x)$ is a known or identifiable function. 
Such distinction is similar to 
what has been observed in~\citet[Remark 3]{jin2021sensitivity}.
Instead of bounding the overall shift
  as in~\eqref{eq:id_set_lit}, the constraint in~\eqref{eq:id_set_ours}
  actually allows freedom in the shift of $X$: the
  sets in~\eqref{eq:id_set_ours} can be small as long as $\rho$ is small. 
  For counterfactual inference under the strong ignorability
  condition, 
  the set~\eqref{eq:id_set_ours} can be a singleton even if $\PP_X$ and
  $\TPP_X$ are drastically different, while \eqref{eq:id_set_lit}
  might require a large $\rho$ to hold.  
  More generally, when there is
  a known (or estimable) large shift in  
  $\PP_X$ but a relatively small shift in  
  $\PP_{Y\given X}$, \eqref{eq:id_set_ours} provides a tighter range of the
  target distributions, and the methods we develop in this paper 
  can be directly applied. 
\end{remark}

\subsection{Related work}
\label{subsec:related_work}

This work falls within a strand of 
sensitivity analysis 
that models the impact of unmeasured confounding 
through bounds on selection bias. 
We summarize a few of them that are not mentioned 
above. 
Remarkably, 
\citet{rosenbaum1983assessing}  
studies the impact of selection bias among matched pairs, 
which is further extended by a series of 
works~\citep{rosenbaum1987sensitivity,gastwirth1998dual,rosenbaum2002attributing,rosenbaum2002observational}
to sensitivity models that uniformly 
bounds the selection bias 
among 
samples with matching
covariates.  
Also related to sensitivity analysis under uniformly bounded 
selection bias is~\citet{yadlowsky2018bounds}, which works under 
an extention of Rosenbaum's sensitivity model 
that is similar to~\citet{tan2006distributional}. 
Besides modelling the selection bias among treatment, observed covariates and 
unmeasured confounders, 
we mention in passing that~\citet{ding2016sensitivity} considers the sensitivity analysis 
of other metrics than treatment effects. 

We develop our framework based on observing
the distributional shift 
between the observations and the counterfactuals. 
This perspective 
echos the ideas of several previous works~\citep{jin2021sensitivity,yadlowsky2018bounds,dorn2021doubly} on sensitivity analysis. 
In particular, we  formulate the estimand 
via an optimization problem under constraints on 
distributional shifts, similar 
to~\citet{yadlowsky2018bounds,dorn2021doubly}. 
However, as we work under different sensitivity models, 
both the form of distributional shifts 
and the techniques for statistical 
estimation and inference are distinct.

$f$-divergence is often used to 
characterize discrepancy between distributions~\citep{renyi1961measures,morimoto1963markov,csiszar1964informationstheoretische,liese2006divergences,rahman2016f}.  
In our work, we use a quantity similar to 
$f$-divergence to measure 
the deviation of the odds ratio  
from $1$, hence characterizing the overall magnitude of 
the 
selection bias caused by unmeasured confounding. 
This in turn leads to 
the bounded $f$-divergence between 
the conditional distribution of the 
counterfactuals and that of the observations 
under our model, 
while the covariate distributional shift is 
identifiable from the data.  
To the best of our knowledge, this type of 
distributional shifts have not been studied before.  

By connecting sensitivity analysis 
to distributionally robust optimization, 
this works is also related to 
a line of work on 
estimation, inference and learning 
under various types of distributional shifts, e.g., $f$-divergence, 
outside 
the task of inferring causal effects. 
Among them,~\citet{christensen2019counterfactual} 
places bounds on the marginal distributions of hidden variables 
in  structural equation models;~\citet{andrews2020informativeness} studies parameter estimation 
when the joint distributions of variables are 
within an $f$-divergence ball;~\citet{duchi2021learning} 
studies  
the empirical risk minimization problem 
when the joint distribution 
of $(X,Y)$ shifts within an $f$-divergence ball; 
\citet{si2020distributional} studies the policy learning 
for contextual bandits under unknown marginal distribution shifts; 
\citet{gupta2021s} studies 
the estimation and inference 
of statistical parameters under distributional shifts 
in certain directions, etc.  
The new class of robust inference problems in our work
is different from the settings of these works.  
We also develop a set of tools for estimation and inference 
under this new type of distributional shifts.

Finally, because 
the covariate shifts need to be estimated, 
we propose a debiasing technique 
to obtain root-$n$ rate of inference 
when the nuisance components are estimated with 
slower convergence rates. 
This is also generally connected 
to a vast body of missing data literature with 
unknown missing mechanisms, although 
in different contexts and with different details. 
In particular, 
we use the cross-fitting~\citep{schick1986asymptotically,zheng2011cross,chernozhukov2018double} technique 
to mitigate the  error in nuisance component estimation; 
bias correction using a different dataset 
under covariate shift  
is also related to and inspired by 
the transductive inference technique in~\cite{jin2021one}.

\section{Sensitivity analysis under the new model}
\label{sec:method}


\subsection{Bounds on the treatment effects}
\label{subsec:opt}

In this part, we study the (population-level) partial identification bounds 
on counterfactual means under our sensitivity model. 
They are provided as 
solutions to convex optimization problems 
that only involve identifiable quantities. 
We consider  $\EE[Y(1)\given T=0]$ for illustration.

\begin{proposition}\label{prop:convex_prob}
Let $\mu_{1,0}^-$ (resp.~$\mu_{1,0}^+)$ 
be the optimal objective function 
of the convex optimization problem 
\#\label{eq:opt_cond}
\mathop{\min (\text{resp.~}\max)}_{L(x,y) \textnormal{~measurable}}~
             &\E\big[Y(1) L(X,Y(1)) \biggiven T=1\big]\\
  \textnormal{s.t.}~&\E[L(x,Y(1)) \given X=x, T=1]= r_{1,0}(x) \\
                    &\E\big[f\big(L(x,Y(1)) /r_{1,0}(x)  \big)\biggiven X=x,T=1\big]\le \rho,\quad \mbox{for almost all }x,
\#
where all the expectations are induced by the observed distribution. 
Then  
$\mu_{1,0}^- \leq \EE[Y(1)\given T=0] \leq \mu_{1,0}^+$ 
under the $(f,\rho)$-selection condition.  
\end{proposition}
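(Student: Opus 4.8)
The plan is to exhibit an explicit change of variables between feasible points $L$ of the program~\eqref{eq:opt_cond} and candidate counterfactual laws $\QQ$ in the ambiguity set on the right-hand side of~\eqref{eq:large_idset}, under which the objective $\E[Y(1)L(X,Y(1))\given T=1]$ coincides with $\EE_{\QQ}[Y]$. Once this correspondence is established, the two-sided bound follows immediately from Proposition~\ref{prop:robust_set}, which guarantees that the true counterfactual law $\PP^{\sup}_{X,Y(1)\given T=0}$ lies in $\cQ_{1,0}$ and hence in that ambiguity set. Throughout I read $L$ as a measurable function of the pair $(X,Y(1))$ (the notation $L(x)$ in~\eqref{eq:opt_cond} denoting the random variable $y\mapsto L(x,y)$ at $X=x$), which is exactly the role played by the likelihood ratio $w(x,y)=\frac{\ud\PP_{X,Y(1)\given T=0}}{\ud\PP_{X,Y(1)\given T=1}}$ introduced in Section~\ref{subsec:opt}.

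Concretely, given any $\QQ$ in the ambiguity set, I would set
\[
L(x,y)\;=\;r_{1,0}(x)\,\frac{\ud\QQ_{Y\given X=x}}{\ud\PP^{\obs}_{Y\given X=x,T=1}}(y),
\]
which is well defined because the finite-$f$-divergence constraint forces $\QQ_{Y\given X=x}\ll\PP^{\obs}_{Y\given X=x,T=1}$. The two constraints of~\eqref{eq:opt_cond} then translate term by term: integrating $L(x,\cdot)$ against $\PP^{\obs}_{Y\given X=x,T=1}$ returns $r_{1,0}(x)$ since a Radon--Nikodym derivative integrates to one, giving the equality constraint; and $\E[f(L(x,Y(1))/r_{1,0}(x))\given X=x,T=1]$ equals by definition $D_f(\QQ_{Y\given X=x}\,\|\,\PP^{\obs}_{Y\given X=x,T=1})\le\rho$ by membership in~\eqref{eq:large_idset}. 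For the objective, the disintegration identity $L(x,y)=\frac{\ud\QQ_{X,Y}}{\ud\PP^{\obs}_{X,Y(1)\given T=1}}(x,y)$, obtained by combining the displayed factorization with $r_{1,0}=\ud\QQ_X/\ud\PP^{\obs}_{X\given T=1}$, shows that $L$ is precisely the joint likelihood ratio, so a change of measure yields $\E[Y(1)L(X,Y(1))\given T=1]=\EE_{\QQ}[Y]$.

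With this dictionary in hand I would finish as follows. Apply the construction to $\QQ^{\star}:=\PP^{\sup}_{X,Y(1)\given T=0}$, which lies in $\cQ_{1,0}$ and therefore in the ambiguity set by Proposition~\ref{prop:robust_set}; the associated $L^{\star}$ is feasible for~\eqref{eq:opt_cond} and attains objective value $\EE_{\QQ^{\star}}[Y]=\EE[Y(1)\given T=0]$. Since $\mu_{1,0}^-$ and $\mu_{1,0}^+$ are the infimum and supremum of the objective over all feasible $L$, and $L^{\star}$ is one such feasible point, we conclude $\mu_{1,0}^-\le\EE[Y(1)\given T=0]\le\mu_{1,0}^+$. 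Only this one direction---that the true law generates a feasible $L^{\star}$---is needed for validity; no reverse inclusion between the sets is required. I would also record in passing that~\eqref{eq:opt_cond} is genuinely a convex program, since the objective and the equality constraint are linear in $L$, while $L\mapsto\E[f(L/r_{1,0})\given X=x,T=1]$ is convex because $f$ is convex and expectation preserves convexity.

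The step I expect to require the most care is the measure-theoretic bookkeeping behind the disintegration---justifying the factorization of the joint Radon--Nikodym derivative into the marginal part $r_{1,0}$ and the conditional part, and the attendant use of regular conditional distributions. This is standard given the absolute continuity supplied by finiteness of $D_f$ together with the regularity already assumed on $\mathcal{X}$, but it is the one place where one must check that all conditional objects are well defined for $\PP^{\obs}_{X\given T=1}$-almost all $x$ and that Fubini--Tonelli applies when passing between the per-$x$ conditional constraints and the joint objective.
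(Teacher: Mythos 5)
Your proposal is correct and follows essentially the same route the paper intends: Proposition~\ref{prop:robust_set} places the true counterfactual law in the ambiguity set, its likelihood ratio $w(x,y)=r_{1,0}(x)\,\frac{\ud \PP_{Y(1)\given X=x,T=0}}{\ud \PP_{Y(1)\given X=x,T=1}}(y)$ is then a feasible point of~\eqref{eq:opt_cond} whose objective value equals $\EE[Y(1)\given T=0]$, and the bounds follow since only feasibility of this one point (not any reverse inclusion) is needed. The dictionary between feasible $L$ and laws $\QQ$, and the observation that validity requires only the forward direction, match the paper's own (largely implicit) argument.
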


As we have discussed, the counterfactual means 
are the building blocks for treatment effects. 
Proposition~\ref{prop:convex_prob} immediately 
implies bounds on the ATC:
denote the observable group-wise means as $\mu^\obs_t := \EE[Y(t)\given T=t]$ 
for $t\in\{0,1\}$, then 
under the $(f,\rho)$-selection condition, the ATC is bounded as 
\# 
\mu_{1,0}^- - \mu^\obs_0 \leq \EE[Y(1)-Y(0)\given T=0] \leq \mu_{1,0}^+ - \mu^\obs_0.\label{eq:bound_atc}
\#
Switching the role of $1$ and $0$ in Proposition~\ref{prop:convex_prob},
one can obtain bounds on $\EE[Y(0) \given T=1]$; let $\mu^+_{0,1}$
and $\mu^-_{0,1}$ denote the upper and lower bound on 
$\EE[Y(0) \given T=1]$, respectively,
we then 
get bounds on the ATT: 
\$
\mu^\obs_1 - \mu_{0,1}^+ \leq \EE[Y(1)-Y(0)\given T=1] \leq \mu^\obs_1 - \mu_{0,1}^-.
\$
By the decomposition of ATE (average treatment effects), we 
also have the representation of lower and upper bounds for 
$\EE[Y(1)-Y(0)]$. 
Under the $(f,\rho)$-selection condition, we have 
\#\label{eq:bound_ate}
p_1\big(\mu^\obs_1 - \mu_{0,1}^+\big) 
+ p_0\big( \mu_{1,0}^- - \mu^\obs_0 \big)
\leq \EE[Y(1)-Y(0)] \leq p_1\big(\mu^\obs_1 - \mu_{0,1}^-\big) 
+ p_0\big( \mu_{1,0}^+ - \mu^\obs_0 \big).
\#
Estimation of these bounds thus boil down to 
that of $\mu_{t,1-t}^{\pm}$ 
under our sensitivity models, 
the 
optimal objective value of the convex optimization problems
in Proposition~\ref{prop:convex_prob}.

\begin{remark} \normalfont
As we mentioned before, optimal objectives of 
the problems in Proposition~\ref{prop:convex_prob} 
are not necessarily tight bounds for counterfactual means. 
To align with the literature and keep 
a relatively clean formulation of dual problems, 
we only account for the direction 
$D_f(\PP_{Y(1)\given X=x,T=0}~\|~\PP_{Y(1)\given X,T=1})\leq \rho$ 
when considering $\EE[Y(1)\given T=0]$.  
For completeness, we discuss 
the tight bounds on 
counterfactual means, hence ATT and ATC, 
in Section~\ref{sec:discussion}. 
We also note that 
combining  sharp bounds on ATT and ATC 
does not necessarily lead to sharp bounds on 
ATE, as they might be attained by different super-populations. 
We leave the investigation of sharp bounds on ATEs for future pursuit. 
\end{remark}

\subsection{From the primal to the dual} 
It is hard to directly solve 
the infinite-dimensional optimization problem~\eqref{eq:opt_cond}. 
We address this issue
by translating to its dual form which might be easier to tackle. 
In the following, we 
primarily focus on $\mu_{1,0}^-$, the lower bound on 
$\EE[Y(1)\given T=0]$, and the 
same idea carries over to the upper bounds 
as well as bounds for other quantities.  
Proposition~\ref{prop:dual_cond} 
represents $\mu_{1,0}^-$ via a 
dual formulation, 
whose proof 
is in Appendix~\ref{app:subsec_dual_cond}.
\begin{proposition}\label{prop:dual_cond}
The optimal objective of~\eqref{eq:opt_cond} is given by 
\#\label{eq:dual_cond}
\mu_{1,0}^- 
&= - \inf_{\alpha(X)\geq 0, \eta(X)\in \RR} \EE\bigg[ r_{1,0}(X) \Big\{  \alpha (X)  f^*\Big( \frac{Y(1) + \eta (X) }{- \alpha (X)} \Big) + \eta (X)  + \alpha (X)   \rho \Big\} \bigggiven  T=1\bigg] ,
\#
where $f^*(s)=\sup_{t\geq 0} \{st - f(t)\}$ is the conjugate function of $f$. In particular, 
denoting $\ell(\alpha,\eta,x,y) = \alpha f^*(\frac{y+\eta}{-\alpha}) + \eta + \alpha \rho$ for 
$(\alpha,\eta)\in \RR^+\times \RR$, we have 
$\mu_{1,0}^- =  - \EE\big[ \ell(\alpha^*(X),\eta^*(X),X,Y(1)) \biggiven  T=1\big],
$
where  for $\PP_{X\given T=1}$-almost all $x$, 
\#\label{eq:perx_opt}
\big( \alpha^*(x),\eta^*(x) \big) \in \argmin{\alpha\geq 0, \eta\in \RR} ~\EE\bigg[  \alpha  f^*\Big( \frac{Y(1) + \eta }{- \alpha} \Big) + \eta + \alpha  \rho \bigggiven X=x,T=1\bigg].
\# 
\end{proposition}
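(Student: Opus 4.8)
The plan is to derive~\eqref{eq:dual_cond} by Lagrangian duality applied to the primal program~\eqref{eq:opt_cond}, exploiting the fact that both the objective and the constraints decouple across values of $x$. Writing the decision variable as a measurable function $L=L(X,Y(1))$ (the likelihood ratio $w$ appearing in the derivation of Proposition~\ref{prop:convex_prob}, so that the slicewise constraint $\E[L\given X=x,T=1]=r_{1,0}(x)$ is nontrivial), I would use $\E[Y(1)L\given T=1]=\E_{X\given T=1}\big[\E[Y(1)L\given X,T=1]\big]$ to observe that the constraints hold slice-by-slice in $x$ and never couple different slices. Hence the primal value equals $\mu_{1,0}^-=\E_{X\given T=1}[p^\star(X)]$, where $p^\star(x)$ is the optimal value of the inner program that minimizes $\E[Y(1)L\given X=x,T=1]$ over the nonnegative slice subject to $\E[L\given X=x,T=1]=r_{1,0}(x)$ and $\E[f(L/r_{1,0}(x))\given X=x,T=1]\le\rho$. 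A convenient normalization is the change of variables $W=L/r_{1,0}(x)$: since $r_{1,0}(x)>0$ is constant within a slice, the inner program becomes $r_{1,0}(x)$ times $\min_{W\ge0}\{\E[Y(1)W\given X=x,T=1]:\E[W\given X=x,T=1]=1,\ \E[f(W)\given X=x,T=1]\le\rho\}$, a standard $f$-divergence-constrained expectation minimization.

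For each fixed $x$ I would then introduce a multiplier $\eta(x)\in\RR$ for the normalization constraint and $\alpha(x)\ge0$ for the $f$-divergence constraint, forming the Lagrangian $\E[(Y(1)+\eta)W+\alpha f(W)\given X=x,T=1]-\eta-\alpha\rho$. The infimum over $W\ge0$ can be evaluated pointwise in $y$: for $\alpha>0$, $\inf_{w\ge0}\{(y+\eta)w+\alpha f(w)\}=-\alpha f^*\big(\tfrac{y+\eta}{-\alpha}\big)$ directly from $f^*(s)=\sup_{t\ge0}\{st-f(t)\}$. This gives the dual function $-\E[\alpha f^*(\tfrac{Y(1)+\eta}{-\alpha})\given X=x,T=1]-\eta-\alpha\rho$, so that multiplying back by $r_{1,0}(x)$ yields $p^\star(x)=-\min_{\alpha\ge0,\eta}\E[r_{1,0}(x)\{\alpha f^*(\tfrac{Y(1)+\eta}{-\alpha})+\eta+\alpha\rho\}\given X=x,T=1]$, which is exactly the per-slice objective in~\eqref{eq:perx_opt}.

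To close the argument I would (i) establish strong duality for each inner slice and (ii) reassemble the slices. For (i), the inner program is convex in $W$ and Slater's condition holds because $W\equiv1$ is strictly feasible whenever $\rho>0$ (it gives $\E[f(W)]=f(1)=0<\rho$ while satisfying the normalization), ruling out a duality gap; the boundary case $\alpha\to0^+$ is handled via the perspective/recession convention for $\alpha f^*(\cdot/\alpha)$. For (ii), starting from $\mu_{1,0}^-=\E_{X\given T=1}[p^\star(X)]$ and substituting the dual expression for $p^\star$, I would invoke an interchange-of-minimization-and-integration result (a measurable-selection / normal-integrand argument) to move the outer expectation inside, simultaneously producing~\eqref{eq:dual_cond} and a measurable selection of optimal multipliers $\alpha^\star(\cdot),\eta^\star(\cdot)$ solving~\eqref{eq:perx_opt}.

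The main obstacle is the rigorous justification of zero duality gap together with the interchange between the outer expectation $\E_{X\given T=1}$ and the pointwise minimization: one must verify that $p^\star(x)$ is measurable in $x$, that a measurable near-optimal selection of $(\alpha,\eta)$ exists, and that all integrals stay finite, which is where compactness of $\calX$, the overlap assumption bounding $r_{1,0}$, and integrability of $Y(1)$ enter. By contrast, the pointwise conjugate computation and the Slater verification are routine; the delicate part is the functional-analytic bookkeeping around the interchange and the $\alpha\to0^+$ boundary.
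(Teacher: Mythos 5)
Your proposal is correct and follows essentially the same route as the paper's proof: decompose the program slice-by-slice in $x$, apply Lagrangian duality with Slater's condition (the paper cites Luenberger, Theorem 8.6.1) to each per-$x$ problem, evaluate the inner infimum via the conjugate $f^*$, and reassemble; the only cosmetic difference is that you normalize $W=L/r_{1,0}(x)$ before dualizing, whereas the paper dualizes in $L$ directly and recovers the same form by the change of variables $\alpha\mapsto\alpha\,r_{1,0}(x)$ at the end. Your explicit attention to measurability of the selection $(\alpha^*(\cdot),\eta^*(\cdot))$ and the $\alpha\to 0^+$ boundary is more careful than the paper, which treats these points implicitly.
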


Our task is then 
to estimate the dual formulation~\eqref{eq:dual_cond}, which can be 
viewed as a risk minimization 
problem. 
However, in sharp contrast to 
ERM problems in the literature, 
it involves a typically unknown 
weight $r_{1,0}(x)$ which depends on the propensity score 
$e(x)=\PP(T=1\given X=x)$. 
As the estimation rate of this quantity is often slower than root-$n$, 
directly solving~\eqref{eq:dual_cond} with plug-in weights 
might yield 
inaccurate estimators and prohibit root-$n$ statistical inference. 

To address this issue, we will 
make use of the second observation 
in Proposition~\ref{prop:dual_cond}:  
$(\alpha^*(x),\eta^*(x))$, the optimizer for~\eqref{eq:dual_cond}, 
is also 
the minimizer of per-$x$ conditional risk.  
This property crucially allows us to 
estimate $\alpha^*(\cdot)$ and $\eta^*(\cdot)$ 
without knowledge of $r_{1,0}(\cdot)$. 
Our general idea is to employ  empirical 
risk minimization tools to estimate $\alpha^*(\cdot),\eta^*(\cdot)$, 
and then estimate $\mu_{1,0}^-$ by plugging into~\eqref{eq:dual_cond}. 
Still, the slow estimation rate for 
the weights and the optimizers 
poses additional challenges for statistical inference. 
We will also develop a novel adjustment technique to still 
achieve root-$n$ inference 
when these quantities 
are estimated at a slow rate.

Before introducing our procedures, 
we present a result on the behavior of the optimizer $\alpha^*(x)$: 
it is positive 
as long as $\PP_{Y(1)\given X,T=1}$ does not have a large point mass 
at its essential infimum and the function $f$ satisfy some 
regularity conditions in the limit. 
The proof of Proposition~\ref{prop:positive_alpha} is deferred to Appendix~\ref{app:proof_positive_alpha}.

\begin{proposition}\label{prop:positive_alpha} 
Define $\uly(x) = \sup\big\{t: \PP(Y(1) < t \given X = x,
T=1) = 0 \big\}$ and
$\bar p(x) = \PP\big(Y(1) = \uly(x) \given X = x, T = 1\big)$.
We assume that $\bar{p}(x)f(1/\bar{p}(x)) + (1-\bar p(x)) f(0) > \rho$ 
for $\PP_{X \given T=1}$-almost all $x$. 
Also suppose there exist constants $L$ and $U$
such that  $f(x)^* \ge L$ for $x \in \RR$,
$f^*(x) \le U$ for $x \le 0$,  
$\lim_{ x\rightarrow -\infty}f^*(x)/x = 0$ and $\lim_{ x\rightarrow \infty}f^*(x)/x = \infty$.
Then the 
solution to~\eqref{eq:dual_cond}
satistifies $\alpha^*(x) > 0$ for $\PP_{X\given T=1}$-almost
all $x$.
\end{proposition}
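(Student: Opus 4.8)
The plan is to argue entirely within the per-$x$ dual objective of~\eqref{eq:perx_opt}. Write
\[
h_x(\alpha,\eta) \defn \EE\Big[\alpha f^*\Big(\tfrac{Y(1)+\eta}{-\alpha}\Big) + \eta + \alpha\rho \given X=x, T=1\Big],\qquad \alpha\ge 0,\ \eta\in\RR,
\]
which is jointly convex in $(\alpha,\eta)$ since $(\alpha,u)\mapsto \alpha f^*(u/\alpha)$ is the perspective of the convex function $f^*$. I will show that the face $\{\alpha=0\}$ is strictly suboptimal, so any minimizer must have $\alpha^*(x)>0$; equivalently, the convex function $\alpha\mapsto\inf_\eta h_x(\alpha,\eta)$ has strictly negative right derivative at $0$. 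First I evaluate $h_x$ on the face, interpreting $\alpha f^*(u/\alpha)$ at $\alpha=0$ through the recession function $(f^*)^\infty(u)=\lim_{\alpha\downarrow 0}\alpha f^*(u/\alpha)$. The growth hypotheses $\lim_{s\to+\infty}f^*(s)/s=\infty$ and $\lim_{s\to-\infty}f^*(s)/s=0$ give $(f^*)^\infty(u)=+\infty$ for $u>0$ and $0$ for $u\le 0$, so with $u=-(Y(1)+\eta)$,
\[
h_x(0,\eta)=\EE\big[(f^*)^\infty\big(-(Y(1)+\eta)\big)\given X=x,T=1\big]+\eta = \begin{cases}\eta,&\eta\ge-\uly(x),\\ +\infty,&\eta<-\uly(x),\end{cases}
\]
because $\PP(Y(1)<-\eta\given X=x,T=1)>0$ precisely when $\eta<-\uly(x)$. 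In particular every point of the face satisfies $h_x(0,\eta)\ge-\uly(x)$.

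Next I exhibit an interior point that strictly beats $-\uly(x)$. Guided by the Fenchel relation $(f^*)'\big(f'(1/\barp(x))\big)=1/\barp(x)$, I take the test path $\eta_\alpha \defn -\uly(x)-\alpha f'\big(1/\barp(x)\big)$ and let $\alpha\downarrow 0$. Splitting the expectation over the atom $\{Y(1)=\uly(x)\}$, of probability $\barp(x)$, and its complement, the argument $\tfrac{Y(1)+\eta_\alpha}{-\alpha}$ equals the constant $f'(1/\barp(x))$ on the atom and tends to $-\infty$ off it. Since $f^*$ is bounded on $\RR_-$ (above by $U$, below by $L$) and $f^*(-\infty)=-f(0)$, dominated convergence applies to both pieces, and after using the Fenchel--Young equality $f^*\big(f'(1/\barp(x))\big)=f'(1/\barp(x))/\barp(x)-f(1/\barp(x))$ to collapse the atom's contribution,
\[
\lim_{\alpha\downarrow 0}\frac{h_x(\alpha,\eta_\alpha)+\uly(x)}{\alpha}
= \rho - \Big(\barp(x)\, f\big(1/\barp(x)\big)+\big(1-\barp(x)\big)\,f(0)\Big) < 0,
\]
where the strict inequality is exactly the standing assumption. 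Hence $h_x(\alpha,\eta_\alpha)<-\uly(x)$ for all sufficiently small $\alpha>0$.

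Combining the two steps, the minimum value of $h_x$ is strictly below $-\uly(x)$, whereas every point with $\alpha=0$ has value at least $-\uly(x)$; since a minimizer $(\alpha^*(x),\eta^*(x))$ of~\eqref{eq:perx_opt} exists, it must satisfy $\alpha^*(x)>0$. Carrying this out for $\PP_{X\given T=1}$-almost all $x$ completes the proof. I expect the limit computation in the second step to be the crux: the shift $\eta_\alpha$ must be taken at precisely the first-order rate $-\alpha f'(1/\barp(x))$ so that the atom at the essential infimum contributes its Fenchel-dual value while the off-atom mass contributes $-f(0)$, and the two passages to the limit must be justified using exactly the stated properties of $f^*$ (namely $f^*\ge L$, $f^*\le U$ on $\RR_-$, $f^*(s)/s\to 0$ as $s\to-\infty$, and $f^*(-\infty)=-f(0)$, the last of which also forces $f(0)$ to be finite). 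A minor point is to replace $f'(1/\barp(x))$ by a subgradient of $f$ at $1/\barp(x)$ when $f$ is not differentiable there, and to read the assumption with the convention $\barp(x)f(1/\barp(x))=\lim_{p\downarrow 0}p\,f(1/p)$ in the atom-free case $\barp(x)=0$, where a minimizing sequence concentrating near $\uly(x)$ replaces the explicit test path.
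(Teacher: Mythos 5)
Your argument is correct, and it takes a genuinely different route from the paper's. The paper argues by contradiction through the \emph{primal}: assuming $\alpha^*(x)=0$, it runs the same two-case limit analysis you use to evaluate the $\alpha=0$ face (Fatou plus $f^*(s)/s\to\infty$ when $\eta<-\uly(x)$; dominated convergence plus $f^*(s)/s\to 0$ when $\eta\ge-\uly(x)$), concludes the dual value would be $-\uly(x)$, then invokes strong duality to produce a feasible primal weight $L$ that must concentrate all mass on the atom $\{Y(1)=\uly(x)\}$, whose $f$-divergence cost is exactly $\bar p(x)f(1/\bar p(x))+(1-\bar p(x))f(0)$, contradicting the hypothesis. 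You instead stay entirely in the dual and exhibit the explicit test path $\eta_\alpha=-\uly(x)-\alpha f'(1/\bar p(x))$, whose first-order expansion at $\alpha=0$ has slope $\rho-\bar p(x)f(1/\bar p(x))-(1-\bar p(x))f(0)<0$; your Fenchel--Young computation of that slope is correct, and it makes transparent exactly where the hypothesis on $\rho$ enters. Your approach buys self-containedness: the paper's step ``there exists a feasible $L$ attaining the primal optimum'' presupposes attainment of the primal infimum (and implicitly $\bar p(x)>0$, since the induced measure must be absolutely continuous with respect to $\PP_{Y(1)\given X=x,T=1}$), neither of which you need. The price is the more delicate limit computation and the two edge cases you flag ($f$ nondifferentiable at $1/\bar p(x)$, and $\bar p(x)=0$); your proposed fixes (subgradient, and a minimizing sequence concentrating near $\uly(x)$ with the convention $\bar p f(1/\bar p)=\lim_{p\downarrow 0}pf(1/p)$) are the right ones, though the $\bar p(x)=0$ case would need to be written out to make the argument fully airtight.
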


In particular,  
the conditions on $f$ hold for a large variety of functions; 
concrete examples include KL divergence, where $f(x) = x\log x$ and $f^*(x) = e^{x-1}$, 
as well as $\chi^2$-divergence, where $f(x)=(x-1)^2$ and $f^*(x) = \frac{1}{4}((x+2)_+^2-1)$, etc. 

In the following, we assume throughout that 
the conditions of Proposition~\ref{prop:positive_alpha} hold, 
hence by the compactness of $\cX$, 
there exists some $\epsilon>0$ such that 
$\alpha^*(x)>\epsilon$ for $\PP_{X\given T=1}$-almost
all $x$ . 
An important implication is that, 
as $\alpha^*(x)$ lies in the interior of $[0,\infty)$, 
the gradient of the risk function is typically mean-zero 
at $(\alpha^*(x),\eta^*(x))$; 
this would play an important role in 
achieving root-$n$ statistical inference. 

\subsection{The estimation procedure}
\label{subsec:cond_method}

We start with splitting 
samples in the treated and control groups 
into three equally sized folds, denoted as 
$\cI_1^{(j)},\cI_0^{(j)}$, $j=1,2,3$, respectively. 
For each $j=1,2,3$, 
we use samples in $\cI^{(j+1)}_1$ and $\cI_0^{(j+1)}$ 
to obtain an estimator 
$\hat{r}^{(j)}$ for $r_{1,0}$, 
and  
solve an empirical risk minimization (ERM) problem 
to obtain estimators $\hat\alpha^{(j)}$ and $\hat\eta^{(j)}$ 
for $(\alpha^*,\eta^*)$ without knowledge of $r_{1,0}$; 
this empirical risk minimization step will be discussed shortly after.  
We then define the function 
\$
\hat{H}^{(j)}(x,y) = \hat\alpha^{(j)}(x)  f^*\Big( \frac{y + \hat\eta^{(j)}(x) }{- \hat\alpha^{(j)}(x)} \Big) + \hat\eta^{(j)}(x) + \hat\alpha^{(j)}(x) \rho.
\$ 
Using data in $\cI_1^{(j+2)}$, we run a regression algorithm 
to obtain an estimator $\hat{h}^{(j)}$ for 
$
\bar{h}^{(j)}(x) := \EE\big[\hat{H}^{(j)}(X,Y(1)) \biggiven  X=x,T=1,\cI_1^{(j+1)} \big], 
$
where we view $\hat\alpha^{(j)}$ and $\hat\eta^{(j)}$, hence $\hat{H}^{(j)}$, 
as fixed functions (e.g., by conditioning on $\cI_1^{(j+1)}$ and $\cI_0^{(j)}$. 
Finally, we define the estimator 
\$
\hat\mu_{1,0}^{(j)} = \frac{1}{|\cI_1^{(j)}|} \sum_{i\in \cI_1^{(j)}}
\hat{r}^{(j)}(X_i) \big(\hat{H}^{(j)}(X_i,Y_i) - \hat{h}^{(j)}(X_i)\big) + \frac{1}{|\cI_0^{(j)}|} \sum_{i\in \cI_0^{(j)}} \hat{h}^{(j)}(X_i).
\$
The above procedure is repeated for each $j=1,2,3$, 
and we average the three estimators to obtain
\$
\hat\mu_{1,0}^- =  -\frac{1}{3} \sum_{j=1}^3 \hat\mu_{1,0}^{(j)}.
\$
The whole procedure is summarized in Algorithm~\ref{alg:est}. 
In the algorithm, we refer to $\cI^{(k)}_1$ for some $k >3$ 
as $\cI^{(k \text{ mod } 3)}_1$; the same principle applies to $\cI_0^{(k)}$. 
We let $\cR(\cdot,\cdot)$ to be a generic algorithm 
that uses data in $\cI_1$ and $\cI_0$ to obtain an estimator 
$\cR(\cI_1,\cI_0)$ for $r$. 
We use $\textsf{ERM}(\cdot)$ to denote a generic ERM algorithm 
that uses any data  $\cI $ to output an estimator 
$\textsf{ERM} (\cI )$ for $(\alpha^*(\cdot),\eta^*(\cdot))$. 
We let \textsf{Reg}$(\cdot)$ denote a generic regression algorithm 
that takes data $(X_i,Z_i)_{i\in \cI}$ to 
output an estimator \textsf{Reg}$(Z_i\sim X_i,~i\in \cI)$
for $\EE[Z\given X=\cdot]$.

\begin{algorithm}[htbp]
  \caption{Estimation procedure for $\mu^-_{1,0}$}\label{alg:est}
\begin{algorithmic}[1]
\REQUIRE Treated samples $\cI_1$; control samples $\cI_0$; 
the algorithm $\cR$ for
estimating $r$; the ERM algorithm $\textsf{ERM}(\cdot)$ for estimating 
$(\alpha, \eta)$; the regression algorithm \textsf{Reg}$(\cdot)$ for 
obtaining $\bar h$.
\vspace{0.05in}
\STATE Randomly split $\cI_1$ and $\cI_0$ into three equal-sized
groups: $\cI_1^{(j)}, \cI_0^{(j)}$, $j=1,2,3$. 
\FOR{$j = 1,2,3$}
\STATE Estimate $r_{1,0}$: obtain $\hat{r}^{(j)}(\cdot) \leftarrow \cR\big(\cI_1^{(j+1)}, \cI_0^{(j+1)}\big)$.
\STATE Estimate $(\alpha^*,\eta^*)$: obtain  $(\hat{\alpha}^{(j)}(\cdot),\hat{\eta}^{(j)}(\cdot)) \leftarrow \textsf{ERM}\big(\cI_1^{(j+1)}\big)$
\STATE Conditional regression: $\hat{h}^{(j)}(\cdot) \leftarrow \textsf{Reg}(\hat{H}^{(j)}(X_i,Y_i) \sim X_i,~i \in \cI_1^{(j+2)})$.
\STATE Compute $
\hat\mu_{1,0}^{(j)} \leftarrow \frac{1}{|\cI_1^{(j)}|} \sum_{i\in \cI_1^{(j)}}
\hat{r}^{(j)}(X_i) \big(\hat{H}^{(j)}(X_i,Y_i) - \hat{h}^{(j)}(X_i)\big) + \frac{1}{|\cI_0^{(j)}|} \sum_{i\in \cI_0^{(j)}} \hat{h}^{(j)}(X_i).
$ 
\ENDFOR
\vspace{0.05in}
\ENSURE Estimator $\hat{\mu}^-_{1,0} =  -\frac{1}{3} \sum_{j=1}^3 \hat\mu_{1,0}^{(j)}$.
\end{algorithmic}
\end{algorithm}

In Algorithm~\ref{alg:est}, 
the subroutines $\cR(\cdot,\cdot)$ and \textsf{Reg}$(\cdot)$ 
are standard: to estimate $r_{1,0}$, one could use 
$\cI_1^{(j+1)}\cup\cI_0^{(j+1)}$ to estimate the propensity score 
$e(x)$ with any regression algorithm, and then plug in the definition of $r_{1,0}$. 
Similarly, \textsf{Reg}$(\cdot)$ can be any regression 
algorithm that fits a conditional mean function 
given i.i.d.~data. 
Widely adopted regression methods in the literature 
include localized nonparametric methods like kernel regression \citep{nadaraya1964estimating,watson1964smooth}, local polynomial regression \citep{cleveland1979robust,cleveland1988locally}, smoothing spline \citep{green1993nonparametric} and modern machine learning methods including regression trees \citep{Breiman:decisionTree} and random forests \citep{ho1995random}, to name a few.
The ERM step is relatively unique to our problem, 
and we discuss it in more details with rigorous guarantees as follows. 

\subsection{Solving for $\hat\alpha(\cdot)$ and $\hat\eta(\cdot)$ } 
\label{subsec:cond_sieve}

We take a moment to elaborate on 
the estimation of $\hat\theta^{(j)}:=( \hat\alpha^{(j)},\hat\eta^{(j)})$. 
From Proposition~\ref{prop:dual_cond}, $\theta^* := (\alpha^*, \eta^*)$ 
is also 
the population risk minimizer of $\EE[\ell(\theta,X,Y(1))\given T=1 ]$ (i.e., removing $r_{1,0}(x)$), 
where the loss function 
\$
\ell(\theta,x,y) = \alpha(x)  f^*\Big( \frac{y + \eta (x) }{- \alpha (x)} \Big) + \eta (x)  + \alpha (x)   \rho
\$
is convex in $\theta = (\alpha,\eta)$. 
The empirical risk is correspondingly (recall that we run 
ERM with fold $\cI_1^{(j+1)}$)
\$
\hat\EE_n \big[\ell(\theta,X,Y(1))\big] = \frac{1}{|\cI_1^{(j+1)}|} \sum_{i\in \cI_1^{(j+1)}} \ell(\theta, X_i,Y_i).
\$ 
We can thus consider a function class $\Theta$, and solve for 
the empirical risk minimization (ERM) problem. 
This approach is similar to~\citet{yadlowsky2018bounds}; 
however, 
they express the bounds of conditional expectations 
of counterfactuals themselves as empirical risk minimizers, 
while we use this ERM step as an intermediate step and 
employ distinct downstream techniques. 
To solve this ERM problem, 
we use the method of sieves~\citep{geman1982nonparametric};
we consider an increasing sequence $\Theta_1\subset \Theta_2 \subset \cdots$ 
of spaces of smooth functions, and let 
\$
\hat\theta^{(j)}  = \argmin{\theta\in \Theta_n}~ \hat\EE_n \big[\ell(\theta,X,Y(1))\big].
\$
We consider two examples of seives inspired by~\citet{yadlowsky2018bounds}. 

\begin{example}[Polynomials] \normalfont
    \label{ex:poly}
Let 
Pol$(J)$ be the space of $J$-th order polynomials on $[0,1]$:
\$
\textrm{Pol}(J,\epsilon) = \Big\{ x\mapsto  {\textstyle  \sum_{k=0}^J} a_k x^k   \colon a_k \in \RR  \Big\},
\$
and let 
Pol$(J,\epsilon)$ be the space of $J$-th order polynomials on $[0,1]$ 
truncated at $\epsilon>0$: 
\$
\textrm{Pol}(J,\epsilon) = \Big\{ x\mapsto \max\{ \epsilon, {\textstyle  \sum_{k=0}^J} a_k x^k \big\} \colon a_k \in \RR  \Big\}.
\$
Then we define the sieve $\Theta_n = \Theta_n^\alpha \times \Theta_n^\eta$, 
where $\Theta_n^\alpha = \{ x\mapsto \prod_{k=1}^d f_k(x_k)\colon f_k \in \mathrm{Pol}(J_n,0),k=1,\dots,d\}$ and $\Theta_n^\eta = \{ x\mapsto \prod_{k=1}^d f_k(x_k)\colon f_k \in \mathrm{Pol}(J_n),k=1,\dots,d\}$ for $J_n\to \infty$. 
\end{example}

Compared to~\citet{yadlowsky2018bounds}, our function class 
additionally truncates the functions away from zero for $\alpha(x)$: 
we note that, if 
$\alpha^*(x)$ is always positive (implied by the minimality of the risk function 
and Proposition~\ref{prop:positive_alpha}) and continuous 
(satisfied if $\PP_{Y(1)\given X=x,T=1}$ is smooth in $x$)
and $\cX$ is a compact set, then 
there exists a positive $\epsilon>0$ such that  $\inf_{x\in \cX}\alpha^*(x)\geq \epsilon$. In practice, we can set $\epsilon$ to be small enough, 
or let $\epsilon = \epsilon_n$ decays slowly to zero; 
this does not hurt the capability of function class 
or the convergence rates when $n$ is sufficiently large. 

\begin{example}[Splines] \normalfont
    \label{ex:spl}
Let $0=t_0<\dots<t_{J+1}=1$ be knots that satisfy  
$\frac{\max_{0\leq j\leq J}(t_{j+1}-t_j)}{\min_{0\leq j\leq J (t_{j+1}-t_j)}} \leq c$ for some $c>0$. We define 
the space for $r$-th order splines with $J$ knots as 
\$
\textrm{Spl}(r,J) = \Big\{  x\mapsto {\textstyle \sum_{k=0}^{r-1} a_k x^k + \sum_{j=1}^J b_j (x-t_j)_+^{r-1}}\colon a_k, b_k \in \RR   \Big\}
\$
and the truncated space for $r$-th order splines with $J$ knots as 
\$
\textrm{Spl}(r,J) = \Big\{  x\mapsto \max\big\{ \epsilon, {\textstyle \sum_{k=0}^{r-1} a_k x^k + \sum_{j=1}^J b_j (x-t_j)_+^{r-1}}\big\}  \colon a_k, b_k \in \RR   \Big\}
\$
Then we define the sieve $\Theta_n = \Theta_n^\alpha \times \Theta_n^\eta$, 
where $\Theta_n^\alpha = \{ x\mapsto \prod_{k=1}^d f_k(x_k)\colon f_k \in \mathrm{Spl}(J_n,0),k=1,\dots,d\}$ and $\Theta_n^\eta = \{ x\mapsto \prod_{k=1}^d f_k(x_k)\colon f_k \in \mathrm{Spl}(J_n),k=1,\dots,d\}$ for $J_n\to \infty$. 
\end{example}

We consider the classes of sufficiently smooth functions; 
for $p_1 = \lceil p \rceil -1$ and $p_2 = p-p_1$, we define 
\$
\Lambda_c^p = \Bigg\{ h \in C^{p_1}(\cX)\colon \sup_{\substack{x\in \cX \\ \sum_{l=1}^d \alpha_l <p_1}} |D^\alpha h(x)| + \sup_{\substack{x\notin x'\in \cX \\ \sum_{l=1}^d \beta_l = p_1}}  \frac{|D^\beta h(x) - D^\beta h(x')|}{\|x-x'\|^{p_2}}  \leq c   \Bigg\}
\$
To ensure non-negativeness, we also define the truncated function class 
$\Lambda_c^p(\cX, \epsilon) := \big\{ x\mapsto\max\{f(x),\epsilon \} \colon f\in \Lambda_c^p(\cX)\big\}$, 
obtained by thresholding $\Lambda_c^p(\cX)$ away from zero. 

For notational convenience, we denote the risk function
$\ell(\theta,x,y) = a  f^*\big( \frac{y + b }{- a} \big) + b + a \rho$ 
for $\theta = (a,b)$ as in Proposition~\ref{prop:dual_cond}. 
When there is no confusion, 
we equivalently use $\ell(\theta,x,y) = \ell((\alpha(x),\eta(x)),x,y)$ 
when $\theta = (\alpha,\eta)$ is a function. 
As preparation, 
we impose the following assumptions on 
the true optimizer 
and regularity conditions of the loss function. 

\begin{assumption}\label{assump:sieve}
Supopse $\cX = \prod_{k=1}^d \cX_d$ is the Catesian product of 
compact intervals, and $\theta^*\in \Theta = \Lambda_c^p(\cX,\epsilon) \times \Lambda_c^p(\cX)$ for some $c>0$. 
Suppose $\PP_{X\given T=1}$ has positive density on $\cX$. 
We assume the function $\EE[\ell((a,b),x,Y)\given X=x]$ is $\lambda$-strongly convex at $(a,b) = \theta^*(x)$ 
for all $x\in \cX$. Also, $|\ell(\theta,x,y) - \ell(\theta^* ,x,y)|\leq \bar\ell(x,y)\|\theta(x)-\theta^*(x)\|_2$ for $\|\theta(x) -\theta^*(x)\|_2<\epsilon$ 
for sufficiently small $\epsilon>0$, where $\|\cdot\|_2$ is the 
Euclidean norm, and 
$\sup_{x\in \cX}\EE[\bar\ell(x,Y)^2\given X=x,T=1]< M$ for 
some constant $M>0$. 
Furthermore, there exists a constant $C_1$ such that 
$\EE[\ell(\theta,X,Y(1)) - \ell(\theta^*,X,Y(1))\given T=1] \leq C_1 \|\theta - \theta^*\|_{L_2(\PP_{\cdot\given T=1})}^2$ 
when $\theta\in \Lambda_c^p(\cX)^2$ and 
$\|\theta - \theta^*\|_{L_2(\PP_{\cdot\given T=1})}$ is sufficiently small. 
\end{assumption}

We include a detailed discussion of 
Assumption~\ref{assump:sieve} 
in 
Appendix~\ref{app:subsec_discuss_sieve}, 
where we provide concrete examples and justifications for these conditions.
In Assumption~\ref{assump:sieve}, 
we assume the true optimizer is sufficiently smooth, 
so that function approximator can learn it well. 
It can be satisfied if the conditional distribution 
$\PP_{Y(1)\given X,T}$ is sufficiently ``smooth'' in $x$. 
We require the strong convexity of the 
conditional risk function at its minimizer $\theta^*(x)$; 
it is typically the case if $Y(1)$ is not 
deterministic given $X$.  
The stability condition at $\theta^*(x)$ 
can be satisfied if $Y$ is not heavy-tailed. 
We also assume that the population risk 
is
stable in terms of $L_2(\PP_{\cdot\given T=1})$ norm 
of $\theta$, which  can be satisfied if 
$\EE[\ell(\theta,x,Y)\given X=x,T=1]$ is 
smooth or have Lipschitz derivatives.  

Under the above regularity conditions, 
we obtain convergence rates of 
the empirical risk minimizers $(\hat\alpha^{(j)},\hat\eta^{(j)})$. 
The proof of the following theorem is in Appendix~\ref{app:subsec_sieve}.

\begin{theorem}\label{thm:sieve}
Suppose Assumption~\ref{assump:sieve} holds. 
We set $J_n = (\frac{\log n}{n})^{1/(2p+d)}$ for the sieve estimators 
in Examples~\ref{ex:poly} and~\ref{ex:spl}, 
and suppose $\hat\theta^{(j)}$ satisfies 
$\hat\EE_n \big[\ell(\hat\theta^{(j)},X,Y(1))\big] \leq \inf_{\theta\in \Theta_n} \hat\EE_n \big[\ell(\theta ,X,Y(1))\big] - O_P((\frac{\log n}{n})^{2p/(2p+d)})$. Then 
employing the function classes 
given in Examples~\ref{ex:poly} or~\ref{ex:spl}, 
we have 
$\|\hat\theta^{(j)} - \theta^* \|_{L_2(\PP_{\cdot \given T=1})} = O_P\big((\frac{\log n}{n})^{p/(2p+d)}\big)$ and 
$\|\hat\theta^{(j)} - \theta^* \|_{\infty } = O_P\big((\frac{\log n}{n})^{2p^2/(2p+d)^2}\big)$. 
\end{theorem}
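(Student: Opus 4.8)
The plan is to treat $\hat\theta^{(j)}$ as a sieve M-estimator and to control the excess \emph{population} risk $R(\theta) := \EE\big[\ell(\theta,X,Y(1)) - \ell(\theta^*,X,Y(1))\given T=1\big]$ through a bias--variance decomposition, then convert it to a distance. By Proposition~\ref{prop:dual_cond}, $\theta^*$ minimizes this risk over all measurable $\theta$; moreover, since the conditions of Proposition~\ref{prop:positive_alpha} force $\alpha^*(x)>\epsilon$, the per-$x$ minimizer $\theta^*(x)$ is interior, so the gradient of the conditional risk vanishes there. Combined with the $\lambda$-strong convexity postulated in Assumption~\ref{assump:sieve}, integrating the pointwise quadratic lower bound against $\PP_{X\given T=1}$ yields the curvature inequality $R(\theta)\gtrsim \|\theta-\theta^*\|_{L_2(\PP_{\cdot\given T=1})}^2$ locally, while the final stability clause of Assumption~\ref{assump:sieve} supplies the matching upper bound $R(\theta)\le C_1\|\theta-\theta^*\|_{L_2}^2$. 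The task therefore reduces to bounding $R(\hat\theta^{(j)})$, after which the curvature delivers the $L_2$ rate.

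First I would establish the approximation (bias) term: by standard tensor-product approximation theory for the H\"older class $\Lambda_c^p(\cX)$ on the product domain $\cX=\prod_k\cX_k$, there is a $\theta_n\in\Theta_n$ with $\|\theta_n-\theta^*\|_\infty\lesssim J_n^{-p}$ for both the polynomial sieve of Example~\ref{ex:poly} and the spline sieve of Example~\ref{ex:spl}, so the upper curvature bound gives $R(\theta_n)\lesssim J_n^{-2p}$. Next I would handle the stochastic term by a localized empirical-process argument. The sieve $\Theta_n$ is a linear space of dimension $K\asymp J_n^d$, hence its $L_2$-covering numbers satisfy $\log N(\varepsilon)\lesssim K\log(1/\varepsilon)$; together with the envelope/Lipschitz stability $|\ell(\theta)-\ell(\theta^*)|\le \bar\ell(X,Y)\,\|\theta-\theta^*\|_2$ and $\sup_x\EE[\bar\ell^2\given X=x,T=1]<M$, a maximal inequality bounds the modulus of continuity of $\sqrt n(\hat\EE_n-\EE)$ over the local ball $\{\theta\in\Theta_n:\|\theta-\theta^*\|_{L_2}\le\delta\}$ by $\delta\sqrt{K}$ up to a $\sqrt{\log n}$ factor. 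The truncation at $\epsilon$ in the $\alpha$-sieve, and the interior property $\alpha^*>\epsilon$, are what keep $\ell$ and its envelope bounded away from the singularity of $f^*\big((y+\eta)/(-\alpha)\big)$ at $\alpha=0$.

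I would then combine these via the basic inequality $\hat\EE_n[\ell(\hat\theta^{(j)})]\le \hat\EE_n[\ell(\theta_n)] + O_P\big((\log n/n)^{2p/(2p+d)}\big)$ (the assumed optimization slack) and a peeling/continuity argument (e.g.\ Theorem~3.4.1 of van der Vaart--Wellner): the rate $\delta_n$ solves $\delta_n^2\asymp J_n^{-2p} + \delta_n\sqrt{K\log n/n} + (\log n/n)^{2p/(2p+d)}$. Choosing $J_n$ as prescribed balances the approximation error $J_n^{-p}$ against the stochastic standard deviation $\sqrt{J_n^d\log n/n}$, both of order $(\log n/n)^{p/(2p+d)}$, giving the claimed $L_2$ rate $\delta_n\asymp(\log n/n)^{p/(2p+d)}$. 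For the sup-norm rate I would interpolate between this $L_2$ control and the smoothness of the estimator: writing $g:=\hat\theta^{(j)}-\theta^*$, since $\theta^*\in\Lambda_c^p$ and the sieve approximant can be taken with an $O(1)$ H\"older norm, one has $\|g\|_{C^p}=O_P(1)$, and a Gagliardo--Nirenberg-type interpolation inequality $\|g\|_\infty\lesssim \|g\|_{L_2}^{2p/(2p+d)}\,\|g\|_{C^p}^{d/(2p+d)}$ yields $\|g\|_\infty=O_P\big(\delta_n^{2p/(2p+d)}\big)=O_P\big((\log n/n)^{2p^2/(2p+d)^2}\big)$, the stated sup-norm rate.

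The main obstacle is the localized empirical-process step together with the curvature conversion: one must verify that the modulus of continuity really grows linearly in $\delta$ (order $\delta\sqrt{K}$) \emph{uniformly} over the sieve, which demands the stability condition to survive composition with $f^*$ and the reciprocal $1/\alpha$ with a square-integrable envelope; this is exactly where the interior condition $\alpha^*>\epsilon$ of Proposition~\ref{prop:positive_alpha} and the $\epsilon$-truncation of the $\alpha$-sieve carry the argument, and care is needed so the estimator cannot drift toward $\alpha=0$. A secondary subtlety is pinning down $\|\hat\theta^{(j)}\|_{C^p}=O_P(1)$ for the interpolation step, which may require restricting each sieve to a fixed H\"older ball (without degrading the $J_n^{-p}$ approximation rate) or proving an a posteriori smoothness bound on the minimizer.
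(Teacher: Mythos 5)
Your proposal is correct and follows essentially the same route as the paper: the paper verifies the curvature (strong convexity below, stability above), variance, and envelope conditions of a black-box sieve M-estimation rate lemma adapted from Chen (2007, Theorem 3.2) and Yadlowsky et al., which internally is exactly the localized empirical-process/peeling argument you describe, and then balances the entropy term $\sqrt{J_n^d\log n/n}$ against the approximation error $J_n^{-p}$. The sup-norm rate is likewise obtained in the paper by the same interpolation inequality $\|g\|_\infty\lesssim\|g\|_{L_2}^{2p/(2p+d)}$ (Lemma 2 of Chen and Shen, 1998) that you invoke.
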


The above theorem shows that 
under reasonable smoothness of the optimizer 
and regularity conditions on the loss function, 
the empirical risk minimizer $\hat\alpha^{(j)},\hat\eta^{(j)}$ 
converges to the truth at certain rates. 
Besides the examples and guarantees we provide, 
similar results might be obtained for other function 
classes like wavelets~\citep{daubechies1992ten}, 
and the conditions in Assumption~\ref{assump:sieve} 
might be weakened or modified to account for more generality. 
Such extension is beyond the scope of this work.

\section{Theoretical guarantees}

\label{subsec:theory_cond}

In this section, we provide the theroetical guarantees 
for our procedure in Section~\ref{subsec:cond_method}. 
We first show the consistency of the estimators,  
with the double robustness and one-side validity results. 
We then present inferential guarantees: 
we achieve root-$n$ inference for $\mu_{1,0}^-$ 
under slower-than-parametric convergence 
rates of the nuisance component estimation; 
moreover,  
even when the empirical risk minimization 
is not consistent to the optimum, 
our inference procedure can still be valid. 
Finally, we 
show how to leverage our procedure
to construct bounds for treatment effects.

\subsection{Double consistency and one-side validity}

We first discuss the consistency of our estimator: 
we show that $\hat\mu_{1,0}^-$ from Algorithm~\ref{alg:est} 
is doubly robust to nuisance estimation, which 
is in a similar spirit as many results in causal inference and missing data. 
Even more interestingly,  
our estimator is robust to the ERM step: 
given that  either 
  $\hat{r}^{(j)}$ or $\hat{h}^{(j)}$ is consistent, 
it converges to the true bound if the ERM step is consistent;  
otherwise, our estimator converges to a conservative but still valid lower bound of $\mu_{1,0}^-$. 
We call this ``one-side validity''.  

We impose a mild assumption on the convergence of ERM step; 
note that we do not assume the convergence to the true minimizer $(\alpha^*,\eta^*)$.

\begin{assumption}\label{assump:erm_consist}
For each $j$, the empirical optimizer $(\hat\alpha^{(j)},\hat\eta^{(j)})$ 
converges in sup-norm to some $(\alpha^\diamond, \eta^\diamond)$ 
such that for all $x\in \cX$, 
$|\ell(\theta^\diamond(x),x,y) - \ell((a,b),x,y) |\leq M(x,y) \|\theta^\diamond(x) - (a,b)\|_2$ 
for all $\|\theta^\diamond(x) - (a,b)\|_2 \leq \epsilon$ for some constant $\epsilon>0$, 
and $\EE[M(x,Y(1))^2\given X=x,T=1]\leq M$ for some constant $M>0$. 
Also, $\hat{r}^{(j)}$ are uniformly bounded, 
and $\hat{H}^{(j)}$, $\hat{h}^{(j)}$ have uniformly bounded second moments 
almost surely. 
\end{assumption}

In Assumption~\ref{assump:erm_consist}, we additionally assume a mild regularity condition on the first-order expansion at the limit;
it is satisfied if the loss function $\ell$ is differentiable or locally Lipschitz. 
The second moment condition is also mild and standard. 
The following theorem shows the double robustness as well as one-side validity 
of our estimator, whose proof is in Appendix~\ref{app:subsec_double_consist}. 

\begin{theorem}
\label{thm:double_consist}
Suppose Assumption~\ref{assump:erm_consist} holds for some fixed $\theta^\diamond = (\alpha^\diamond, \eta^\diamond)$. 
Assume either (i) $\|\hat{r}^{(j)} - r_{1,0}\|_{L_2(\PP_{X\given T=1})} = o_P(1)$ or (ii)
$\|\hat{h}^{(j)} - \bar{h}^{(j)}\|_{L_2(\PP_{X\given T=1})}= o_P(1)$. Then the following holds as $n\to \infty$. 
\begin{itemize}
\item 
If $ \theta^\diamond = \theta^*$, i.e., the ERM step is consistent, then $\hat\mu_{1,0}^- = \mu_{1,0}^- + o_P(1)$;
\item 
otherwise, $\hat{\mu}_{1,0}^- = \mu_{1,0}^\diamond + o_P(1)$ for some constant $\mu_{1,0}^\diamond \leq \mu_{1,0}^-$. 
\end{itemize}
\end{theorem}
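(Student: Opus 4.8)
The plan is to reduce the statement to a per-fold consistency claim and a one-line optimality comparison. Fix $j$ and condition on the two ``training'' folds $\cI^{(j+1)}$ and $\cI_1^{(j+2)}$, so that $\hat r^{(j)}$, $\hat\theta^{(j)}=(\hat\alpha^{(j)},\hat\eta^{(j)})$, $\hat H^{(j)}$ and $\hat h^{(j)}$ are all fixed measurable functions, while the evaluation fold $\cI_1^{(j)}\cup\cI_0^{(j)}$ is independent of them. Let $H^\diamond$ and $h^\diamond$ denote the functions obtained by substituting the sup-norm limit $\theta^\diamond$ for $\hat\theta^{(j)}$ in $\hat H^{(j)}$ and in $\bar h^{(j)}$, and set $m^\diamond := \EE[r_{1,0}(X)\,H^\diamond(X,Y(1))\given T=1]$ and $\mu_{1,0}^\diamond := -m^\diamond$. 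I would then show (a) that $\hat\mu^{(j)}_{1,0}\to m^\diamond$ in probability for each $j$, whence $\hat\mu_{1,0}^- = -\tfrac13\sum_j\hat\mu^{(j)}_{1,0}\to\mu_{1,0}^\diamond$; and (b) that $\mu_{1,0}^\diamond\le\mu_{1,0}^-$, with equality when $\theta^\diamond=\theta^*$.

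For (a), the moment bounds in Assumption~\ref{assump:erm_consist} make $\hat\mu^{(j)}_{1,0}$ a sample average of independent terms with bounded conditional variance, so a conditional weak law of large numbers gives $\hat\mu^{(j)}_{1,0}=\tilde\mu^{(j)}+o_P(1)$, where $\tilde\mu^{(j)}:=\EE[\hat r^{(j)}(X)(\hat H^{(j)}(X,Y(1))-\hat h^{(j)}(X))\given T=1]+\EE[\hat h^{(j)}(X)\given T=0]$ is the conditional mean over the evaluation fold. The core step is the doubly robust cancellation, driven by two identities. First, $r_{1,0}$ is exactly the density ratio $\ud\PP_{X\given T=0}/\ud\PP_{X\given T=1}$, so $\EE[g(X)\given T=0]=\EE[r_{1,0}(X)g(X)\given T=1]$ for any integrable $g$. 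Second, since $\hat H^{(j)}$ is fixed given the training folds, $\bar h^{(j)}$ is its genuine conditional mean, so the tower property yields $\EE[\hat r^{(j)}(X)\hat H^{(j)}(X,Y(1))\given T=1]=\EE[\hat r^{(j)}(X)\bar h^{(j)}(X)\given T=1]$. Substituting both into $\tilde\mu^{(j)}$ and regrouping gives
\#
\tilde\mu^{(j)} - m^\diamond
= \EE\big[(\hat r^{(j)}(X)-r_{1,0}(X))(h^\diamond(X)-\hat h^{(j)}(X))\given T=1\big]
+ \EE\big[\hat r^{(j)}(X)(\bar h^{(j)}(X)-h^\diamond(X))\given T=1\big].
\#
The second term is $o_P(1)$: by Assumption~\ref{assump:erm_consist} the sup-norm convergence $\hat\theta^{(j)}\to\theta^\diamond$, together with the local Lipschitz bound $|\hat H^{(j)}-H^\diamond|\le M(X,Y(1))\|\hat\theta^{(j)}-\theta^\diamond\|_2$ and $\EE[M^2\given X,T=1]\le M$, forces $\|\bar h^{(j)}-h^\diamond\|_{L_2(\PP_{X\given T=1})}\to 0$, while $\hat r^{(j)}$ stays uniformly bounded.

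The first term, the product of the nuisance error $\hat r^{(j)}-r_{1,0}$ and the regression error $h^\diamond-\hat h^{(j)}$, is where double robustness enters, and controlling it with only one-sided consistency is the main obstacle. Bounding by Cauchy--Schwarz, its modulus is at most $\|\hat r^{(j)}-r_{1,0}\|_{L_2(\PP_{X\given T=1})}\,\|h^\diamond-\hat h^{(j)}\|_{L_2(\PP_{X\given T=1})}$. Under case (i) the first factor is $o_P(1)$ while the second is $O_P(1)$ by the second-moment bounds on $\hat h^{(j)}$ and the square-integrability of the fixed $h^\diamond$; under case (ii) one writes $h^\diamond-\hat h^{(j)}=(h^\diamond-\bar h^{(j)})+(\bar h^{(j)}-\hat h^{(j)})$, both pieces being $o_P(1)$ in $L_2$, so the second factor vanishes while the uniform bound on $\hat r^{(j)}$ keeps the first $O_P(1)$. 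Either way the product is $o_P(1)$, giving $\tilde\mu^{(j)}\to m^\diamond$ and hence $\hat\mu^{(j)}_{1,0}\to m^\diamond$. For (b) I would invoke the dual representation of Proposition~\ref{prop:dual_cond}: using the density-ratio identity once more, $m^\diamond=\EE[r_{1,0}(X)\,\ell(\theta^\diamond(X),X,Y(1))\given T=1]$ is precisely the dual objective evaluated at $\theta^\diamond$, and since $\theta^*$ minimizes this objective and attains the value $-\mu_{1,0}^-$, one gets $m^\diamond\ge -\mu_{1,0}^-$, i.e. $\mu_{1,0}^\diamond=-m^\diamond\le\mu_{1,0}^-$, with equality exactly when $\theta^\diamond=\theta^*$. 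This yields consistency to $\mu_{1,0}^-$ in the consistent-ERM case and to the conservative valid bound $\mu_{1,0}^\diamond$ otherwise. The delicate point throughout is that it is the product structure, rather than individual consistency of $\hat r^{(j)}$ and $\hat h^{(j)}$, that must be exploited, and that the regression target is $\bar h^{(j)}$ (the conditional mean of the \emph{estimated} $\hat H^{(j)}$) rather than $h^\diamond$, which is what makes the tower-property cancellation exact.
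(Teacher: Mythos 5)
Your proposal is correct and rests on the same ingredients as the paper's own proof: reduction to the per-fold conditional mean (with a deconditioning step the paper formalizes as Lemma~\ref{lem:cond_to_op}), the change-of-measure identity $\EE[g(X)\given T=0]=\EE[r_{1,0}(X)g(X)\given T=1]$, the tower property making $\bar h^{(j)}$ the exact conditional mean of $\hat H^{(j)}$, the local Lipschitz expansion at $\theta^\diamond$ to replace $\hat H^{(j)}$ by $H^\diamond$, and the dual optimality of $\theta^*$ from Proposition~\ref{prop:dual_cond} for the inequality $\mu_{1,0}^\diamond\le\mu_{1,0}^-$. Where you genuinely differ is in the organization: the paper proves hypotheses (i) and (ii) by two separate decompositions of $\hat\mu^{(j)}_{1,0}$ --- in the consistent-$\hat r$ case it cancels the $\hat h^{(j)}$ terms directly against the control-group average and never invokes $\bar h^{(j)}$, while in the consistent-$\hat h$ case it leans on the unbiasedness of $\hat r^{(j)}(\hat H^{(j)}-\bar h^{(j)})$ --- whereas you derive the single product-form identity
\[
\tilde\mu^{(j)} - m^\diamond
= \EE\big[(\hat r^{(j)}(X)-r_{1,0}(X))(h^\diamond(X)-\hat h^{(j)}(X))\given T=1\big]
+ \EE\big[\hat r^{(j)}(X)(\bar h^{(j)}(X)-h^\diamond(X))\given T=1\big],
\]
which exposes the doubly robust structure once and dispatches both cases with a single Cauchy--Schwarz step. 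This buys a shorter, more transparent argument and makes explicit that only the \emph{product} of the two nuisance errors matters, which foreshadows Assumption~\ref{assump:cond_convergence} and Theorem~\ref{thm:cond}; the cost is that you must separately check square-integrability of the fixed limit $h^\diamond$ in case (i), which does follow from the local Lipschitz bound together with the second-moment condition on $\hat H^{(j)}$, so there is no gap. Two small points to tighten: make the passage from conditional to unconditional $o_P(1)$ explicit (the paper's Lemma~\ref{lem:cond_to_op} or an equivalent subsequence argument), and soften ``equality exactly when $\theta^\diamond=\theta^*$'' to ``equality when $\theta^\diamond=\theta^*$,'' since the per-$x$ minimizer need not be unique and the theorem only asserts the one direction.
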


The above double robustness property 
generalizes previous results 
in observational studies 
without unmeasured confounding~\citep{robins1994estimation}. 
Notably, in our partial identification setting, 
$\hat{h}^{(j)}$ only needs to be consistent for 
the conditional expectation for a 
upstream estimator $\hat{H}^{(j)}$ that might not be 
consistent for its target. 
Even more interestingly, 
we allow for the inconsistency of the ERM step and 
still obtain a valid lower bound for $\mu_{1,0}$. 
Similar one-side validity 
has been documented by a recent work of~\citet{dorn2021doubly}, 
where they work under the marginal sensitivity model of~\citet{tan2006distributional} 
and develop such property 
based on an exact characterization of the worst-case scenario. 
However, in our setting, the one-side validity 
is a relatively straightforward consequence of duality. 
It would be interesting to find connections between our results; 
for example, whether their result can also be implied by the duality.

\subsection{Wald-type inference for $\mu_{1,0}^-$}
\label{subsec:cond_infer}

We now turn to inferential guarantees. 
We show that our procedure yields valid Wald-type inference 
under slow convergence rates of nuisance estimations. 
We begin with some regularity conditions 
on the risk function.

\begin{assumption}\label{assump:cond_regularity} 
Let $\theta^* = (\alpha^*,\eta^*)$ be the minimizer in~\eqref{eq:dual_cond}. 
Suppose 
$\EE[\nabla_{a,b}\ell(a,b,x,Y(1))\given X=x,T=1] = \nabla_{a,b}\EE[\ell(a,b,x,Y(1))\given X=x,T=1] = 0$ at $(a,b) = (\alpha^*(x),\eta^*(x))$ 
for $\PP_{X\given T=1}$-almost all $x$. 
Suppose 
$\big|\ell(a ,b,x,y) - \ell(\theta^*,x,y) - \nabla_{a,b} \ell(\alpha^*(x),\eta^*(x),x,y)[\alpha^*(x)-a,\eta^*(x)-b]\big|\leq M(x,y) \|(\alpha^*(x)-a,\eta^*(x)-b)\|_2^2$ for some $(a,b)$ in some neighborhood of $(\alpha^*(x),\eta^*(x))$, where $\EE[M(x,Y(1))\given X=x,T=1]\leq M$ 
for some constant $M>0$ for all $x\in \cX$. 
Furthermore,  
$\|\ell(\theta,X,Y(1)) - \ell(\theta^*,X,Y(1))\|_{L_2(\PP_{\cdot\given T=1})}= O(\|\theta-\theta^*\|_{L_2(\PP_{X\given T=1})})$ for function $\theta$ 
in a small $L_2(\PP_{\cdot\given T=1})$-neighborhood of $\theta^*$.
\end{assumption}

In Assumption~\ref{assump:cond_regularity}, 
we require the risk function to be differentiable 
and admits a Taylor expansion near some optimizer, 
as well as a regularity condition 
on the exchangeability of differentiation 
and conditional expectation. 
These are mild conditions that are 
commonly adopted in the literature~\citep{van2000asymptotic}. 
The risk function is assumed to be stable, so that 
plugging in estimators of $\alpha^*,\eta^*$ won't 
cause large errors, which is also a mild condition 
that can be satisfied under a first-order Taylor expansion condition. 

We assume the following convergence rates, where we assume the ERM step 
is $o_P({n^{-1/4}})$ consistent, and 
the nuisance estimation error  of $\hat{r}^{(j)}$  and $\hat{h}^{(j)}$ 
has a product of order $o_P(n^{-1/2})$. 

\begin{assumption}\label{assump:cond_convergence}
Suppose for each $j$, 
$\|\hat{r}^{(j)} - r_{1,0}\|_{L_2(\PP_{X\given T=1})} \cdot \|\hat{h}^{(j)} - \bar{h}^{(j)}\|_{L_2(\PP_{X\given T=1})} = o_P(n^{-1/2})$, 
and 
$\|(\hat\alpha^{(j)} - \alpha^*,\hat\eta^{(j)}- \eta^* )\|_{L_2(\PP_{X\given T=1})}  
= o_P(n^{-1/4})$ for some optimizer $(\alpha^*(x),\eta^*(x))$ 
of~\eqref{eq:dual_cond} satisfying Assumption~\ref{assump:cond_regularity}. 
\end{assumption}

In Assumption~\ref{assump:cond_convergence}, 
the rate of $\hat{r}^{(j)}$ depends on the 
estimation of $e(x) =\PP(T=1\given X=x)$, 
a standard regression problem.  
The estimation of $\hat{h}^{(j)}$ is also 
a regression problem viewing $\hat{H}^{(j)}$ as fixed. 
Convergence rate guarantees for such conditional mean 
estimation problems are well-established in the 
literature~\citep{stone1982optimal,mallat1999wavelet,pagan1999nonparametric,shen1994convergence,wasserman2006all,simonoff2012smoothing}. 
The estimation of $(\hat\alpha^{(j)},\hat\eta^{(j)})$ 
has been 
discussed in Section~\ref{subsec:cond_method}.

Under the above two assumptions, we show that 
our estimator is asymptotically normal 
and the estimation error of nuisance component is negligible. 
The proof of Theorem~\ref{thm:cond} is deferred to Appendix~\ref{app:subsec_thm_cond}. 

\begin{theorem}\label{thm:cond}
Suppose Assumptions~\ref{assump:cond_regularity} and~\ref{assump:cond_convergence} hold. 
Then $\sqrt{n}(\hat{\mu}_{1,0}^- - \mu_{1,0}^-)\rightsquigarrow N(0,\Var(\phi_{1,-}(X,Y,T)))$, 
where  
\$ 
\phi_{1,-}(X_i,Y_i,T_i) = \frac{T_i}{p_1}  r_{1,0} (X_i ) \big[  {H}(X_i,Y_i(1)) -  {h}(X_i ) \big] + \frac{1-T_i}{p_0}  h(X_i).
\$
Here $p_1 = \PP(T=1)=1-p_0$, 
and we define  
$H(x,y) = \ell(\theta^*,x,y)$, $h(x) = \EE\big[H(X,Y(1))\biggiven X=x,T=1\big]$. All the 
expectations (variances) are induced by the observed distribution. 
Furthermore, define  
\$
\hat\sigma^2 = \frac{1}{\hat p_1} \bigg( \frac{1}{n_1}\sum_{i\in \cI_1} d_{1,i}^2 - \Big(\frac{1}{n_1}\sum_{i\in \cI_1} d_{1,i} \Big)^2 \bigg) + 
\frac{1}{\hat p_0} \bigg( \frac{1}{n_0}\sum_{i\in \cI_0} d_{0,i}^2 - \Big(\frac{1}{n_0}\sum_{i\in \cI_0} d_{0,i} \Big)^2 \bigg)
\$
where $\hat{p}_1 = |\cI_1|/n$, $\hat p_0 = |\cI_0|/n$, 
$
d_{1,i}= \hat{r}^{(j[i])}(X_i)\big(\hat{H}^{(j[i])}(X_i,Y_i) - \hat{h}^{(j[i])}(X_i)),
$
$d_{0,i} =  \hat{h}^{(j[i])}(X_i)$, 
and $j[i]\in\{1,2,3\}$ is the fold that sample $i$ lies in. 
Then $\sqrt{n}(\hat\nu_{1,0}^- - \nu_{1,0}^-)/\hat\sigma \rightsquigarrow N(0,1)$.
\end{theorem}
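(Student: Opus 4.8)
The plan is to establish the asymptotic normality by decomposing the cross-fitted estimator into the oracle empirical average of the influence function $\phi_{1,-}$ plus a remainder that vanishes at rate $o_P(n^{-1/2})$. First I would write $\hat\mu_{1,0}^- - \mu_{1,0}^-$ as an average over the three folds, and within each fold expand the difference $\hat\mu_{1,0}^{(j)}$ against its oracle counterpart using the estimated nuisances $\hat{r}^{(j)}, \hat{H}^{(j)}, \hat{h}^{(j)}$ in place of the truth $r_{1,0}, H, h$. The key algebraic identity to exploit is that $\mu_{1,0}^- = -\EE[H(X,Y(1))\given T=1] = -\EE[\phi_{1,-}(X,Y,T)]$, which follows from Proposition~\ref{prop:dual_cond} and the law of iterated expectations, noting that the adjustment term $\hat{r}^{(j)}(X)\hat{h}^{(j)}(X)$ over the treated fold is cancelled in expectation by the $\hat{h}^{(j)}(X)$ average over the control fold because $\EE[r_{1,0}(X)g(X)\given T=1] = \EE[g(X)\given T=0]$ for any $g$, by the definition of $r_{1,0}$.

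The heart of the argument is controlling the error terms arising from plugging in estimated nuisances. I would organize the remainder into three pieces: (a) a term linear in $\hat{r}^{(j)} - r_{1,0}$ multiplied by $\hat{H}^{(j)} - \hat{h}^{(j)}$; (b) a term linear in $\hat{h}^{(j)} - \bar{h}^{(j)}$; and (c) a term involving the ERM error $\hat\theta^{(j)} - \theta^*$. For the cross term (a), the cross-fitting ensures $\hat{r}^{(j)}$ is independent of the fold $\cI_1^{(j)}$ being averaged over, so conditioning on the auxiliary folds lets me apply Cauchy--Schwarz to bound it by the product $\|\hat{r}^{(j)} - r_{1,0}\|_{L_2} \cdot \|\hat{h}^{(j)} - \bar{h}^{(j)}\|_{L_2}$, which is $o_P(n^{-1/2})$ by Assumption~\ref{assump:cond_convergence}; the double-robustness structure is precisely what makes this product rather than sum appear. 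The ERM error (c) is where the first-order optimality condition in Assumption~\ref{assump:cond_regularity} becomes essential: since $\EE[\nabla_{a,b}\ell(\theta^*,x,Y(1))\given X=x,T=1] = 0$, the contribution of $\hat\theta^{(j)} - \theta^*$ enters only through the quadratic remainder of the Taylor expansion, so the second-order Lipschitz bound with $M(x,y)$ controls it by $\|\hat\theta^{(j)} - \theta^*\|_{L_2}^2 = o_P(n^{-1/2})$, using the $o_P(n^{-1/4})$ rate assumed. This vanishing-first-order phenomenon is what permits a slower-than-parametric rate for the ERM step.

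Once the remainder is shown to be $o_P(n^{-1/2})$, the leading term is $\frac{1}{n}\sum_i \phi_{1,-}(X_i,Y_i,T_i) - \EE[\phi_{1,-}]$, to which I would apply the Lindeberg--L\'evy central limit theorem, yielding $\sqrt{n}(\hat\mu_{1,0}^- - \mu_{1,0}^-) \rightsquigarrow N(0, \Var(\phi_{1,-}))$. For the studentized statement, I would verify that $\hat\sigma^2$ is a consistent estimator of $\Var(\phi_{1,-})$ by recognizing it as a sample analogue that separately estimates the treated and control contributions to the variance; here I would use that $d_{1,i}$ and $d_{0,i}$ are the empirical analogues of the two components of $\phi_{1,-}$ scaled by $p_1$ and $p_0$, and that the consistency of the nuisance estimators plus the uniform moment bounds transfer to consistency of the empirical second moments. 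Slutsky's theorem then gives the final $N(0,1)$ conclusion.

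The main obstacle I anticipate is handling the ERM error term (c) rigorously, because $\hat\theta^{(j)}$ is a function-valued estimate and the risk $\ell$ involves the conjugate $f^*$ with an argument $\frac{y+\eta}{-\alpha}$ that is sensitive to $\alpha$ near zero. The positivity guarantee $\alpha^*(x) > \epsilon$ from Proposition~\ref{prop:positive_alpha}, combined with the sup-norm convergence from Theorem~\ref{thm:sieve}, is what keeps $\hat\alpha^{(j)}$ bounded away from zero with high probability, so that the Taylor expansion and the Lipschitz bound $M(x,y)$ remain valid on the relevant neighborhood; carefully restricting to this high-probability event and ensuring the second-order remainder genuinely integrates to $o_P(n^{-1/2})$ under the stated moment conditions is the delicate step.
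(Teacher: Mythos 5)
Your proposal is correct and follows essentially the same route as the paper's proof: a fold-wise decomposition of $\hat\mu_{1,0}^{(j)}$ against the oracle estimator built from $r_{1,0},H,h$, a Cauchy--Schwarz product bound $\|\hat r^{(j)}-r_{1,0}\|_{L_2}\cdot\|\hat h^{(j)}-\bar h^{(j)}\|_{L_2}=o_P(n^{-1/2})$ for the doubly-robust cross term, the vanishing first-order condition plus the quadratic Taylor remainder $M(x,y)\|\hat\theta^{(j)}-\theta^*\|_2^2$ for the ERM error, the covariate-shift identity $\EE[r_{1,0}(X)g(X)\given T=1]=\EE[g(X)\given T=0]$ to cancel the $\hat h$-adjustment, and CLT plus Slutsky with a sample-analogue variance estimator. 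The only place your bookkeeping is compressed relative to the paper is the cross term: $(\hat r^{(j)}-r_{1,0})(\hat H^{(j)}-\hat h^{(j)})$ must first be split as $(\hat r^{(j)}-r_{1,0})(\hat H^{(j)}-\bar h^{(j)})$ (conditionally mean-zero given the auxiliary folds, hence $o_P(n^{-1/2})$ by consistency of $\hat r^{(j)}$) plus $(\hat r^{(j)}-r_{1,0})(\bar h^{(j)}-\hat h^{(j)})$ (where Cauchy--Schwarz yields the product rate), since a direct Cauchy--Schwarz on the unsplit term only produces $\|\hat r^{(j)}-r_{1,0}\|\cdot O_P(1)$.
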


Similar results can be obtained for $\mu_{1,0}^+$, 
if we simply flip the sign of $Y(1)$ and 
flip back after running the same procedure. 
The above procedure can also be generalized 
to the inference of $\mu_{0,1}^\pm$; 
the simplest way might be just switching the two groups.  
We summarize these results in Appendix~\ref{app:subsec_all_cond_bounds} 
for completeness.

\subsection{Robustness to misspecification of ERM}

Our inferential guarantee in Theorem~\ref{thm:cond} relies 
on consistency of both the nonparametric regression and the ERM steps. 
While these conditions are relatively mild, 
in this part, we take a step further and 
note that our estimator is in particular robust to the ERM step.  

The following theorem shows that 
even though our empirical risk minimizers  
 $\hat\alpha^{(j)}$ and $\hat\eta^{(j)}$ 
converge to something else, 
our procedure still provide valid, albeit more conservative, inference on 
the lower bound of $\EE[Y(1)\given T=0]$. 
The proof of Theorem~\ref{thm:cond_robust} 
is in Appendix~\ref{app:subsec_cond_robust}.

\begin{theorem}\label{thm:cond_robust}
    Suppose Assumptions~\ref{assump:cond_regularity} 
    and~\ref{assump:cond_convergence} with $(\alpha^*,\eta^*)$ 
    replaced by some fixed $\theta^\diamond := (\alpha^\diamond, \eta^\diamond)$, 
    and the first  condition of Assumption~\ref{assump:cond_regularity} 
    is replaced by the local one:  
    $ 
    \EE\big[ r(X)  \nabla_{a,b}\{\ell(\alpha^\diamond(X),\eta^\diamond(X),X,Y(1))[\alpha^\diamond(x)-\alpha(X),\eta^\diamond(X)-\eta(X)]\}\biggiven  T=1\big] = 0$ for any  $(\alpha,\eta)\in \Theta_n$ in a small $\|\cdot\|_{\infty}$-neighborhood of $\theta^\diamond$. 
    We additionally assume $\|\hat\theta - \theta^\diamond\|_\infty = o_P(1)$. 
Then $\sqrt{n}(\hat\mu_{1,0}^- - \mu_{1,0}^\diamond)\rightsquigarrow N(0,\Var(\phi_{1,-}^{\diamond}(X,Y,T)))$, 
where $\mu_{1,0}^\diamond \leq \mu_{1,0}^-$, and 
\$ 
\phi_{1,-}^\diamond(X_i,Y_i,T_i) = \frac{T_i}{p_1}  r_{1,0} (X_i ) \big[  {H}^\diamond(X_i,Y_i(1)) -  {h}^\diamond(X_i ) \big] + \frac{1-T_i}{p_0}  h^\diamond(X_i).
\$
Here we define 
$H^\diamond(x,y) = \ell(\theta^\diamond,x,y)$ and $h^\diamond(x) = \EE\big[H^\diamond(X,Y(1))\biggiven X=x,T=1\big]$. 
Furthermore, we have $\sqrt{n}(\hat\mu_{1,0}^- - \mu_{1,0}^\diamond)/\hat\sigma \rightsquigarrow N(0, 1)$ for the variance estimator $\hat\sigma^2$ defined in Theorem~\ref{thm:cond}.
\end{theorem}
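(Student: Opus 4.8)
The plan is to mirror the argument for Theorem~\ref{thm:cond} (the consistent-ERM case), tracking carefully which properties of the true optimizer $(\alpha^*,\eta^*)$ are actually used and verifying they hold for the limit $\theta^\diamond$ under the weakened hypotheses. First I would establish the target of estimation: since $\hat\theta^{(j)}\to\theta^\diamond$ in sup-norm and $\theta^\diamond$ need not minimize the per-$x$ conditional risk, the procedure estimates $\mu_{1,0}^\diamond := -\EE[\ell(\theta^\diamond,X,Y(1))\mid T=1]$ rather than $\mu_{1,0}^-$. The inequality $\mu_{1,0}^\diamond \le \mu_{1,0}^-$ is the cleanest part: by Proposition~\ref{prop:dual_cond}, $\mu_{1,0}^-=-\inf_{\alpha\ge 0,\eta}\EE[\cdots]$ is the value at the \emph{minimizing} $(\alpha^*,\eta^*)$, so any other feasible pair (in particular $\theta^\diamond$, provided $\alpha^\diamond(x)\ge 0$) yields a weakly larger risk and hence a weakly smaller negated objective. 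This gives one-side validity and reuses the duality from Proposition~\ref{prop:dual_cond} directly.

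Next I would carry out the asymptotic-normality expansion. Write $\hat\mu_{1,0}^{(j)}$ as an empirical average of the influence-function-type term $\hat{r}^{(j)}(X)[\hat H^{(j)}(X,Y)-\hat h^{(j)}(X)]$ on $\cI_1^{(j)}$ plus $\hat h^{(j)}(X)$ on $\cI_0^{(j)}$, and decompose the deviation from $\mu_{1,0}^\diamond$ into (a) the oracle term $\phi_{1,-}^\diamond$ evaluated at the true $(\theta^\diamond, r_{1,0}, h^\diamond)$, which by the CLT drives the $N(0,\Var(\phi_{1,-}^\diamond))$ limit, and (b) remainder terms coming from plugging in $\hat r^{(j)}$, $\hat h^{(j)}$, and $\hat\theta^{(j)}$. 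The key structural feature I would exploit is the \emph{cross-fitting orthogonality} built into the estimator: the adjustment term $\hat h^{(j)}$ is designed so that the product-of-errors $\|\hat r^{(j)}-r_{1,0}\|\cdot\|\hat h^{(j)}-\bar h^{(j)}\|$ appears, which is $o_P(n^{-1/2})$ by Assumption~\ref{assump:cond_convergence} and therefore negligible. The $\hat\theta$-error enters through a first-order term whose expectation is controlled by the \emph{local} orthogonality condition now assumed: the replacement of the first part of Assumption~\ref{assump:cond_regularity} by $\EE[r(X)\nabla_{a,b}\ell(\theta^\diamond(X),X,Y(1))[\theta^\diamond-\theta]\mid T=1]=0$ for $\theta\in\Theta_n$ near $\theta^\diamond$ is exactly what kills the first-order bias from $\hat\theta^{(j)}\to\theta^\diamond$, leaving a second-order $O_P(\|\hat\theta^{(j)}-\theta^\diamond\|_{L_2}^2)=o_P(n^{-1/2})$ remainder via the Taylor-remainder bound $M(x,y)\|\cdot\|_2^2$. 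The variance-estimator consistency then follows because $d_{1,i},d_{0,i}$ are sample analogues of $\phi_{1,-}^\diamond$ and the plug-in errors vanish under the same rate conditions together with $\|\hat\theta-\theta^\diamond\|_\infty=o_P(1)$.

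The main obstacle I anticipate is the orthogonality step for $\hat\theta^{(j)}$. In the consistent case $\alpha^*$ lies in the interior of $[0,\infty)$ (Proposition~\ref{prop:positive_alpha}), so the population gradient vanishes \emph{pointwise} and the global first-order condition in Assumption~\ref{assump:cond_regularity} holds automatically; here $\theta^\diamond$ is an arbitrary sup-norm limit of the sieve minimizers, so pointwise stationarity may fail and I must instead lean on the assumed \emph{integrated} first-order condition over directions $\theta\in\Theta_n$. The delicate point is that $\hat\theta^{(j)}-\theta^\diamond$ is a random direction that need not lie exactly in $\Theta_n$ in the limit, so I would argue that it is well-approximated by elements of $\Theta_n$ (using that $\hat\theta^{(j)}\in\Theta_n$ and $\theta^\diamond$ is its limit) and that the integrated orthogonality, combined with the Lipschitz/Taylor control and independence across folds, suffices to push the cross term to $o_P(n^{-1/2})$. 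A secondary subtlety is ensuring the feasibility $\alpha^\diamond(x)\ge 0$ needed for the bound $\mu_{1,0}^\diamond\le\mu_{1,0}^-$, which should follow from the truncation-at-$\epsilon$ construction of the sieve class $\Theta_n^\alpha$ guaranteeing $\hat\alpha^{(j)}\ge\epsilon>0$ and hence $\alpha^\diamond\ge\epsilon$. Once these are in place, the remaining calculation is the same Slutsky/CLT bookkeeping as in Theorem~\ref{thm:cond}.
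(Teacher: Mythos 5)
Your proposal matches the paper's proof, which is itself just a rerun of the Theorem~\ref{thm:cond} argument with $\theta^\diamond$ in place of $\theta^*$: the assumed local orthogonality over directions in $\Theta_n$ supplies the analogue of \eqref{eq:zero_grad} (and applies directly with $(\alpha,\eta)=\hat\theta^{(j)}$, since the sieve minimizer lies in $\Theta_n$ and converges to $\theta^\diamond$ in sup-norm, so your worry about the direction leaving $\Theta_n$ does not arise), while the one-sided bound $\mu_{1,0}^\diamond \le \mu_{1,0}^-$ follows from the per-$x$ minimality of $\theta^*$ exactly as you say. The only slip is that your displayed definition of $\mu_{1,0}^\diamond$ drops the weight $r_{1,0}(X)$; the correct centering is $-\EE\big[r_{1,0}(X)\,\ell(\theta^\diamond(X),X,Y(1))\given T=1\big]$, which is what the rest of your argument (and the influence function $\phi_{1,-}^\diamond$) actually uses.
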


In theorem~\ref{thm:cond_robust}, 
we only require 
the convergence of $(\hat\alpha^{(j)},\hat\eta^{(j)})$ 
in $L_2(\PP_{\cdot\given T=1})$-norm  
any pair of fixed functions. 
This might happen, for example, if the function class we employ 
does not approximate $(\alpha^*,\eta^*)$ very well, 
but our estimators still converge to a fixed in-class risk minimizer.  
In this case, our 
estimator converges to a conservative lower bound of the counterfactual mean 
and still yields valid inference. 

In parallel to the mean-zero gradient 
property of $\theta^*$, 
we assume a local first-order condition 
for $\theta^\diamond$ 
restricted to $\Theta_n$, which is crucial for the double robustness to the estimation error. 
This condition is satisfied  
as long as $\theta^\diamond$ is the population risk minimizer (with weight $r(X)$) 
among $\Theta_n$. 
To obtain an estimator that converges to $\theta^\diamond$, 
we might slightly change the procedure: 
fit  $\hat{r}^{(j)}(x)$ on one fold and 
and run the ERM with the fitted $\hat{r}^{(j)}(x)$ on a new fold. 
The convergence of the empirical risk minimizer 
can be satisfied if $\Theta_n$ 
is not too complex and $\hat{r}^{(j)}$ is consistent with a slow rate. 

We also note that, as implied by Theorem~\ref{thm:cond_robust}, 
plugging in any fixed function into our procedure without ERM (or equivalently, setting $\Theta_n = \{\theta\}$ for some fixed $\theta$ that satisfy the regularity conditions)
also yields a valid lower bound. However, this is uninteresting 
as it may be way too conservative.

\subsection{Inference for treatment effects}
\label{subsec:cond_ate}

With the above estimator for counterfactual means in place, 
we briefly discuss the construction of 
confidence intervals for treatment effects. 
Let us first start with ATT/ATC. 
Following the preceding example of $\mu_{1,0}^-$, 
in view of~\eqref{eq:bound_ate}, we can construct an 
estimator for the lower bound of ATC, defined as 
\$
\hat\tau_{\textrm{ATC}}^- := \hat\mu_{1,0}^- - \frac{1}{n_0}\sum_{i\in \cI_0}Y_i,
\$
where $\hat\mu_{1,0}^-$ is constructed as in Section~\ref{subsec:cond_method}. 
Theorem~\ref{thm:cond} directly implies the 
following result of double robustness 
and asymptotic normality for $\hat\tau_{\textrm{ATC}}^-$,  
and the proof is omitted for brevity. 

\begin{corollary}
\label{cor:atc_lower}
Under the same conditions of Theorem~\ref{thm:cond}, 
$\sqrt{n}(\hat\tau_{\atc}^- - \tau_{\atc}^-)\rightsquigarrow N(0,\Var(\phi_{\atc}^-(X_i,Y_i,T_i)))$, where $\tau_{\atc}^- = \mu_{1,0}^{-} - \EE[Y(0)\given T=0]$ is a lower bound for ATC under the $(f,\rho)$-selection condition, and 
\$
\phi_{\atc}^-(X_i,Y_i,T_i) = \frac{T_i}{p_1} r_{1,0}(X_i)\big[ H(X_i,Y_i(1)) - h(X_i) \big] + \frac{1-T_i}{p_0} \big( h(X_i) + Y_i(0)\big). 
\$
\end{corollary}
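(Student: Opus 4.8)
The plan is to recognize $\hat\tau_{\atc}^-$ as the difference of two asymptotically linear estimators and simply to add their influence functions. Writing $\mu_0^{\obs} = \EE[Y(0)\given T=0]$, so that $\tau_{\atc}^- = \mu_{1,0}^- - \mu_0^{\obs}$, I would begin from the decomposition
\[
\hat\tau_{\atc}^- - \tau_{\atc}^- = \big(\hat\mu_{1,0}^- - \mu_{1,0}^-\big) - \Big(\frac{1}{n_0}\sum_{i\in\cI_0} Y_i - \mu_0^{\obs}\Big).
\]
The first term is controlled directly by Theorem~\ref{thm:cond}: under Assumptions~\ref{assump:cond_regularity} and~\ref{assump:cond_convergence} it admits the asymptotically linear expansion $\sqrt{n}(\hat\mu_{1,0}^- - \mu_{1,0}^-) = n^{-1/2}\sum_{i=1}^n \phi_{1,-}(X_i,Y_i,T_i) + o_P(1)$, and it is precisely this theorem that supplies the double robustness, since the nuisance errors enter only through the product rate in Assumption~\ref{assump:cond_convergence}.

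Next I would handle the control-group average, which is the genuinely new ingredient but an elementary one: it uses no estimated nuisance, and since $Y_i = Y_i(0)$ on $\cI_0$ and $n_0/n \pto p_0$, a standard subpopulation-mean expansion gives $\sqrt{n}\big(\frac{1}{n_0}\sum_{i\in\cI_0}Y_i - \mu_0^{\obs}\big) = n^{-1/2}\sum_{i=1}^n \frac{1-T_i}{p_0}(Y_i(0)-\mu_0^{\obs}) + o_P(1)$, needing only a finite second moment of $Y(0)$ in the control arm, already implied by the moment hypotheses. Because this term is nuisance-free, it preserves the double robustness inherited from Theorem~\ref{thm:cond}.

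Finally I would add the two expansions. Both are written as single averages over the same $n$ units, the cross-fitting folds having been absorbed into $\phi_{1,-}$ by Theorem~\ref{thm:cond}, so the control contributions combine at the level of each unit: the $\frac{1-T_i}{p_0}h(X_i)$ piece of $\phi_{1,-}$ merges with the $\pm\frac{1-T_i}{p_0}Y_i(0)$ term from the control average to yield the control-arm part of the displayed $\phi_{\atc}^-$ (any additive constant that results does not change the limiting variance). The resulting summands are i.i.d.\ with finite variance, so the Lindeberg--L\'evy central limit theorem delivers $\sqrt{n}(\hat\tau_{\atc}^- - \tau_{\atc}^-)\rightsquigarrow N(0,\Var(\phi_{\atc}^-))$.

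The one place warranting care, and the main (if modest) obstacle, is that the control samples feed \emph{both} terms: they appear in $\hat\mu_{1,0}^-$ through its $\frac{1}{|\cI_0^{(j)}|}\sum_{i\in\cI_0^{(j)}}\hat h^{(j)}$ component and again in the explicit control average. One must therefore confirm that the two asymptotically linear representations are taken with respect to the same empirical process, so that their influence-function contributions add per unit, capturing the covariance between $h(X_i)$ and $Y_i(0)$ in the control arm, rather than being combined as if independent. This is immediate once both pieces are cast as a single $n^{-1/2}\sum_{i=1}^n(\cdot)$ as above, but it is the step where a naive argument could double-count or mis-pair the control-unit terms.
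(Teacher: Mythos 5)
Your proposal is correct and is exactly the argument the paper has in mind (the paper omits the proof, stating that it follows directly from Theorem~\ref{thm:cond}): you write $\hat\tau_{\atc}^-$ as the difference of the asymptotically linear $\hat\mu_{1,0}^-$ and the control-group sample mean, add the influence functions per unit over the same $n$ observations, and apply the CLT. Your flagged subtlety — that the control units contribute to both terms, so the $\frac{1-T_i}{p_0}h(X_i)$ and $\frac{1-T_i}{p_0}Y_i(0)$ pieces must be combined within each summand to capture their covariance rather than treated as independent — is precisely the point that makes the displayed $\phi_{\atc}^-$ the right influence function.
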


Similar to Theorem~\ref{thm:cond}, 
a consistent estimator $\hat\sigma_{\atc,-}^2$ can also 
be constructed for $\Var(\phi_{\atc}^-(X_i,Y_i,T_i))$, 
enabling Wald-type inference. 
Based on the results of Section~\ref{subsec:theory_cond}, 
we can similarly construct other bounds 
for ATT and ATC and combine them to obtain bounds on ATE. 
For example, let 
$\hat\mu_{0,1}^+$ estimate an upper bound on $\EE[Y(0)\given T=1]$ 
with influence function $\phi_{0,1}^+$ 
(see Appendix~\ref{app:subsec_all_cond_bounds} for details).
We may construct  
\$
\hat\tau_{\att}^- := \frac{1}{n_1} \sum_{i\in \cI_1} Y_i - \hat\mu_{0,1}^+, 
\quad \textrm{and} \quad 
\hat\tau_{\ate}^- := \hat{p_1} \cdot \hat\tau_{\att}^- + \hat p_0 \cdot \tau_{\atc}^-.
\$
Then $\sqrt{n}(\hat\tau_{\ate}^- - \tau_{\ate}^-)\rightsquigarrow N(0,\Var(\phi_{\ate}^-(X_i,Y_i,T_i)))$, where $\tau_{\ate}^-$ 
is a lower bound for ATE under the $(f,\rho)$-selection condition, and 
the influence functions are 
$\phi_{\ate}^- = p_1 \phi_{\att}^- + p_0 \phi_{\atc}^-$, and 
$\phi_{\att}^- = T_iY_i/p_1 - \phi_{0,1}^+ $.

\section{Numerical experiments}

We illustrate the performance of our procedure on simulated datasets. 
We focus on the estimation of the counterfactual mean $\EE[Y(1)\given T=0]$ 
given confounded observational data and take $f(t)=t\log t$.

\subsection{Simulation setting}
We fix the sample size at $n=15000$ and the covariate dimension at $p=4$. 
To generate the confounded dataset, 
setting $U=Y(1)$, 
we fix the observed distribution of $\PP_{X,Y(1)\given T=1}$ and $\PP(T=1)$, 
and vary the counterfactual distribution $\PP_{X,Y(1)\given T=0}$, 
so that $\ORR(x,u) := \frac{\PP(T=0\given X=x,U=u)\PP(T=1\given X=x)}{\PP(T=1\given X=x,U=u)\PP(T=0\given X=x)}$ 
satisfies $(f,\rho)$-selection condition for a sequence of $\rho>0$. 
To be specific, we generate the covariates and treatment assignments with 
\$
X\sim \textrm{Unif}[0,1]^p,\quad T\given X \sim \textrm{Bern}(e(X)), 
\$
where we set the observed propensity score as $e(x) = \textrm{logit}(\gamma^\top x)$ 
for $\gamma = (-0.531, 0.126, -0.312, 0.018)^\top$.  
Finally, given $\delta\in\RR$, we generate the potential outcomes via 
\$
Y(1)  =  X^\top \beta_1  - \delta \cdot  (1-T) \sigma(X) + \varepsilon \cdot \sigma(X), \\
Y(0)  =  X^\top \beta_0  - \delta \cdot  (1-T) \sigma(X) + \varepsilon \cdot \sigma(X),
\$
where $ \varepsilon \iid N(0,1)$, and we set 
 $\beta_1 = (0.531, 1.126, -0.312, 0.671)^\top$, 
$\beta_0 = (-0.531, -0.126, -0.312, 0.671)^\top$ and 
$\sigma^2(x) = 1+1.25x_1^2$. 
Put it another way,  
the observations of $Y(1)$ in the treated group follow 
$Y(1)\given X=x,T=1\sim N(x^\top \beta_1, \sigma^2(x))$, 
while 
$Y(1)\given X=x,T=0\sim N(x^\top \beta_1 - \delta \cdot\sigma(x), \sigma^2(x))$.

In this setting, the confounder is entirely driven by $U:= Y(1)$.  
The odds ratio is 
\$
\ORR(x,u) = \exp\Big(  - \frac{\delta (u-x^\top \beta_1) + \delta^2}{2\sigma^2(x)}   \Big),
\$
and we obtain an upper bound for the $f$-divergence as $\rho = \delta^2/2$. 
The same bound can be obtained for the other odds ratio of the control group. 
The observed dataset is thus $\{(X_i,Y_i,T_i)\}_{i=1}^n$, where $Y_i = Y_i(T_i)$. 
Intuitively, $\delta$ drives the direction and magnitude of confounding: 
when $\delta>0$, larger values of $Y(1)$ has larger probability of getting treated even conditional on $X$; 
as a result, the observed $Y(1)$ in the treated group 
is actually shifted to larger values, leading to overestimate of treatment effects 
if confounding is not accounted for. The larger $\delta$ is, 
the more severe the impact of confounding is. 
On the other hand, when $\delta<0$, inference under the strong ignorability assumption 
tends to underestimate the treatment efects. 
In this setting, although we anticipate OR$(X,U)$ to be controlled overall, 
it does not admit a uniform upper bound; we plot several quantiles of 
OR$(X,U)$ in the treated group in Figure~\ref{fig:quantiles}. 

\begin{figure}[H]
\centering 
\includegraphics[width=3in]{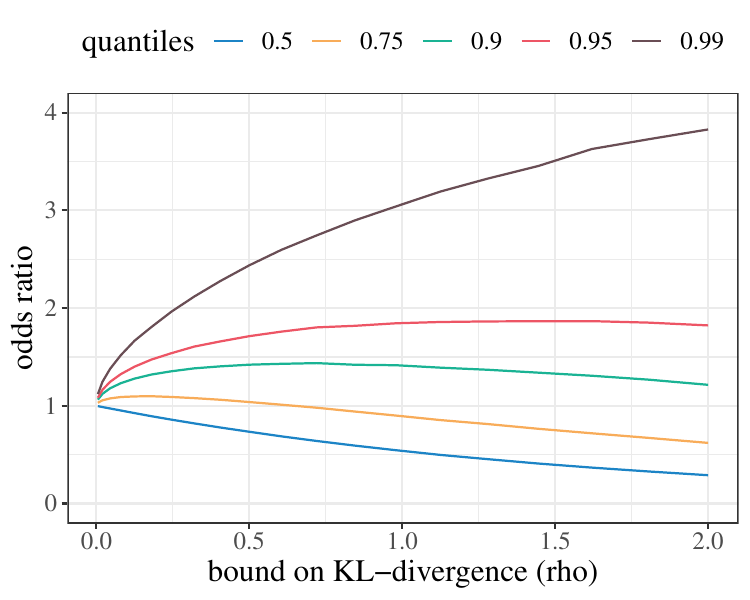}
\caption{Quantiles of OR$(X,U)$ in the treated group for a sequence of $\delta$ and $\rho$ (the $x$-axis).}
\label{fig:quantiles}
\end{figure}

We apply Algorithm~\ref{alg:est} to obtain bounds  and confidence intervals of $\EE[Y(1)\given T=1]$. 
The detailed implementation is as follows: 
the regression algorithm we use for both $\hat{r}$ and $\hat{h}$ 
is Random Forest Regressor from \texttt{scikit-learn} Python library~\citep{scikit-learn}; 
we use cubic spline in Example~\ref{ex:spl} to approximate $\alpha^*,\eta^*$, 
where we threshold at $\epsilon = 0.001$ to guarantee positiveness of $\alpha^*$ 
(yet our estimates turn out to be strictly larger than this threshold). 
We employ the Nelder-Mead optimizer implemented in \texttt{SciPy} Python library~\citep{2020SciPy-NMeth}
to optimize the coefficients in the spline approximation.

\subsection{Sensitivity analysis with one dataset}

We first illustrate the estimators and confidence intervals we obtain 
under a fixed confounded data generating process. 
To be specific, we fix $\delta = 0.5$ (hence $\rho = 0.125$) to generate the data, 
and apply our procedure to the fixed dataset for a series of $\rho \in \{0.05, 0.1,\dots, 0.95, 1.0\}$. 
We obtain $0.975$-lower confidence bound (LCB) for the lower bound of ATC and $0.975$-upper confidence bounds (UCB) for 
the upper bound of ATC (i.e., the bounds in~\eqref{eq:bound_atc}), 
which together form a $0.95$-CI for ATC under a hypothesized confounding level $\rho$. 
The results are plotted in Figure~\ref{fig:fix}. 

Without accounting for confounding, reweighting on the covariates 
tend to overestimate the ATC (indicated by the estimators for small $\rho$). 
The LCB crosses the ground truth at $\hat \rho=0.1$; 
this can be viewed as 
a lower confidence bound for 
the true confounding level $\rho=0.125$ (we elaborate on this in the discussion when the ground 
truth is zero). 
Finally, the LCB hits zero at $\hat\rho_0 = 0.65$; 
we can thus conclude with $0.95$-confidence  that 
ATC is non-negative as long as the true 
confounding level does not exceed $\hat\rho_0$.

\begin{figure}[h]
\centering 
\includegraphics[width=6in]{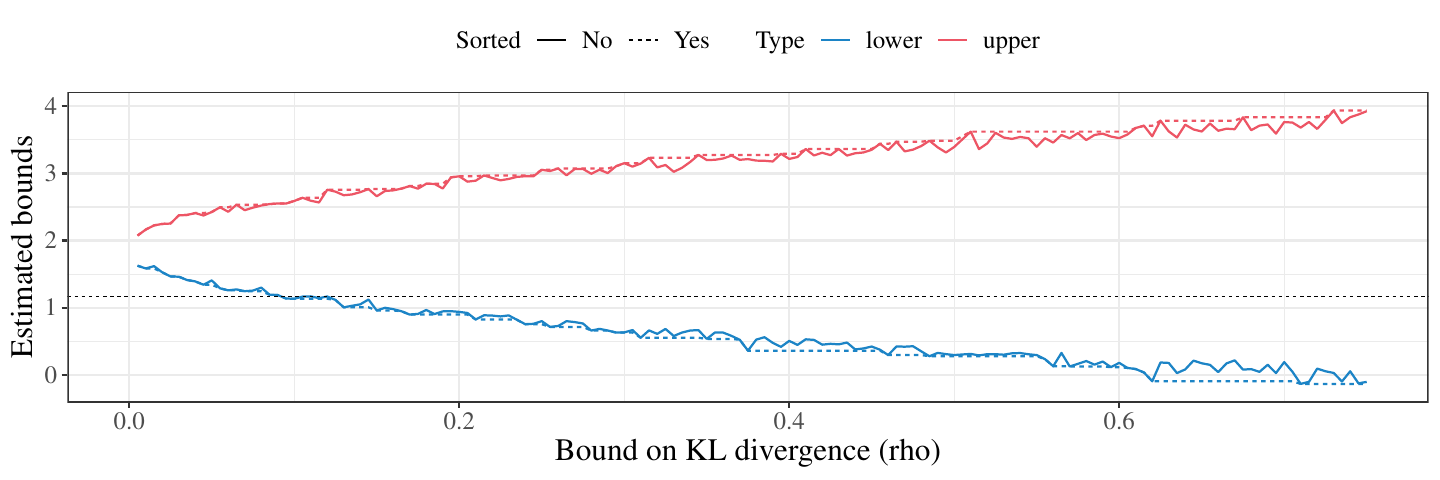}
\caption{The $0.975$-LCB for the lower bound of ATC (blue) and $0.975$-UCB for 
the upper bound of ATC (red), obtained from one run of our procedure on one dataset. 
Solid lines are the original estimators from our procedure, while dashed lines are sorted to ensure they are monotone in $\rho$. The black dashed line is the actual ATC. 
}
\label{fig:fix}
\end{figure}

\subsection{Validity and sharpness}

To show the validity and sharpness of our procedure, 
we first vary $\delta \in \{0.1,0.2,\dots,1.5\}$ in our data-generating process, 
and apply our procedure with the correct level $\rho = \delta^2/2$. 
Feeding the data into Algorithm~\ref{alg:est} yields the estimator $\hat \mu_{1,0}^-$ 
for the lower bound on $\EE[Y(1)\given T=1]$; 
changing the observations to $Y(1)\leftarrow -Y(1)$, 
the negative of the output of Algorithm~\ref{alg:est}, 
denoted as $\hat\mu_{1,0}^+$, is an estimator for the upper bound $\mu_{1,0}^+$. 
Based on the corresponding variance estimators $\hat\sigma_{1,0,\pm}$, 
we construct the confidence interval for $\EE[Y(1)\given T=0]$ as 
$\text{CI}_{\text{mean}} := [\hat\mu_{1,0}^- + z_{0.025} \hat\sigma_{1,0,-}/\sqrt{n}, 
\hat\mu_{1,0}^+ + z_{0.975} \hat\sigma_{1,0,+}/\sqrt{n}]$; 
the confidence interval for $\mu_{1,0}^-$ is constructed as 
$\text{CI}_{\text{lower}} :=[\hat\mu_{1,0}^- + z_{0.025} \hat\sigma_{1,0,-}/\sqrt{n}, 
\hat\mu_{1,0}^- + z_{0.975} \hat\sigma_{1,0,-}/\sqrt{n}]$, 
and similarly 
$\text{CI}_{\text{upper}} :=[\hat\mu_{1,0}^+ + z_{0.025} \hat\sigma_{1,0,+}/\sqrt{n}, 
\hat\mu_{1,0}^+ + z_{0.975} \hat\sigma_{1,0,+}/\sqrt{n}]$ 
 for $\mu_{1,0}^+$. 
 
To obtain the ground truth of $\mu_{1,0}^\pm$ at each $\delta$, 
we evaluate the bounds on $\EE[Y(1)\given X=x,T=0]$ for each $x$  
by optimizing with
a huge amount of samples from $\PP_{Y(1)\given X=x,T=1}$;\footnote{This is feasible because 
in our setting, $\PP_{Y(1)\given X=x,T=1}$ 
is normal distribution, and the 
target bounds are shift-invariant; we only need to 
evaluate the bounds for all values of $\rho$ and a fine grid of $\sigma(x)$.}
we then 
marginalize over $X\given T=0$ to obtain an estimator the ground truth of $\mu_{1,0}^\pm$. 
For each $\delta$, this procedure is repeated and 
averaged over many runs to further reduce the random error.

The estimators for bounds of counterfactuals over $N=500$ runs for each $\rho$ 
are plotted in Figure~\ref{fig:est} (they are evenly spaced on the $x$-axis). 
The simulation results show the sharpness and accuracy of our estimators:  
they are quite close to the ground truth, especially for small values of $\rho$; 
they get a bit conservative and have a larger variance when $\rho$ is as large as $1$. 
Interestingly, there are also a few outliers when $\rho$ is very small, 
and the estimators seem to be the most stable for an medium scale of $\rho$ (around 0.18 to 0.5). 
The actual value of $\EE[Y(1)\given T=0]$ in our design, represented by the red triangles, 
are very close to the lower solid line, the ground truth of $\mu_{1,0}^-$; 
this means our simulation design is close to the worst case. 

\begin{figure}[h]
\centering 
\includegraphics[width=5in]{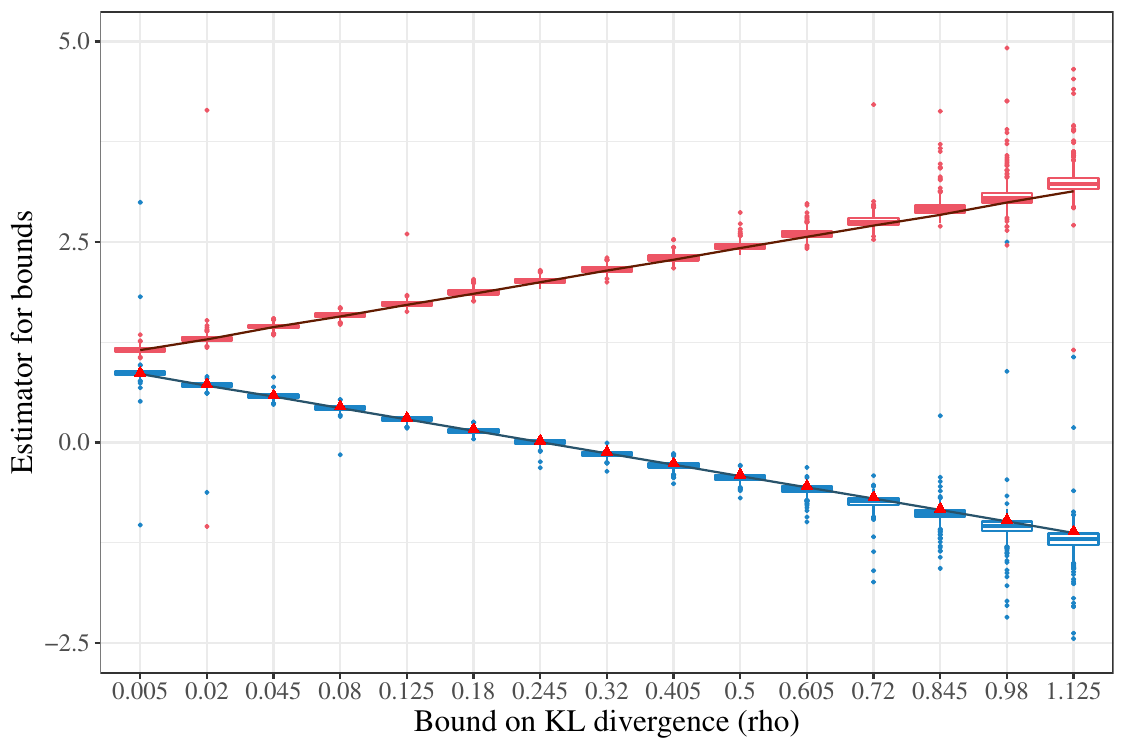}
\caption{Boxplots for $\hat\mu_{1,0}^+$ (red ones) and $\hat\mu_{1,0}^-$ (blue ones) over $N=500$ replicates 
with each value of $\rho$. 
The solid lines are the ground truths of $\mu_{1,0}^+$ and $\mu_{1,0}^-$. 
The red triangles represent the actual value of $\EE[Y(1)\given T=0]$ in our simulation setting. 
}
\label{fig:est}
\end{figure}

To further validate our inference procedure, 
we compute the empirical coverage of 
$\text{CI}_{\text{lower}}$ and $\text{CI}_{\text{upper}}$ for $\mu_{1,0}^\pm$ over $N=500$ runs. 
We also compute the ground truth of $\EE[Y(1)\given T=0]$ under our design as a baseline, 
and compute the empirical coverage of $\text{CI}_{\text{mean}}$. 
They are plotted in Figure~\ref{fig:cover}. 
Our empirical coverage is close to the nominal level $0.95$ in almost all settings, 
showing the validity of our inference procedure. 

\begin{figure}[H]
\centering 
\includegraphics[width=6in]{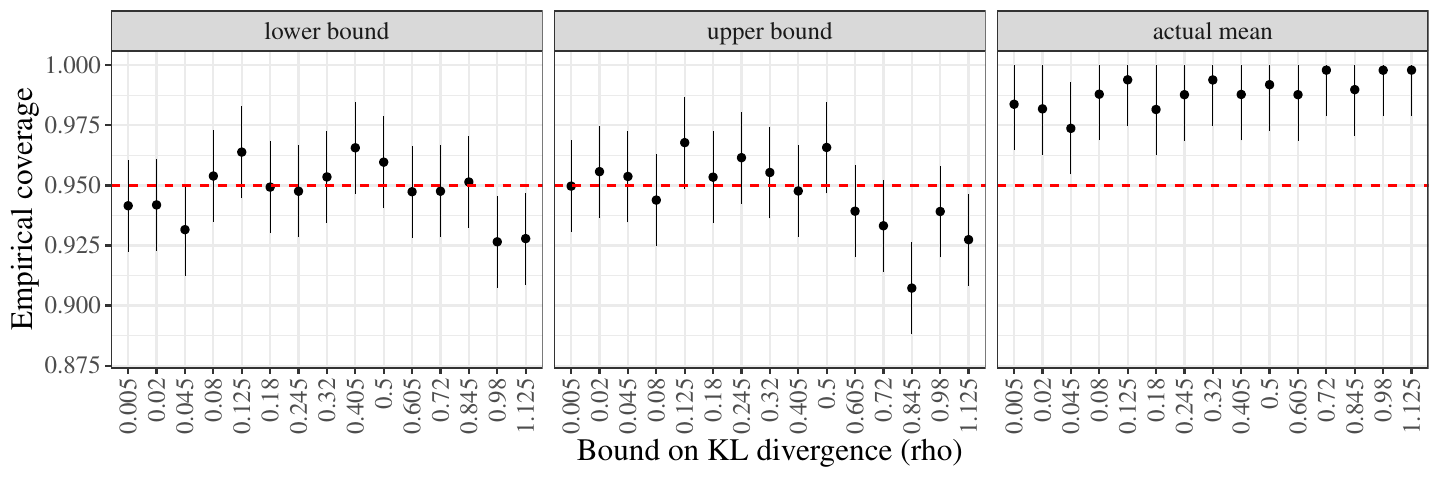}
\caption{Empirical coverage for $\mu_{1,0}^-$ (left), $\mu_{1,0}^+$ (middle), and $\EE[Y(1)\given T=0]$ (right). 
The short vertical segments are the C.I.s computed with $N=500$ replicates. 
The red dashed line is the nominal level $0.95$. 
}
\label{fig:cover}
\end{figure}

Figure~\ref{fig:1side_cover} 
plots the empirical coverage of one-sided C.I.s for $\mu_{1,0}^\pm$, defined as 
$\text{CI}_{\text{lower}}^{\textrm{one-side}} :=[\hat\mu_{1,0}^- + z_{0.05} \hat\sigma_{1,0,-}/\sqrt{n}, 
+\infty)$  
and  
$\text{CI}_{\text{upper}}^{\textrm{one-side}} :=(-\infty,  
\hat\mu_{1,0}^+ + z_{0.95} \hat\sigma_{1,0,+}/\sqrt{n}]$.  
Our theory shows that even though the ERM is off, these C.I.s 
still have valid asymptotic coverage; 
such robustness is also supported by empirical evidence. 

\begin{figure}[H]
\centering 
\includegraphics[width=4in]{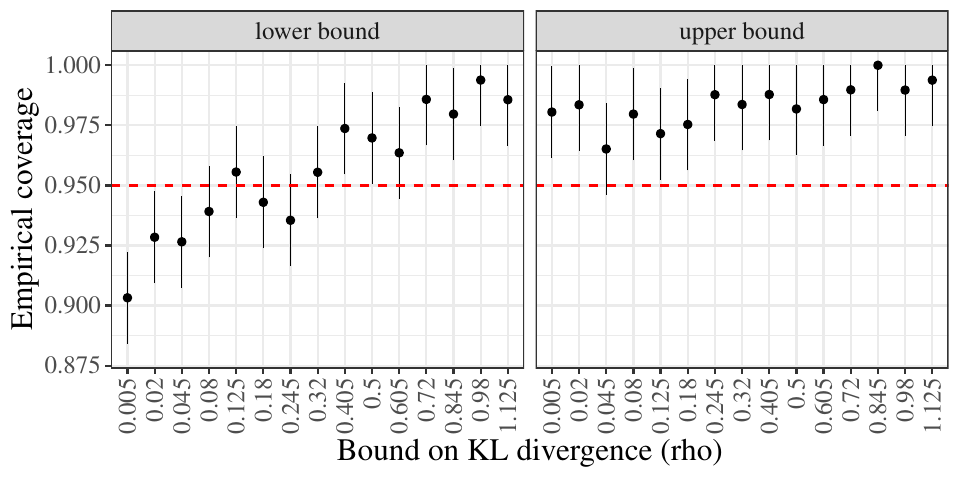}
\caption{Empirical coverage of one-sided C.I.s for $\mu_{1,0}^-$ (left), and $\mu_{1,0}^+$ (right).
The short vertical segments are the C.I.s computed with $N=500$ replicates. 
The red dashed line is the nominal level $0.95$. 
}
\label{fig:1side_cover}
\end{figure}

\section{Discussion}
%
\label{sec:discussion}
In this work, we propose a new sensitivity model
based on the $f$-divergence that characterizes 
the {\em average} effect of confounders on selection bias.
Under the $f$-sensitivity model, we offer a
scheme for the estimation and inference on
the counterfactual and the ATE. We close the paper by
a discussion on possible extensions.

\paragraph{Tightness.}
As mentioned before, the optimal value of~\eqref{eq:opt_cond}
is not necessarily the tightest lower bound for $\EE[Y(1) \given T = 0]$:
the sharp one 
under $(f,\rho)$-selection condition 
is given by 
\$
\inf \Big\{  \EE^{\sup}\big[Y(1)\biggiven T=0\big] \colon  {\PP^{\sup} \in \cQ_{1,0}} \Big\},
\$
where $\cQ_{1,0}$ is the identification set of all distributions that agree with the 
observed distribution and 
satisfy the $(f,\rho)$-selection condition.
The constraints in~\eqref{eq:opt_cond} define a superset of $\cQ_{1,0}$,
potentially leading to conservativeness.

Using the exact characterization of $\cQ_{1,0}$
provided in Proposition~\ref{prop:robust_set_general},
we can represent 
the sharp lower (resp.~upper) bound of $\EE[Y(1)\given T=0]$ 
under the $(f,\rho)$-selection condition as the 
optimal value of
\$
\mathop{\min (\text{resp.~}\max)}_{L(x) \textnormal{~measurable}}~
             &\E\big[Y(1) L(X) \biggiven T=1\big]\\
  \textnormal{s.t.}~&\E[L(x) \given X=x, T=1]= r_{1,0}(x) \\
                    &\E\big[f\big(L(x) /r_{1,0}(x)  \big)\biggiven X=x,T=1\big]\le \rho,\quad \mbox{for almost all }x. \\
                    &\E\big[r_{1,0}(x) f\big(r_{1,0}(x)/L(x)  \big)\biggiven X=x,T=1\big]\le \rho,\quad \mbox{for almost all }x.
\$ 
With the same argument, we can also
develop the optimization problems
for 
sharp bounds on the ATT and the ATC. 
Compared to the dual problems in 
Proposition~\ref{prop:dual_cond}, 
the additional constraints in 
the last line above 
leads to a dual form that is not 
as clean. Developing an efficient
algorithm that solves this tight bound 
remains an interesting
avenue for future research. 

\paragraph{Sensitivity analysis.}
In this paper, we have focused on conducting inference on 
the counterfactuals and treatment effects under the $(f,\rho)$-selection condition,
with a prescribed confounding parameter $\rho$. 
Based on this, we can make robust causal 
conclusions and conduct sensitivity analysis by inverting the confidence intervals 
 as follows. 
Suppose the goal is to detect if there is a nonzero 
ATE; we can consider a increasing sequence of $\rho$, and construct 
a level $1-\alpha$ confidence
interval $\hat C(\rho)$ for the ATE using the method introduced in this paper
at each value of $\rho$;
finally let $\hat \rho$ be the smallest $\rho$ such that $C(\rho)$ contains 
zero. We can interprete the results as 
either there is a nonzero ATE, or there is a confounder as large 
as $\hat{\rho}$ to explain away the observed treatment effects.

More rigorously, let $\rho^*$ denote the true confounding level
and suppose the constructed confidence intervals
$\hat C(\rho)$ are nested in $\rho$: for any $\rho_1 \le \rho_2$,
$\hat C(\rho_1) \subset \hat C(\rho_2)$.  
We then have 
\$
\limsup_{n\to \infty} \PP(\mbox{ATE } = 0, \rho^* < \hat \rho) \le 
\limsup_{n\to \infty} \PP(\mbox{ATE } \notin \hat C(\rho^*)) \le \alpha,
\$
if $\hat{C}(\rho^*)$ is an asymptotically valid confidence interval 
for the ATE. In words, when the ATE is indeed zero,
$\hat \rho$ is an asymptotic level-$(1-\alpha)$ confidence lower
bound for $\rho^*$. 
Similar to the case of~\cite{jin2021sensitivity}, 
here only point-wise validity is necessary, i.e., 
we only need our CIs to be asymptotically valid for each fixed ground truth of $\rho$. 
Finally, we note that the monotonicity of the confidence intervals 
is satisfied with a reasonable estimation procedure; 
one can also force the confidence intervals to be monotone 
by enlarging some of them to conform to those for smaller values of $\rho$, 
without hurting the asymptotic validity.

\paragraph{Implications for the conditional average treatment effect (CATE).}
The mothodology proposed in this paper 
also provides bounds on CATE under the $(f,\rho)$-selection condition. 
For example, the proof of Propositions~\ref{prop:convex_prob} and~\ref{prop:dual_cond} 
implies that a lower bound for $\EE[Y(1) \given X = x,T=1]$ is given by 
the optimal value of  
\$
\min_{L\geq 0 \textrm{ measurable}}~&\EE\big[Y(1) L \given X = x, T = 1\big]\\
s.t.~& \EE\big[L \given T=1, X=x\big],\\
& \EE\bigg[f\Big(\frac{L}{r_{1,0}(x)}\Big) \given X  =x, T=1\bigg] \le \rho. 
\$
The dual form of the above optimization problem is  
\#\label{eq:cate_dual}
\sup_{\alpha\ge 0, \eta \in \RR}~
-r_{1,0}(x) \cdot \EE\bigg[\alpha f^*\Big(-\frac{Y(1) + \eta}{\alpha}\Big)
 +\eta + \alpha\rho \Biggiven X = x,T=1 \bigg].
\#
Note that the optimizer $(\alpha^*(x),\eta^*(x))$ defined 
in~\eqref{eq:perx_opt} is exactly the optimizer of~\eqref{eq:cate_dual}.
In fact, $\hat\mu(x):= \hat{r}^{(j)}(x)\hat{h}^{(j)}(x)$ where 
$\hat{r}^{(j)},\hat{h}^{(j)}$ are defined in Algorithm~\ref{alg:est} 
is an estimator for the optimal objective in~\eqref{eq:cate_dual}. 
These quantities are repeatedly estimated on distinct folds of data 
as intermediate steps of our procedure. 
While such sample splitting does not compromise the efficiency of inference 
due to the final averaging step,  
how to efficiently estimate these CATE bound functions 
with statistical guarantees 
might call for distinct considerations from ours. 
We leave this for future investigation.

\paragraph{Marginal $(f,\rho)$ selection condition.} 
We might even relax the per-$x$ uniform bound on the $f$-divergence 
in Definition~\ref{def:cond}
to a marginal fashion, so that the selection bias is controlled 
averaged over both $U$ and $X$. More formally, we might consider 
the constraint that 
\$
\int f\Big(\frac{\PP(T=0 \given X=x,U)}{\PP(T=1\given X=x,U)} 
\frac{\PP(T = 1 \given X=x)}{\PP(T=0 \given X=x)}\Big) 
\diff \PP_{U,X \given T=1} \leq \rho. 
\$
In this setting, the odds ratio can be very large 
for a small proportion of $X\given T=1$, 
but still controlled in the average sense. 
This type of marginal $(f,\rho)$-selection model leads 
to a larger class of distributional shifts than 
the $(f,\rho)$-selection condition here, 
and a different optimization problem for bounds on counterfactual means. 
Following similar arguments here, 
we see that the dual formulation, parallel to Proposition~\ref{prop:dual_cond}, 
can still be viewed as a risk minimization problem; however,  
the risk function would involve the unknown $X$-shift $r_{1,0}$, 
which might make the estimation and inference more complicated. 
The estimation and inference under this marginal $f$-sensitivity model 
is an ongoing work.

\section{Acknowledgement}
The authors thank Emmanuel Cand\`es, Kevin Guo and Dominik Rothenh\"ausler 
for helpful discussions. Z. R. was partially supported by ONR grant 
N00014-20-1-2337, and NIH grants R56HG010812, R01MH113078 and R01MH123157.


\bibliographystyle{apalike}
\bibliography{ref}

\newpage 
\appendix 

\section{Proofs for identification sets} 
\label{app:proof_bound_shift}

\subsection{Proof of Proposition~\ref{prop:robust_set}}
\label{app:subsec_idset}

Here we prove a stronger result that directly implies
Proposition~\ref{prop:robust_set}. The
following proposition is a tight characterization 
of the identification set induced by the $f$-sensitivity models.

\begin{proposition}\label{prop:robust_set_general}
Let $(X,Y(0),Y(1),U,T)\sim \PP^{\sup}$ be the true
unknown super-population over all random variables of interest. 
Let $\cP$ be the set of all distributions over $(X,Y(0),Y(1),U,T)$. 
Let $\PP^{\obs}_{X,Y,T}$ be the joint distribution 
of all observable random variables $(X,Y,T)$.  
Let $t\in\{0,1\}$. 
Define $\cQ_{t,1-t}$ as the set of all 
counterfactual distributions that 
agrees with the observables and satisfies the $(f,\rho)$ selection condition, 
i.e.,
\$
\cQ_{t,1-t} = \big\{ \PP^{ }_{X,Y(t)\given T=1-t}  \colon \PP\in \cP,~ \PP^{ }_{X,Y,T} = \PP^{\obs}_{X,Y,T},~ \PP^{ }\,\text{satisfies Definition~\ref{def:cond}} \,  \big\}.
\$
Then  
$\PP^{\sup}_{X,Y(t)\given T=1-t} \in \cQ_{t,1-t}$, and 
\$
\cQ_{t,1-t} =  \Bigg\{ \QQ \colon  \frac{\ud \QQ_X}{\ud \PP_{X\given T=t}^{\obs}}(x) = r_{t,1-t}(x),
  ~&D_f\big( \QQ_{Y\given X=x} ~\big\|~  \PP_{Y\given X=x,T=t}^{\obs} \big) \leq \rho,
  ~\text{for }\PP^\obs_{X\given T=t}\text{-almost all }x ,\\
  ~&\quad D_f\big( \QQ_{Y^{\obs}\given X=x, T=t} ~\big\|~  \QQ_{Y\given X=x} \big) \leq \rho,
  ~\text{for }\QQ_{X}\text{-almost all }x  \Bigg\} ,
\$
where $r_{1,0}(x) = \frac{(1-e(x))p_1}{e(x)(1-p_1)}$, $r_{0,1}(x) = \frac{e(x)(1-p_1)}{(1-e(x))p_1}$, and 
$e(x) = \PP^\obs(T=1\given X=x)$, $p_1=\PP^\obs(T=1)$. 
\end{proposition}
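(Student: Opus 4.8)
The plan is to establish the two set inclusions separately, after observing that the membership $\PP^{\sup}_{X,Y(t)\given T=1-t}\in\cQ_{t,1-t}$ is immediate upon taking $\PP=\PP^{\sup}$ in the definition of $\cQ_{t,1-t}$. I treat $t=1$; the case $t=0$ follows by interchanging the labels $0$ and $1$ throughout, which also produces the formula for $r_{0,1}$. For the inclusion $\cQ_{1,0}\subseteq(\cdots)$, let $\QQ=\PP_{X,Y(1)\given T=0}$ for some $\PP\in\cP$ agreeing with the observables and satisfying Definition~\ref{def:cond}. The $X$-marginal constraint is immediate: since $X$ is observable, $\QQ_X=\PP^{\obs}_{X\given T=0}$, and a one-line Bayes computation gives $\ud\PP^{\obs}_{X\given T=0}/\ud\PP^{\obs}_{X\given T=1}=r_{1,0}$. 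The first divergence bound $D_f(\QQ_{Y\given X=x}\,\|\,\PP^{\obs}_{Y\given X=x,T=1})\le\rho$ is exactly Lemma~\ref{lem:bound_shift_single}. The second bound is obtained by rerunning that proof on the \emph{other} integral in Definition~\ref{def:cond}: the same data-processing step and likelihood-ratio decomposition, with the roles of $T=0$ and $T=1$ swapped, yield $D_f(\PP^{\obs}_{Y\given X=x,T=1}\,\|\,\QQ_{Y\given X=x})\le\rho$.

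The substance is the reverse inclusion, a realizability statement: given any $\QQ$ in the constraint set, I must exhibit a super-population $\PP\in\cP$ that reproduces the observables, obeys the $(f,\rho)$-selection condition, and has $\PP_{X,Y(1)\given T=0}=\QQ$. The construction I would use takes the confounder to be the potential outcome itself, $U=Y(1)$. Set $\nu_x=e(x)\,\PP^{\obs}_{Y\given X=x,T=1}+(1-e(x))\,\QQ_{Y\given X=x}$ and let $p_1(u)=\ud\PP^{\obs}_{Y\given X=x,T=1}/\ud\nu_x(u)$, $p_0(u)=\ud\QQ_{Y\given X=x}/\ud\nu_x(u)$, which are well defined with $e(x)p_1(u)+(1-e(x))p_0(u)=1$ holding $\nu_x$-almost everywhere; the two divergence constraints make $\QQ_{Y\given X=x}$ and $\PP^{\obs}_{Y\given X=x,T=1}$ mutually absolutely continuous, so the ratios below are finite a.e. I then generate $X\sim\PP^{\obs}_X$, draw $Y(1)\mid X=x\sim\nu_x$, assign treatment by $\PP(T=1\given X=x,Y(1)=u)=e(x)p_1(u)$, draw $Y(0)\mid X=x\sim\PP^{\obs}_{Y\given X=x,T=0}$ independently of $(Y(1),T)$, and set $U=Y(1)$.

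Verifying this construction is the heart of the argument and splits into three checks. (i) The observables match: integrating the assignment rule gives $\PP(T=1\given X=x)=e(x)\int p_1\,\ud\nu_x=e(x)$, a Bayes inversion returns $\PP_{Y(1)\given X=x,T=1}=\PP^{\obs}_{Y\given X=x,T=1}$ and $\PP_{Y(1)\given X=x,T=0}=\QQ_{Y\given X=x}$, and the independent draw of $Y(0)$ reproduces $\PP^{\obs}_{Y\given X=x,T=0}$; together with $\QQ_X=\PP^{\obs}_{X\given T=0}$ this yields $\PP_{X,Y,T}=\PP^{\obs}_{X,Y,T}$ and $\PP_{X,Y(1)\given T=0}=\QQ$. (ii) Condition~\eqref{eq:u_confoundedness} holds because, given $(X,U)=(X,Y(1))$, $T$ is a fixed Bernoulli draw and $Y(0)$ is drawn independently of $T$. (iii) The selection condition collapses onto the two constraints: with $U=Y(1)=u$ the inner odds ratio becomes $\ORR=p_0(u)/p_1(u)=\ud\QQ_{Y\given X=x}/\ud\PP^{\obs}_{Y\given X=x,T=1}(u)$, so the first integral in Definition~\ref{def:cond} equals $D_f(\QQ_{Y\given X=x}\,\|\,\PP^{\obs}_{Y\given X=x,T=1})$ and the second equals $D_f(\PP^{\obs}_{Y\given X=x,T=1}\,\|\,\QQ_{Y\given X=x})$, both $\le\rho$ by hypothesis, whence $d_f(\PP)\le\rho$.

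The main obstacle I anticipate lies entirely in the reverse inclusion, and is measure-theoretic rather than conceptual: one must handle the Radon--Nikodym derivatives $p_0,p_1$ against the mixing measure $\nu_x$ carefully (invoking the mutual absolute continuity supplied by the two divergence bounds), check that $u\mapsto e(x)p_1(u)$ is a genuine $[0,1]$-valued conditional probability, and ensure the per-$x$ constructions assemble into a jointly measurable kernel so that $\PP$ is a bona fide distribution on $(X,U,T,Y(0),Y(1))$. Once this measurability bookkeeping is in place, the identification of $r_{1,0}$ and the odds-ratio computations are routine Bayes manipulations.
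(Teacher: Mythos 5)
Your proposal is correct and takes essentially the same route as the paper's proof: the forward inclusion via Lemma~\ref{lem:bound_shift_single} and its mirror image applied to the second integral in Definition~\ref{def:cond}, and the reverse inclusion by realizing the confounder as $U=Y(1)$ with $Y(0)$ drawn from the observed control law depending only on $X$. The only difference is cosmetic: you factor the joint law of $(X,T,Y(1))$ as the mixture marginal $\nu_x$ for $Y(1)$ followed by the assignment kernel $T\given (X,Y(1))$, whereas the paper specifies $T\given X$ first and then $Y(1)\given (X,T)$ — these determine the identical distribution, and your verification of the observables, of condition~\eqref{eq:u_confoundedness}, and of the collapse of $d_f(\PP)$ onto the two divergence constraints matches the paper's.
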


\begin{proof}[Proof of Proposition~\ref{prop:robust_set_general}]
Fix $t = 1$. For any $\PP_{X,Y(1)\given T=0} \in \cQ_{1,0}$,
since $\PP_{X,Y,T} = \PP^{\rm obs}_{X,Y,T}$, 
\$
\frac{\ud \PP_{X \given T=0}}
{\ud \PP^{\rm obs}_{X \given T = 1}}
= 
\frac{\ud \PP_{X \given T=0}}
{\ud \PP_{X \given T = 1}}
= r_{1,0}(x).
\$
By Lemma~\ref{lem:bound_shift_single},
$
D_f\big(\PP_{X,Y(1) \given T = 0} ~\big\|~
\PP_{X,Y(1) \given T = 1}\big) \le \rho.
$
On the other hand,  
\$
D_f\big(\PP_{Y(1) \given X, T=1} 
~\|~ \PP_{Y(1) \given X, T=0}\big)
\le & 
D_f\big(\PP_{Y(1),U \given X, T=1} 
~\|~ \PP_{Y(1),U \given X, T=0}\big)\\
= & 
\E_{\PP_{Y(1),U \given X,T=0}}\bigg[f\Big(\frac{\ud \PP_{Y(1),U \given X,T = 1}}
{\ud \PP_{Y(1), U \given X, T=0}}\Big)\bigg]\\
= & 
\E_{U \given X,T=0}\bigg[f\Big(\frac{\PP(T = 1\given X, U)}{\PP(T = 0\given X,U)}
\cdot \frac{\PP(T = 0\given X)}{\PP(T=1 \given X)}\Big)\bigg]\le \rho,
\$
where the last inequality is due to the
$(f,\rho)$-selection condition.
Combining the above, we establish the 
``$\subset$'' direction.
It remains to prove the reverse. We show the proof
for the case of $t=1$ here, and the $t=0$ case follows from 
similar arguments. 

Given any $Q \in \cQ_{1,0}$, we aim to find a distribution 
$\PP^{\rm sup}$ over $(X,Y(0),Y(1),U,T)$ such that 
\begin{itemize}
\item $(Y(1),Y(0)) \indep T \given X,U$
\item $\PP^{\rm sup}$ is compatible with $\PP^{\rm obs}_{X,T,Y}$;
\item $\PP^{\rm sup}$ satisfies the $(f,\rho)$-selection condition;
\item $\PP^{\rm sup}_{X,Y(1)\given T=0}(x,y) = Q(x,y)$.
\end{itemize}
To construct $\PP^{\rm sup}$, we first set
$\PP^{\rm sup}_{X,T} = \PP^{\rm obs}_{X,T}$.
Then we specify the distribution of $Y(1) \given X,T$ via
\$
\PP^{\rm sup}_{Y(1) \given X, T = 1} = \PP^{\rm obs}_{Y \given X, T = 1},\quad
\PP^{\rm sup}_{Y(1) \given X,T = 0} = Q_{Y \given X}.
\$
So far the joint distribution of $(X,T,Y(1))$ has 
been determined. We let $U = Y(1)$ be the unobserved confounder.
Finally, the distribution of $Y(0) \given X,T,Y(1),U$ is 
specified via
\$
\PP^{\rm sup}_{Y(0) \given X,T,Y(1),U} = 
\PP^{\rm sup}_{Y(0) \given X} = 
\PP^{\rm obs}_{Y \given X,T=0}.
\$
Having constructed $\PP^{\rm sup}$, we proceed the check
that it satisfies the conditions. Conditional on $X$ and
$U$, $Y(1)$ becomes deterministic and the distribution of 
$Y(0)$ only depends on $X$. Hence $(Y(1),Y(0)) \indep T 
\given X,U$. By construction, it
is straightforward to see that $\PP^{\rm sup}$ is compatible
with $\PP^{\rm obs}_{X,T,Y}$. For any $x$, again by the 
construction of $\PP^{\rm sup}$,
\$
\E_{\PP^{\rm sup}_{Y(1) \given X = x, T=1}}
\bigg[f\Big(\frac{\ud \PP^{\rm sup}_{Y(1) \given X = x,T = 0}}
{\ud \PP^{\rm \sup}_{Y(1) \given X = x, T=1}}\Big)\bigg]
= 
\E_{\PP^{\rm obs}_{Y \given X = x, T=1}}
\bigg[f\Big(\frac{\ud Q_{Y\given X = x}}
{\ud \PP^{\rm \obs}_{Y \given X = x, T=1}}\Big)\bigg]
= D_f\big(Q_{Y \given X = x} ~\|~ \PP^{\rm obs}_{Y \given X =x, T=1}\big)
\le \rho,
\$
where the last inequality is due to the 
definition of $\cQ$.
On the other hand,
\$
\frac{\ud \PP^{\rm sup}_{Y(1) \given X = x,T = 0}}
{\ud \PP^{\rm \sup}_{Y(1) \given X = x, T=1}}
= \frac{\ud \PP^{\rm sup}_{Y(1), T = 0\given X = x}}
{\ud \PP^{\rm \sup}_{Y(1), T =1 \given X = x}}
\cdot 
\frac{\PP^{\rm sup}(T = 1 \given X = x)}
{\PP^{\rm sup}(T = 0 \given X = x)}
= & \frac{\PP^{\sup}(T = 0\given Y(1),X = x)}
{\PP^{\rm sup}(T =1 \given Y(1), X = x)}
\cdot 
\frac{\PP^{\rm sup}(T = 1 \given X = x)}
{\PP^{\rm sup}(T = 0 \given X = x)}\\
= &\frac{\PP^{\rm sup}(T = 0\given U, X = x)}
{\PP^{\rm \sup}(T =1 \given U,X = x)}
\cdot 
\frac{\PP^{\rm sup}(T = 1 \given X = x)}
{\PP^{\rm sup}(T = 0 \given X = x)},
\$
where the last equality is because $U = Y(1)$ 
under $\PP^{\rm sup}$.
Combing the above, we have
\$
\E_{\PP^{\rm sup}_{U \given X = x, T=1}}
\bigg[f\Big(\frac{\PP^{\rm sup}(T = 0\given U, X = x)}
{\PP^{\rm \sup}(T =1 \given U,X = x)}
\cdot 
\frac{\PP^{\rm sup}(T = 1 \given X = x)}
{\PP^{\rm sup}(T = 0 \given X = x)}
\Big)\bigg] \le \rho.
\$
Similarly,
\$
  & \E_{\PP^{\rm sup}_{U \given X = x, T=0}}
\bigg[f\Big(\frac{\PP^{\rm sup}(T = 1\given U, X = x)}
{\PP^{\rm \sup}(T =0 \given U,X = x)}
\cdot 
\frac{\PP^{\rm sup}(T = 0 \given X = x)}
{\PP^{\rm sup}(T = 1 \given X = x)}
\Big)\bigg] \le 
\E_{\PP^{\rm sup}_{Y(1) \given X = x, T=0}}
\bigg[f\Big(\frac{\ud \PP^{\rm sup}_{Y(1) \given X = x,T = 1}}
{\ud \PP^{\rm \sup}_{Y(1) \given X = x, T=0}}\Big)\bigg]\\
= &
\E_{\PP^{\rm obs}_{Y \given X = x, T=0}}
\bigg[f\Big(\frac{\ud \PP^{\obs}_{Y \given X = x, T=1}}
{\ud Q_{Y \given X = x}}\Big)\bigg]
= D_f\big(\PP^{\obs}_{Y \given X = x,T=1} ~\|~ Q_{Y \given X =x}\big)
\le \rho.
\$
Therefore, the super population $\PP^{\rm sup}$
satisfies the $(f,\rho)$-selection condition.
By construction, $\PP^{\rm sup}_{Y(1) \given X, T = 0}
= Q_{ Y\given X}$.
It remains to show that 
$\PP^{\rm sup}_{X\given T = 0} = Q_X$.
For any measurable set $A$, 
\$
\PP^{\rm sup}(X \in A \given T=0) 
= &\E^{\rm sup}\bigg[\frac{\ud \PP^{\rm sup}_{X \given T=0}}
{\ud \PP^{\rm sup}_{X \given T=1}}
\cdot \ind\{X\in A\}\Biggiven T=1\bigg]
= \E^{\rm sup}\bigg[\frac{\ud \PP^{\rm obs}_{X \given T=0}}
{\ud \PP^{\rm obs}_{X \given T=1}}
\cdot \ind\{X\in A\}\Biggiven T=1\bigg]\\
=&\E^{\rm sup}\Big[r_{1,0}(X)
\cdot \ind\{X\in A\}\Biggiven T=1\Big]
= \E^{\rm sup}\bigg[\frac{\ud Q_X}
{\ud \PP^{\rm sup}_{X \given T=1}}
\cdot \ind\{X\in A\}\Biggiven T=1\bigg]
= Q(X \in A).
\$
Since the above holds for any measuable set $A$,
$\PP_{X\given T = 0} = Q_X$.

Finally, switching the role of $1$ and $0$ 
completes the proof.

\end{proof}

\section{Deferred details and discussions}  


\subsection{Proof of Proposition~\ref{prop:positive_alpha}}
\label{app:proof_positive_alpha}
Given $X = x$, suppose instead $\alpha^*(x) = 0$.
We consider the following two cases:
\begin{itemize}
  \item If $\eta^*(x) < -\uly(x)$, then
\$
& \lim\!\inf_{\alpha \rightarrow 0}
\EE\bigg[\alpha f^*\Big(\frac{Y(1) + \eta^*(x)}{-\alpha}\Big)
+ \eta^*(x) + \alpha\rho \biggiven X = x, T=1\bigg]\\
= &
\lim\!\inf_{\alpha \rightarrow 0}
\EE\bigg[\alpha f^*\Big(\frac{Y(1) + \eta^*(x)}{-\alpha}\Big)
\ind \big\{Y(1) \le -\eta^*(x)\big\} \\
&\qquad +\alpha f^*\Big(\frac{Y(1) + \eta^*(x)}{-\alpha}\Big)
\ind \big\{Y(1) > -\eta^*(x)\big\} \biggiven X = x, T=1\bigg]
 + \eta^*(x) + \alpha \rho\\
\stackrel{\rm (a)}{\ge} & 
\lim\!\inf_{\alpha \rightarrow 0}
\EE\bigg[\alpha f^*\Big(\frac{Y(1) + \eta^*(x)}{-\alpha}\Big)
\ind \big\{Y(1) \le -\eta^*(x)\big\} - \alpha L\Biggiven X = x, T=1 \bigg]\\
&\qquad + \lim\!\inf_{\alpha \rightarrow 0}\bigg[
\alpha f^*\Big(\frac{Y(1) + \eta}{-\alpha}\Big)
\ind \big\{Y(1) > -\eta^*(x)\big\} - \alpha L\biggiven X = x, T=1\bigg]
+  \eta^*(x) \\
\stackrel{\rm (b)}{\ge} & + \infty,
\$
where step (a) uses the fact that $\lim\!\inf_{n\rightarrow \infty} a_n + b_n
\ge \lim\!\inf_{n\rightarrow \infty} a_n + \lim\!\inf_{n\rightarrow \infty}
b_n$ and step (b) follows from Fatou's lemma and the condition that 
$f^*(x)/x \rightarrow \infty$ when $x \rightarrow \infty$.

\item If $\eta^*(x) \ge - \uly(x)$, then
\$
&\lim_{\alpha \rightarrow 0}
\EE\bigg[\alpha f^*\Big(\frac{Y(1) + \eta^*(x)}{-\alpha}\Big)
+ \eta^*(x) + \alpha\rho \biggiven X = x, T=1\bigg]\\
= &\lim_{\alpha \rightarrow 0}
\EE\bigg[\alpha f^*\Big(\frac{Y(1) + \eta^*(x)}{-\alpha}\Big)
\ind\big\{Y(1) \ge \ess\!\inf Y(1)\big\}
+ \alpha\rho \biggiven X = x, T=1\bigg] + \eta^*(x)\\
\stackrel{\rm (a)}{=} &\EE\bigg[\lim_{\alpha\rightarrow 0} \alpha f^*\Big(\frac{Y(1) + \eta^*(x)}{-\alpha}\Big)
\ind\big\{Y(1) \ge \ess\!\inf Y(1)\big\}
+ \alpha\rho \biggiven X = x, T=1\bigg] + \eta^*(x)\\
\stackrel{\rm (b)}{=} & \eta^*(x).
\$
Above, step (a) is due to the fact that $f^*(x)$ is 
bounded when $x \le 0$ and the dominated convergence theorem;
step (b) is because $f^*(x)/x \rightarrow 0$ as $x\rightarrow -\infty$.
\end{itemize}
Combining the two cases above, we conclude that 
$\eta^*(x) = -\uly(x)$ and the optimal 
value of the dual problem is $-\uly(x)$.
By the strong duality, the optimal value of the primal 
objective function is $\EE\big[r_{1,0}(X)\uly(X)\given T=1\big]$.
As an implication, there exists a feasible $L(x,y)$ such that 
$\EE[Y(1)L(X,Y(1)) \given  T = 1] = \EE\big[r_{1,0}(X)\uly(X) \given T=1\big]$.
Let $\QQ_{Y \given X = x}$ denote the measure induced by $L(x,y)$:
\$
\frac{\ud \QQ_{Y \given X = x}}{\ud \PP_{Y(1)\given X = x, T=1}}(y)
=\frac{L(x,y)}{r_{1,0}(x)}.
\$
This is a valid transformation of measure because
$L(x,y)$ is feasible.
Then $Y(1) = \uly(X)$ a.s. under $\QQ_{Y \given X}$.
Consequently, 
\$
&1 = \QQ\big(Y(1) = \uly(X) \given X = x\big)
= \EE\bigg[\frac{L\big(x, \uly(x)\big)}{r_{1,0}(x)}
\cdot \ind \big\{Y(1) =  \uly(x)\big\} \Biggiven X = x, T=1\bigg] 
= \frac{L(x,\uly(x))}{r_{1,0}(x)}\cdot \bar{p}(x),\\
& 0 = L(x,Y(1)) \cdot \ind
\{Y(1) > \uly(x)\},~\text{a.s. under }\PP_{Y(1)\given X = x, T=1}.
\$
Again since $L$ is feasible, 
\$
\rho \ge \EE\bigg[f\Big(\frac{L(x,Y(1))}{r_{1,0}(x)}\Big) \Bigggiven
X  = x, T=1\bigg]
= \bar{p}(x) \cdot  f\Big(\frac{1}{\bar{p}(x)}\Big)
+ \big(1-\bar{p}(x)\big) \cdot f(0).
\$
This is a contradiction to the condition.
Hence $\alpha^*(x) >0$.

\subsection{Discussions on Assumption~\ref{assump:sieve} on sieve estimation} 
\label{app:subsec_discuss_sieve}

We provide additional discussion on Assumption~\ref{assump:sieve} 
for sieve estimators in the context of $(X,Y(1))\given T=1$. 
In particular, we 
first justify 
the smoothness of the optimizers when the conditional distributions 
are sufficiently smooth. We then 
verify the technical 
conditions for two choices of $f$-divergences: KL-divergence 
and $\chi^2$-divergence. 
Then we discuss some considerations of relaxing the conditions 
with implementations in practice. 

\paragraph{Smoothness of the optimizers.}
We first provide some justifications for assuming the optimizers 
are continuously differentiable. 
By the strong convexity of $f$, its conjugate $f^*$ is continuous, 
hence without loss of generality 
we always assume the differentiation and expectation are 
exchangeable. 
We also assume the conjugate $f^*$ is sufficiently smooth, 
which is the case for many popular choices of $f$-divergence. 
As we discussed in Proposition~\ref{prop:positive_alpha}, 
under mild conditions, the optimizers $(\alpha^*(x),\eta^*(x))$ 
lies in the interior of $\RR^+\times \RR$. 
The optimizers are thus the solutions to
\$
\big( \alpha^*(x),\eta^*(x) \big) = \textrm{argzero} ~ \nabla_{a,b}\bigg\{ a \EE\Big[ f^* \big({ \textstyle \frac{Y(1)+b}{-a}} \big)  \Biggiven X=x,T=1\Big] + b + a\rho \bigg\},
\$
where the right-hand side takes the form 
\$
F(a,b,x) := \EE\big[  g(Y(1),a,b) \biggiven X=x,T=1\big] \in \RR^2
\$
for some differentiable or smooth function $g$ decided by $f^*$ 
and its derivative $(f^*)'$. 
Thus $F(a,b,x)$ is smooth in $(a,b)$ when $f^*$ is sufficiently smooth. 
Now let us assume the conditional distribution $\PP_{Y(1)\given X=x,T=1}$ 
is smooth; for example, for some $h\in \cX$, 
$\PP_{Y(1)\given X=x+th,T=1} = \PP_{Y(1)\given X=x ,T=1} + t\cdot \PP_h$ 
for some measure $\PP_h$ on $\cY$; 
and similar for higher-order expansions. 
This is a reasonable assumption if we are willing 
to assume that the conditional distributions of $Y(1)$ 
are close for similar covariates. 
Concretely, such condition holds when $Y(1)\given X=x,T=1$ is 
a normal distribution with homoskedastic noise 
and a smooth mean function, or heteroskedastic noise 
with a smooth mean function and smooth standard deviation function, etc. 
When the conditional distributions are 
smooth in $x$, the function $F(a,b,x)$ is also smooth in $x$  
by the linearity of conditional expectation. 
Finally, 
if the derivatives with respect to $a,b$ is always invertible (which is 
the case under mild conditions for the examples we discuss shortly) 
and smooth, 
invoking the Implicit Function Theorem~\citep{rudin1976principles}, 
the minimizer can be smooth in $x$.

\paragraph{KL-divergence.} 
A popular choice for the function $f$ is 
$f(x) = x\log x$, which leads to the KL-divergence~\citep{kullback1951information}. 
The dual function in this case is $f^*(y) = e^{y-1}$, 
and the loss function becomes 
\$
\ell(\theta,x,y) = \alpha(x) e^{\frac{y+\eta(x)}{-\alpha(x)}-1} + \eta(x) + \alpha(x) \rho.
\$
The conditional expectation is 
\$
\EE\big[\ell((a,b),x,Y(1))\biggiven X=x,T=1\big] = 
a \EE\big[e^{\frac{Y(1)-b}{-a}-1} \biggiven X=x,T=1\big] + b + a \rho.
\$
We first look at the strong convexity assumption. 
The conditional expectation is twice differentiable, with 
\$
&\nabla_{a}^2 \EE\big[\ell((a,b),x,Y(1))\biggiven X=x,T=1\big] = \frac{1}{a^3}\EE\Big[\big(Y(1)+b\big)^2 e^{\frac{Y(1)+b}{-a}-1} \Biggiven X=x,T=1\Big],\\
&\nabla_{b}^2 \EE\big[\ell((a,b),x,Y(1))\biggiven X=x,T=1\big] = \frac{1}{a } \EE\Big[e^{\frac{Y(1)+b}{-a}-1} \Biggiven X=x,T=1\Big], \\
&\nabla_{a,b}^2 \EE\big[\ell((a,b),x,Y(1))\biggiven X=x,T=1\big] = -\frac{1}{a^2} \EE\Big[ \big(  Y(1)+b \big)  e^{\frac{Y(1)+b}{-a}-1} \Biggiven X=x,T=1\Big].
\$
Therefore, 
a simple calculation shows that 
as long as 
$Y(1)$ is not deterministic
at $(\alpha^*(x),\beta^*(x))$, the 
Hessian matrix is non-singular. 
%
Also, if the underlying distribution $\PP_{Y(1)\given X=x,T=1}$ 
is continuous in $x$, 
the above derivatives, hence the 
eigenvalues of the Hessian matrix is continuous; 
since $\cX$ is compact, there exists a positive uniform lower bound  
for the smallest eigenvalue of the Hessian matrix, leading to 
strong convexity. 

We then consider the 
continuity condition $|\ell(\theta,x,y) - \ell(\theta^* ,x,y)|\leq \bar\ell(x,y)\|\theta(x)-\theta^*(x)\|_2$ for $\|\theta(x) -\theta^*(x)\|_2<\epsilon$ 
for some sufficiently small $\epsilon>0$, where  
$\|\theta(x) -\theta^*(x)\|_2$ is the Euclidean norm, and 
$\sup_{x\in \cX}\EE[\bar\ell(x,Y(1))^2\given X=x,T=1]< M$ for 
some constant $M>0$. By Taylor expansion, we have 
\$
\ell(\theta,x,y) - \ell(\theta^* ,x,y) = \nabla_\theta \ell( \tilde\theta,x,y) \big(\theta^*(x) - \theta(x)\big) ,
\$
where $\tilde\theta(x)$ lies between $\theta(x)$ and $\theta^*(x)$. 
We note that $\nabla_\theta$ is also a smooth function of $\theta$,
and the gradient is uniform bounded for $\theta(x)$ 
within a neighborhood of $\theta^*(x)$ in terms of Euclidean $L_2$-norm. 
In particular, 
\$
\frac{\partial}{\partial a} \ell((a,b),x,y) = \big(1-{ \textstyle \frac{y+b}{a}} \big) e^{ \frac{y+b}{-a}-1}+ \rho,\quad \frac{\partial}{\partial b} \ell((a,b),x,y) =  1-  e^{ \frac{y+b}{-a}-1}.
\$
For any $\|(a,b) - \theta^*(x)\|_2 \leq \epsilon$ for 
sufficiently small $\epsilon$, we can take 
$\bar\ell(x,y)$ as the uniform upper bound of the 
Euclidean norm 
of the gradient, which has finite second moment 
if $Y(1)$ is not too heavy-tailed. 

Finally, the last condition is that there 
exists a constant $C_1$ such that 
$\EE[\ell(\theta,X,Y(1)) - \ell(\theta^*,X,Y(1))\given T=1] \leq C_1 \|\theta - \theta^*\|_{L_2(\PP_{\cdot\given T=1})}^2$ 
when $\theta\in \Lambda_c^p(\cX)\times\Lambda_c^p(\cX)$ and 
$\|\theta - \theta^*\|_{L_2(\PP_{\cdot\given T=1})}$ is sufficiently small. 
Similar to arguments in the proof of Theorem~\ref{thm:sieve}, 
sufficiently small $\|\theta-\theta^*\|_{L_2(\PP_{\cdot\given T=1})}$ 
implies sufficiently small  $\|\theta-\theta^*\|_{\infty}$ 
for this function class. Therefore, we can consider $\theta\in  \Lambda_c^p(\cX)\times\Lambda_c^p(\cX)$ such that  $\|\theta-\theta^*\|_{\infty}$ 
is sufficiently small.  
With a taylor expansion of the conditional expectation 
of the risk at $(\alpha^*(x),\eta^*(x))$, we have 
\$
&\EE\big[\ell(\theta,x,Y(1))\biggiven X=x,T=1\big] - 
\EE\big[\ell(\theta^*,x,Y(1))\biggiven X=x,T=1\big] \\
&= 
1/2\cdot \nabla_{\theta}^2 \EE\big[\ell(\tilde\theta,x,Y(1))\biggiven X=x,T=1\big] [\theta(x) - \theta*(x),\theta(x)-\theta^*(x)]
\$
since the gradient is zero, where $\tilde\theta(x)$ 
lies between $\theta(x)$ and $\theta^*(x)$. 
Previous derivations have shown that the Hessian 
is continuous; also, 
by the compactness of $\cX$ and continuity of $\theta^*(x)$, 
there is a uniform lower bound $\epsilon>0$ 
for $\alpha^*(x)$. 
Thus, when $\|\theta - \theta^*\|_\infty$ 
is sufficiently small, the Hessian is also bounded. 
Again by the compactness of $\cX$, this bound can be taken 
to be uniform for $x\in \cX$, which leads to the desired condition.

\paragraph{$\chi^2$-divergence.} 
Another popular choice is $f(x)=(x-1)^2$, so that 
$f^*(y) = \frac{1}{4}((y+2)_+^2-1)$. 
The conjugate function is a quadratic function on $[-2,\infty)$
and zero on $(-\infty, -2]$, with continuous 
gradient $(f^*)'(y) = (\frac{s}{2}+1)_+$, 
and 
second-order derivative $(f^*)''(y) = \frac{1}{2} \ind\{y>-2\}$; 
the latter is almost-everywhere (under Lebesgue measure) except $y=2$. 
We now proceed to verify the conditions. The loss function is 
\$
\ell(\theta,x,y) = \frac{\alpha(x)}{4} \Big[\Big(  \frac{y+\eta(x)}{-\alpha(x)}+1\Big)_+^2  -1 \Big] + \eta(x) + \alpha(x) \rho.
\$
Assuming $Y(1)$ does not have 
point measure, the differentiation and expectation are exchangeable, 
and 
\$
&\nabla_{a}^2 \EE\big[\ell((a,b),x,Y(1))\biggiven X=x,T=1\big] = \frac{1}{2a^3}\EE\Big[ \big(Y(1)+b\big)^2\ind\{\textstyle{\frac{Y(1)+b}{-a}}>-2\} \Biggiven X=x,T=1\Big],\\
&\nabla_{b}^2 \EE\big[\ell((a,b),x,Y(1))\biggiven X=x,T=1\big] = \frac{1}{2a } \EE\Big[  \ind\{\textstyle{\frac{Y(1)+b}{-a}}>-2\}  \Biggiven X=x,T=1\Big], \\
&\nabla_{a,b}^2 \EE\big[\ell((a,b),x,Y(1))\biggiven X=x,T=1\big] = -\frac{1}{2a^2} \EE\Big[ \big(Y(1)+b\big)  \ind\{\textstyle{\frac{Y(1)+b}{-a}}>-2\} \Biggiven X=x,T=1\Big].
\$
Also, the gradient is given by 
\$
&\nabla_{a}  \EE\big[\ell((a,b),x,Y(1))\biggiven X=x,T=1\big] =  \EE\Big[\big({\textstyle \frac{Y(1)+b}{-2a}+1} \big)_+^2-1 + {\textstyle \frac{Y(1)+b}{a}}\big( { \textstyle \frac{Y(1)+b}{-2a}+1}\big)_+ \Biggiven X=x,T=1\Big]+\rho,\\
&\nabla_{b}  \EE\big[\ell((a,b),x,Y(1))\biggiven X=x,T=1\big] =-  \EE\Big[  \big( { \textstyle \frac{Y(1)+b}{-2a}+1}\big)_+     \Biggiven X=x,T=1\Big] +1, 
\$
which are both zero at $(a,b) = (\alpha^*(x),\eta^*(x))$. 
The form of the loss function implies that $\alpha^*(x)>0$
for almost all $x$; hence there is a uniform lower bound 
$\epsilon>0$ by the compactness of $\cX$. 
By Cauchy-Schwarz inequality, the Hessian 
at $(a,b) = (\alpha^*(x),\eta^*(x))$ is positive 
$\PP(\frac{Y(1)+\eta^*(x)}{-\alpha^*(x)}>-2\given X=x,T=1)=0$ 
or $(Y(1)+\eta^*(x) - c(x))\ind\{\textstyle{\frac{Y(1)+\eta^*(x)}{-a}}>-2\}=0$ 
almost surely for some $c(x)\in \RR$.  
By the optimality condition, 
the former is 
impossible, and 
the latter is also impossible if $Y(1)$ is not deterministic 
conditional on $X=x$. 
Thus, as long as $Y(1)\given X=x$ is not deterministic 
for almost all $x$, the Hessian is positive definite for all $x\in \cX$. 
By compactness of $\cX$ and the continuity, 
we know that the minimial eigenvalue of the Hessian 
is uniformly lower bounded away from zero, hence 
the strong convexity follows. 

The other two conditions are easy to verify in this case: 
the conjugate function $f^*$ is a truncation of a quadratic function. 
Since truncation is a contraction map, 
these results hold easily by the uniform boundedness 
of second-order derivatives. 
We've thus verified the conditions in Assumption~\ref{assump:sieve}
for $\chi^2$-divergence.

\paragraph{Practical conderations.} 
In practice, we might search for $(\alpha^*(x),\eta^*(x))$ 
within the 
function classes with a bounded range of coefficients  
in the two examples we give, 
leading to a compact function space. 
This is typically assumed in the contexts of $M$-estimators 
and sieve estimators~\citep{van2000asymptotic,geer2000empirical,chen1998sieve,chen2007large}. 
In this case, the regularity conditions are easier to verify 
given the uniform boundedness. 
The function space still provides 
finer and finer approximation to the targets 
if the bounded range enlarges properly with $n$.

\subsection{Estimators for bounds on counterfactual means}
\label{app:subsec_all_cond_bounds}

In this section, we summarize the application 
of the procedure in Section~\ref{subsec:cond_method} 
to estimate other lower and upper bounds on 
counterfactual means. 

\begin{enumerate}[(a)]
\item \textbf{Upper bound of $\EE[Y(1)\given T=0]$:} Let $- \hat\mu_{1,0}^+$ 
be the estimator obtained from the procedure in Section~\ref{subsec:cond_method} with $-Y(1)$ replacing $Y(1)$. Then $\sqrt{n}(\hat{\mu}_{1,0}^+ - \mu_{1,0}^+)\rightsquigarrow N(0,\Var(\phi_{1,+}(X,Y,T)))$, 
with influence function 
\$
\phi_{1,+}(X_i,Y_i,T_i) = \frac{T_i}{p_1} r_{1,0} (X_i )\big[ {H}_{1,+}(X_i,-Y_i(1)) -  {h}_{1,+}(X_i )   \big] + \frac{1-T_i}{p_0}h_{1,+}(X_i),
\$
where $H_{1,+}(x,y) = \alpha_{1,+}^*(x)  f^*\big( \frac{y + \eta_{1,+}^*(x) }{- \alpha_{1,+}^*(x)} \big) + \eta_{1,+}^*(x)  + \alpha_{1,+}^*(x)   \rho$ with $(\alpha_{1,+}^*(x),\eta_{1,+}^*(x))$ 
being the minimizer of $\EE[\alpha   f^*\big( \frac{-Y(1) + \eta  }{- \alpha } \big) + \eta   + \alpha  \rho\given X=x,T=1]$, 
and  $h_{1,+}(x) = \EE[H_{1,+}(X,-Y(1))\given X=x,T=1]$. 
\item \textbf{Lower bound of $\EE[Y(0)\given T=1]$:} 
Let $ \hat\mu_{0,1}^-$ 
be the estimator obtained from the procedure in Section~\ref{subsec:cond_method} 
switching the role of treated and control groups. 
Then $\sqrt{n}(\hat{\mu}_{0,1}^- - \mu_{0,1}^-)\rightsquigarrow N(0,\Var(\phi_{0,-}(X,Y,T)))$
with influence function 
\$
\phi_{0,-}(X_i,Y_i,T_i) = \frac{1-T_i}{p_0}  r_{0,1} (X_i) \big[   {H}_{0,-}(X,Y(0)) -  {h}_{0,-}(X_i)\big] + \frac{T_i}{p_1}h_{0,-}(X_i).
\$
Here $H_{0,-}(x,y) = \alpha_{0,-}^*(x)  f^*\big( \frac{y + \eta_{0,-}^*(x) }{- \alpha_{0,-}^*(x)} \big) + \eta_{0,-}^*(x)  + \alpha_{0,-}^*(x)   \rho$, 
and  $(\alpha_{0,-}^*(x),\eta_{0,-}^*(x))$ 
is  the minimizer of $\EE[\alpha   f^*\big( \frac{ Y(0) + \eta  }{- \alpha } \big) + \eta   + \alpha  \rho\given X=x,T=0]$, 
and  $h_{0,-}(x) = \EE[H_{0,-}(X, Y(0))\given X=x,T=0]$.

\item \textbf{Upper bound of $\EE[Y(0)\given T=1]$:} 
Let $- \hat\mu_{0,1}^+$ 
be the estimator obtained from the procedure in Section~\ref{subsec:cond_method} 
switching the role of treated and control groups and replacing $Y(0)$ with $-Y(0)$. 
Then $\sqrt{n}(\hat{\mu}_{0,1}^+ - \mu_{0,1}^+)\rightsquigarrow N(0,\Var(\phi_{0,+}(X,Y,T)))$ 
with influence function 
\$
\phi_{0,+}(X_i,Y_i,T_i) = \frac{1-T_i}{p_0}  r_{0,1} (X_i ) \big[  {H}_{0,+}(X_i,-Y_i(0)) -  {h}_{0,+}(X_i ) \big]  + \frac{T_i}{p_1}  h_{0,+}(X_i),
\$  
where $H_{0,+}(x,y) = \alpha_{0,+}^*(x)  f^*\big( \frac{y + \eta_{0,+}^*(x) }{- \alpha_{0,+}^*(x)} \big) + \eta_{0,+}^*(x)  + \alpha_{0,+}^*(x)   \rho$ with $(\alpha_{0,+}^*(x),\eta_{0,+}^*(x))$ 
being the minimizer of $\EE[\alpha   f^*\big( \frac{-Y(0) + \eta  }{- \alpha } \big) + \eta   + \alpha  \rho\given X=x,T=0]$, 
and  $h_{0,+}(x) = \EE[H_{0,+}(X,-Y(0))\given X=x,T=0]$. 
\end{enumerate}

\section{Technical proofs}

\subsection{Proof of Proposition~\ref{prop:dual_cond}}
\label{app:subsec_dual_cond}

\begin{proof}[Proof of Proposition~\ref{prop:dual_cond}]
We first claim that solving~\eqref{eq:opt_cond} amounts to solving the following problem 
for each $x$:
\#\label{eq:opt_cond_x}
\min_{L(x) \textnormal{~measurable}}~
                &\E[Y(1) L(x)  \given X=x,T=1]\\
    \textnormal{s.t.}~&\E[L(x) \given X=x, T=1]=r_{1,0}(x)\\
                    &\E[f(L(x)/r_{1,0}(x))\given X=x,T=1]\le \rho.
\#
To be specific, 
denoting  the optimal objective of~\eqref{eq:opt_cond_x} as $\mu(x)$ 
and that of~\eqref{eq:opt_cond} as $\mu_{1,0}^-$, we 
are to show that $\mu_{1,0}^- = \EE[\mu(X)\given T=1]$. 
To see why it is the case, 
suppose $L^*$ is the optimizer of~\eqref{eq:opt_cond}, 
then it is measurable with respect to $X$ and $Y(1)$ 
and satisfies the constraints of~\eqref{eq:opt_cond}.
Then $L(x)(\cdot):= L^*(x,\dot)$ is measurable with respect to $Y(1)$, and satisfy 
the constraints of~\eqref{eq:opt_cond_x}. 
As a result, we have  $\EE[Y(1)L^*(x,Y(1))\given X=x,T=1] \geq \mu(x)$. 
Marginalizing over $X$ yields 
$\mu_{1,0}^- = \EE[Y(1)L(X,Y(1))\given T=1] \geq \EE[\mu(X)\given T=1]$.  
On the other hand, suppose $L^*(x)(\cdot)$ is measurable with respect to $Y(1)$ 
and is the minimizer for~\eqref{eq:opt_cond_x} 
for $\PP_{X\given T=1}$-almost all $x$. 
We let $L(x,y) = L^*(x)(y)$, so that it is measurable with respect 
to $(X,Y(1))$ and satisfy the constraints of~\eqref{eq:opt_cond}. 
Thus we have $\EE[L(X,Y(1))Y(1)\given T=1] = \EE[\mu(X)\given T=1] \geq \mu_{1,0}^-$. 
Combining the two directions leads to the equivalence.

In the following, we solve~\eqref{eq:opt_cond_x}  
and write $\EE_x$ in place of $\EE[\cdot\given X=x,T=1]$ for simplicity. 
Invoking~\citet[Theorem 8.6.1]{luenberger1997optimization} to this convex problem, we have 
\$
\min_{\substack{ \E_x[L]=r_{1,0}(x),\\ \E_x[f(L/r_{1,0}(X))]-\rho \leq 0}} \E_x[Y(1) L(x)] = \max_{\alpha\geq 0,\eta\in \R} ~\varphi(\alpha,\eta,x),
\$
where the Slater's condition is satisfied and strong duality holds, and 
\$
&\varphi(\alpha,\eta,x) = \inf_{L\geq 0~\textrm{measurable}} \calL(\alpha,\eta,L,x), \\ 
&\calL(\alpha,\eta,L,x) = \E_x[Y(1)L(x)] + \eta \E_x[L-r_{1,0}(x)] + \alpha\big( \E_x[f(L/r_{1,0}(x))] -\rho\big).
\$
The minimum of $\cL(\alpha,\eta,L,x)$ is thus given by 
\$
\varphi(\alpha,\eta,x) &= 
\EE_x\bigg[  \min_{z\geq 0} \big\{ Y(1) z + \eta z- \eta r_{1,0}(x) + \alpha f\big(z/r_{1,0}(x)\big) - \alpha\rho  \big\}  \bigg] \\
&= \EE_x\bigg[ -\alpha f^*\Big( \frac{r_{1,0}(x) }{- \alpha}\big(Y(1) + \eta\big) \Big) - \eta r_{1,0}(x) - \alpha \rho  \bigg].
\$
Now we write $\alpha(x)$ and $\eta(x)$ to emphasize its dependency on $x$. 
Therefore, by the equivalence discussed in the beginning, 
we have 
\$
\mu_{1,0}^- &= \EE\Big[  \max_{\alpha(X)\geq 0, \eta(X) \in \RR} \varphi\big(\alpha(X),\eta(X),X \big)   \Biggiven T=1\Big] \\
&= \EE\Big[ \varphi\big(\alpha^*(X),\eta^*(X),X\big)\Biggiven T=1\Big],
\$
where for $\PP_{X\given T=1}$-almost all $x$, 
\$
\big( \alpha^*(x),\eta^*(x) \big) \in \argmax{\alpha\geq 0, \eta\in \RR} ~\EE\bigg[   -\alpha f^*\Big( \frac{r_{1,0}(x) }{- \alpha}\big(Y(1) + \eta\big) \Big) - \eta r_{1,0}(x) - \alpha \rho \bigggiven X=x,T=1\bigg].
\$
With a change-of-variable from $\alpha(x)$ to $\alpha(x) r_{1,0}(x)$, 
we have 
\$
\big( \alpha^*(x)/r_{1,0}(x),\eta^*(x) \big) \in \argmax{\alpha\geq 0, \eta\in \RR} ~\EE\bigg[   -\alpha r_{1,0}(x) f^*\Big( \frac{Y(1) + \eta }{- \alpha} \Big) - \eta r_{1,0}(x) - \alpha r_{1,0}(x) \rho \bigggiven X=x,T=1\bigg].
\$
The minimum of~\eqref{eq:opt_cond} can thus be written as 
\$
\mu_{1,0}^- = - \EE\bigg[ r_{1,0}(x) \Big\{  \alpha^*(X)  f^*\Big( \frac{Y(1) + \eta^*(X) }{- \alpha^*(X)} \Big) + \eta^*(X)  + \alpha^*(X)   \rho \Big\} \bigggiven  T=1\bigg],
\$
where for $\PP_{X\given T=1}$-almost all $x$, it holds that 
\$
\big( \alpha^*(x),\eta^*(x) \big) \in \argmin{\alpha\geq 0, \eta\in \RR} ~\EE\bigg[  \alpha  f^*\Big( \frac{Y(1) + \eta }{- \alpha} \Big) + \eta + \alpha  \rho \bigggiven X=x,T=1\bigg].
\$
Therefore, we complete the proof of Proposition~\ref{prop:dual_cond}. 
\end{proof}

\subsection{Proof of convergence of sieve estimator}
\label{app:subsec_sieve}

\begin{proof}[Proof of Theorem~\ref{thm:sieve}]
We analyze the behavior of $\hat\theta^{(j)}$ for each fold $j$. 
As $|\cI_1^{(j)}|\asymp n$, we take the generic notation 
of $\hat\theta$ and sample size $n$, so that 
\$
\hat\EE_n \big[\ell(\hat\theta,X,Y(1))\big] \geq \inf_{\theta\in \Theta_n}  ~ \hat\EE_n \big[\ell(\theta,X,Y(1))\big]  - O_P((\frac{\log n}{n})^{2p/(2p+d)}),
\$
where $(X_i,Y_i) \sim \PP_{X,Y(1)\given T=1}$ are i.i.d.~data. 
For some fixed $b>0$, we denote the sequence 
\$
\delta_n := \inf\bigg\{  \delta \in (0,1)\colon \frac{1}{\sqrt{n}\delta^2} \int_{b\delta^2}^{\delta} \sqrt{\log N(\epsilon^{1+d/2p}, \Theta_n, \|\cdot\|_{L_2(\PP_{\cdot \given T=1})}) \ud \epsilon } \leq 1   \bigg\},
\$
where $N(\epsilon, \Theta_n, \|\cdot\|_{L_2(\PP_{\cdot\given T=1})})$ 
is the $\epsilon$-covering number of $\Theta_n$ in the $L_2$-norm 
under $\PP_{\cdot \given T=1}$. 
We employ the established convergence results for sieve estimators 
adapted from \citet[Theorem 3.2]{chen2007large} and \citet[Lemma B.3]{yadlowsky2018bounds}, stated in Lemma~\ref{lem:sieve_rate}. 

\begin{lemma}
    \label{lem:sieve_rate}
Let $\theta^*\in \Theta$ be a population risk minimizer. 
Supose  
there exists constants $c_1,c_2>0$ such that 
$c_1 \EE[\ell(\theta,X,Y)- \ell(\theta^*,X,Y)] \leq d(\theta,\theta^*)^2 
\leq c_2 \EE[\ell(\theta,X,Y)- \ell(\theta^*,X,Y)]$ 
for $\theta$ in a neighborhood of $\theta^*$. 
Suppose the following conditions hold: 
\begin{enumerate}[(i)]
    \item For sufficiently small $\epsilon>0$, $\Var(\ell(\theta,X,Y)- \ell(\theta^*,X,Y))\leq C_1\epsilon^2$ for all $\theta\in \Theta_n$ 
    such that $d(\theta,\theta^*)\leq \epsilon$. 
    \item For any $\delta>0$, there exists a constant $s\in (0,2)$ 
    and a measurable function $U_n(\cdot)$ such that $\sup_n \EE[U_n(X,Y)^2] \leq C_3$ and 
    $\sup_{\theta\in \Theta_n\colon d(\theta,\theta^*)\leq \delta}|\ell(\theta,X,Y) - \ell(\theta^*,X,Y)|\leq \delta^s U_n(X,Y)$  
    for  constant $C_3>0$. 
\end{enumerate}
Then $d(\hat\theta_n - \theta^*) = O_P(\max\{\delta_n, \inf_{\theta'\in \Theta_n}d(\theta',\theta^*))$. 
\end{lemma}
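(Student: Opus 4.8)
The plan is to establish Lemma~\ref{lem:sieve_rate} as a localized rate theorem for approximate sieve $M$-estimators, following the empirical-process template of \citet{chen2007large} and \citet{yadlowsky2018bounds}; the cleanest route is to verify the curvature and complexity hypotheses of those results and invoke their peeling machinery rather than re-derive the chaining bounds from scratch. Write $\theta_n^\dagger \in \argmin{\theta\in\Theta_n} d(\theta,\theta^*)$ for the best in-class approximation, so that $d(\theta_n^\dagger,\theta^*) = \inf_{\theta'\in\Theta_n} d(\theta',\theta^*) =: \delta_{\mathrm{app}}$, set $r_n := \max\{\delta_n,\delta_{\mathrm{app}}\}$, and let $\mathbb{G}_n g := \sqrt{n}(\hat\EE_n - \EE)[g]$ denote the empirical process indexed by the centered losses $g_\theta := \ell(\theta,X,Y) - \ell(\theta^*,X,Y)$. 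The goal is to show $d(\hat\theta_n,\theta^*) = O_P(r_n)$, i.e. that $\limsup_n \PP\big(d(\hat\theta_n,\theta^*) > M r_n\big)\to 0$ as $M\to\infty$.

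First I would record the basic inequality. Since $\hat\theta_n$ is an (approximate) empirical minimizer over $\Theta_n$ and $\theta_n^\dagger$ is feasible, $\hat\EE_n[g_{\hat\theta_n}] \le \hat\EE_n[g_{\theta_n^\dagger}] + o_P(\text{slack})$; subtracting and re-centering gives the master bound
\[
\EE[g_{\hat\theta_n}] \le \frac{1}{\sqrt n}\Big(\mathbb{G}_n[g_{\theta_n^\dagger}] - \mathbb{G}_n[g_{\hat\theta_n}]\Big) + \EE[g_{\theta_n^\dagger}] + o_P(\text{slack}).
\]
The two-sided curvature assumption then converts population excess risk into the metric: $\EE[g_{\hat\theta_n}] \ge c_1^{-1} d(\hat\theta_n,\theta^*)^2$ on the left, and $\EE[g_{\theta_n^\dagger}] \le c_2^{-1} d(\theta_n^\dagger,\theta^*)^2 = c_2^{-1}\delta_{\mathrm{app}}^2$ on the right. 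The term $\mathbb{G}_n[g_{\theta_n^\dagger}]/\sqrt n$ is lower order because condition (i) bounds its variance by $O(\delta_{\mathrm{app}}^2)$. It therefore remains to control the fluctuation $\mathbb{G}_n[g_{\hat\theta_n}]$ uniformly over the localized region where $d(\hat\theta_n,\theta^*)$ is large.

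The heart of the argument is a peeling (slicing) device combined with a localized maximal inequality. On the event $\{d(\hat\theta_n,\theta^*) > M r_n\}$ I would decompose the sieve into dyadic shells $S_j := \{\theta\in\Theta_n : 2^{j-1}Mr_n < d(\theta,\theta^*) \le 2^j M r_n\}$, and on each shell bound $\sup_{\theta\in S_j}|\mathbb{G}_n[g_\theta]|$ by Dudley's entropy integral: condition (i) supplies the variance bound $\Var(g_\theta)\lesssim \epsilon^2$ that localizes the modulus of continuity, while condition (ii) supplies the $L^2$ envelope $\delta^s U_n$ needed for the Bernstein/chaining step. The critical radius $\delta_n$ is defined precisely so that the entropy integral balances the quadratic curvature gain $\asymp (2^{j-1}Mr_n)^2$ earned on $S_j$; summing the resulting shell probabilities, which decay geometrically in $j$ and $M$, shows the master bound forces $d(\hat\theta_n,\theta^*)^2 \lesssim r_n^2$ with probability tending to one.

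I expect the main obstacle to be the localized maximal inequality and the bookkeeping that the entropy integral defining $\delta_n$---with its weighting $\epsilon^{1+d/2p}$ inside the covering number and the $\int_{b\delta^2}^{\delta}$ truncation---correctly matches the variance and envelope growth of conditions (i)--(ii), so that the stochastic fluctuation is genuinely dominated by the curvature term at scale $r_n$. Rather than reprove the chaining estimate, I would verify that the hypotheses here coincide with those of \citet[Theorem 3.2]{chen2007large} and \citet[Lemma B.3]{yadlowsky2018bounds} and cite their conclusion, carrying the optimization slack (of order $r_n^2$ in the application to Theorem~\ref{thm:sieve}) through as a harmless lower-order additive term.
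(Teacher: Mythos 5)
Your proposal matches the paper's treatment: the paper states Lemma~\ref{lem:sieve_rate} without proof, presenting it as an adaptation of \citet[Theorem 3.2]{chen2007large} and \citet[Lemma B.3]{yadlowsky2018bounds}, which is precisely the citation route you settle on, and your basic-inequality-plus-curvature-plus-peeling sketch is a faithful outline of how those cited results are established. (One cosmetic slip: with the stated two-sided bound $c_1 \EE[g_\theta] \le d(\theta,\theta^*)^2 \le c_2 \EE[g_\theta]$, the curvature lower bound is $\EE[g_{\hat\theta_n}] \ge d(\hat\theta_n,\theta^*)^2/c_2$ and the approximation upper bound is $\EE[g_{\theta_n^\dagger}] \le \delta_{\mathrm{app}}^2/c_1$, i.e.\ your constants are swapped, which is immaterial to the rate.)
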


We define the distance 
as $L_2$-norm 
$d(\theta,\theta') = \|\theta-\theta'\|_{L_2(\PP)}$, 
and verify the conditions in Lemma~\ref{lem:sieve_rate}. 
We define $\Theta = \Lambda_c^p(\cX ) \times \Lambda_c^p(\cX)$ 
without truncation.  
The upper bound $\EE[\ell(\theta,X,Y)-\ell(\theta^*,X,Y)\given T=1]$ is 
directly implied by Assumption~\ref{assump:sieve}.  
By the $\lambda$-strong convexity of  $\EE\big[ \ell((a,b),x,Y) \biggiven X=x  \big]$ 
is  at $(a,b) = \theta^*(x)$,  
\$
\EE\big[ \ell(\theta (x),x,Y) \biggiven X=x  \big] - \EE\big[ \ell(\theta^*(x),x,Y) \biggiven X=x  \big] \geq \lambda \big(\theta(x)-\theta^*(x)\big)^2.
\$
Integrating over $X$ yields $\EE [ \ell(\theta (X),X,Y) ] - \big[ \ell(\theta^*(X),X,Y) ] \geq c'' d(\theta,\theta^*)$ for some constant $c''>0$. 

We then check condition (i).  
By the positive density condition, we have $\|\cdot\|_{L_2(\lambda)} \asymp \|\cdot\|_{L_2(\PP)}$. 
Hence $\|\theta - \theta^*\|_{\infty} = o(1)$ 
once $\|\theta-\theta^*\|_{L_2(\PP)} = o(1)$. 
By Lemma 2 of~\citet{chen1998sieve}, we have 
$\|\theta\|_\infty \lesssim \|\theta\|_{L_2(\lambda)}^{2p/(2p+d)}$ for any 
$\theta\in \Theta$, where $\lambda$ is the Lebesgue measure.  
Therefore, 
sufficiently small $\|\theta-\theta^*\|_{L_2(\PP)}$  
implies sufficiently small $\|\theta-\theta^*\|_{\infty}$. 
Since 
for $\|\theta-\theta^*\|_{\infty}$ sufficiently small, 
$|\ell(\theta,x,y)- \ell(\theta^*,x,y)|\leq \bar{\ell}(x,y) (\theta(x)-\theta^*(x))$ where $\EE[\bar\ell(x,Y)^2\given X=x] \leq M$ for all $x$, 
we have 
\$
\Var\big(  \ell(\theta,X,Y)- \ell(\theta^*,X,Y) \big) 
\leq \EE\big[ |\ell(\theta,x,y)- \ell(\theta^*,x,y)|^2        \big] 
\leq M \EE\big[ \big(\theta(X)-\theta^*(X))^2   \big] \leq M \epsilon^2
\$
for all $\theta\in \Theta_n$ such that $d(\theta,\theta^*)\leq \epsilon$ 
for sufficiently small $\epsilon>0$. 
Condition (ii) follows from the same argument by taking 
$U_n(x,y) = \bar\ell(x,y)$. 
Therefore, applying Lemma~\ref{lem:sieve_rate} we have 
$\|\hat\theta_n - \theta^*\|_{L_2(\PP)} = O_P(\max\{\delta_n ,\inf_{\theta'\in \Theta_n}d(\theta',\theta^*)\})$. Here according to~\citet{chen1998sieve} 
and~\citet{geer2000empirical}, we have 
\$
\log N\big( \epsilon, \Theta_n^\eta, \|\cdot \|_{2,\PP}) \lesssim \textrm{dim}(\Theta_n^\eta) \log \frac{1}{\epsilon},
\$
where $\textrm{dim}(\Theta_n^\eta) = J_n^p$. 
Since truncation is a contraction map, the 
covering number of $\Theta_n^\alpha $ is upper bounded by the above quantity. 
As a result, we have 
\$
\log N\big( \epsilon, \Theta_n , \|\cdot \|_{2,\PP}) \lesssim J_n^p \log \frac{1}{\epsilon}.
\$
Similar to the results in~\citet{yadlowsky2018bounds}, we have 
\$
\delta_n \asymp \sqrt{\frac{J_n^d \log n}{n}}.
\$
We finally bound the approximation error using $\Theta_n$. 
Note that we take $\Theta_n$ to be truncated at $\epsilon$. 
However, since the population minimizer $\theta^*$ is uniformly  
bounded above $\epsilon$, since 
truncation is a contraction map, 
we have 
$\inf_{\theta\in \Theta_n} \|\theta-\theta^*\|_{L_2(\PP)} 
\leq \inf_{\theta \in \Theta_n^\eta \times \Theta_n^\eta} \|\theta-\theta^*\|_{L_2(\PP)}  \leq O(J_n^p)$, where 
the last inequality is a well-established result, 
see, e.g.,~\citet{timan2014theory}.  
We now set $J_n = (n/\log n)^{1/(2p+d)}$, so that 
$\|\hat\theta - \theta^* \|_{L_2(\PP)} = O_P((\log n/n)^{p/(2p+d)})$. 
This completes our proof. 
\end{proof}

\subsection{Proof of Theorem~\ref{thm:double_consist}}
\label{app:subsec_double_consist}

\begin{proof}[Proof of Theorem~\ref{thm:double_consist}]
We consider the general scenario where $(\hat\alpha^{(j)},\hat\eta^{(j)})$ 
converges in sup-norm to some fixed $(\alpha^\diamond, \eta^\diamond)$, 
and show that 
$-\hat\mu_{1,0}^{(j)}\stackrel{P}{\to} \EE[r(X)\ell(\theta^\diamond(X),X,Y(1)\given T=1]$ for any fixed $j$, 
where  the risk function $\ell$ is 
defined in Proposition~\ref{prop:dual_cond}. 
In the following, we drop the dependency on $j$ for notational convenience; 
we are to show that with estimators $\hat{r}$, $\hat{H}$ and $\hat{h}$ 
that are independent of $\cI_1$ and $\cI_0$, 
\#\label{eq:consist_target}
\hat\mu:= \frac{1}{|\cI_1 |} \sum_{i\in \cI_1 }
\hat{r} (X_i) \big(\hat{H} (X_i,Y_i) - \hat{h} (X_i)\big) + \frac{1}{|\cI_0 |} \sum_{i\in \cI_0 } \hat{h} (X_i) \stackrel{P}{\to} \EE\big[r(X )   \ell ( \theta^\diamond(X), X ,Y(1))  \biggiven T=1\big].
\#
Therefore, if $\theta^\diamond = \theta^*$, we have $\hat\mu_{1,0} = \mu_{1,0}^-+o_P(1)$
since $\mu_{1,0}^- = - \EE\big[r(X )   \ell ( \theta^*(X), X ,Y(1))  \biggiven T=1\big]$ 
by Proposition~\ref{prop:dual_cond}. 
Otherwise, since 
$\EE\big[r(X )   \ell ( \theta^*(X), X ,Y(1))  \biggiven T=1\big]\leq \EE\big[r(X )   \ell ( \theta^\diamond(X), X ,Y(1))  \biggiven T=1\big]$, 
we have the one-sided validity that 
$\hat\mu_{1,0} \stackrel{P}{\to} - \EE\big[r(X )   \ell ( \theta^\diamond(X), X ,Y(1))  \biggiven T=1\big] \leq \mu_{1,0}^-$, i.e., 
our estimator converges to a valid lower bound. 

It thus remains to show~\eqref{eq:consist_target}. We prove the results 
when either  $\hat{r}$ or $\hat{h}$ is consistent.

\paragraph{Consistent $\hat{r}$.}
We first show the case where $\hat{r}$ is consistent for $r_{1,0}$,
but not necessarily the regression function $\hat{h}$. 
Recall that $\hat{H}(x,y) = \ell(\hat\theta(x),x,y)$. 
Note that 
\$
\hat\mu = \frac{1}{|\cI_1 |} \sum_{i\in \cI_1 }
\big(\hat{r} (X_i) - r(X_i) \big)\big(\hat{H} (X_i,Y_i) - \hat{h} (X_i)\big) + \frac{1}{|\cI_1 |} \sum_{i\in \cI_1 }
  r(X_i)  \big(\hat{H} (X_i,Y_i) - \hat{h} (X_i)\big) + \frac{1}{|\cI_0 |} \sum_{i\in \cI_0 } \hat{h} (X_i).
\$
The first summation can be controlled as (where the expectation is  
implicitly conditional on other folds except $\cI_0^{(j)}\cup \cI_1^{(j)}$)
\$
&\EE\bigg[\Big(\frac{1}{|\cI_1 |} \sum_{i\in \cI_1 }
\big(\hat{r} (X_i) - r_{1,0}(X_i) \big)\big(\hat{H} (X_i,Y_i) - \hat{h} (X_i)\big) \Big)^2  \bigg] \\ 
&\leq \EE\Big[   \big(\hat{r} (X_i) - r_{1,0}(X_i) \big)^2 \EE\big[ \big(\hat{H} (X_i,Y_i) - \hat{h} (X_i)\big)^2 \biggiven X_i\Big]\bigg] \leq M\cdot \|\hat{r} - r_{1,0}\|_{L_2(\PP_{X\given T=1})}^2 = o_P(1).
\$
Invoking Lemma~\ref{lem:cond_to_op}, we can drop the conditioning and the first summation is $o_P(1)$. 
On the other hand, since the covariate shift between $\PP_{X\given T=1}$ and $\PP_{X\given T=0}$ is 
exactly $r_{1,0}$, we know that $\EE[\hat{h}(X)r(X)\given T=1] = \EE[\hat{h}(X)\given T=0]$, 
where we still implicitly condition on other folds. As a result, 
\$
&- \frac{1}{|\cI_1 |} \sum_{i\in \cI_1 }
  r_{1,0}(X_i)   \hat{h} (X_i) + \frac{1}{|\cI_0 |} \sum_{i\in \cI_0 } \hat{h} (X_i)\\
&= - \frac{1}{|\cI_1 |} \sum_{i\in \cI_1 }
  \big( r_{1,0}(X_i)   \hat{h} (X_i) -\EE\big[\hat{h}(X)r(X)\biggiven T=1\big] \big) + \frac{1}{|\cI_0 |} \sum_{i\in \cI_0 }\big(  \hat{h} (X_i) - \EE\big[ r(X)\biggiven T=0\big] \big),
\$
where both terms are unbiased. Thus by Cauchy-Schwarz inequality, 
\$
&\EE\bigg[\Big(- \frac{1}{|\cI_1 |} \sum_{i\in \cI_1 }
  r_{1,0}(X_i)   \hat{h} (X_i) + \frac{1}{|\cI_0 |} \sum_{i\in \cI_0 } \hat{h} (X_i)\Big)^2  \bigg]\\
&\leq \frac{2}{|\cI_1|} \Var\big( r_{1,0}(X )   \hat{h} (X ) \biggiven T=1 \big) + \frac{2}{|\cI_0|} \Var\big( \hat{h} (X ) \biggiven T=0 \big) = o_P(1)
\$
invoking the assumption that $\hat{h}$ as finite second moment. 
Drop the conditioning by Lemma~\ref{lem:cond_to_op}, we know that this summation is also $o_P(1)$, 
hence 
\$
\hat\mu &= \frac{1}{|\cI_1 |} \sum_{i\in \cI_1 }
  r(X_i)  \hat{H} (X_i,Y_i)  + o_P(1) \\
  &= \frac{1}{|\cI_1 |} \sum_{i\in \cI_1 }
  r(X_i)   \ell\big( \theta^\diamond(X_i), X_i,Y_i\big) 
  +\frac{1}{|\cI_1 |} \sum_{i\in \cI_1 }
  r(X_i) \big\{  \ell\big(\hat\theta(X_i), X_i,Y_i\big) - \ell\big( \theta^\diamond(X_i), X_i,Y_i\big) \big\} + o_P(1)\\ 
  &= \EE\big[r(X_i)   \ell ( \theta^\diamond(X), X ,Y(1))  \biggiven T=1\big]+\frac{1}{|\cI_1 |} \sum_{i\in \cI_1 }
  r(X_i) \big\{  \ell\big(\hat\theta(X_i), X_i,Y_i\big) - \ell\big( \theta^\diamond(X_i), X_i,Y_i\big) \big\} + o_P(1).
\$ 
Finally, once $ \|\hat\theta -\theta^\diamond \big\|_{\infty,\PP_{X\given T=1}} = o_P(1)$, 
by the local expansion around $\theta^\diamond(x)$, we have 
\$
\Big|\ell\big(\hat\theta(X_i), X_i,Y_i\big) - \ell\big( \theta^\diamond(X_i), X_i,Y_i\big)\Big| 
\leq M(X_i,Y_i) \big\|\hat\theta(X_i)-\theta^\diamond(X_i)\big\|_2,
\$
hence (implicitly conditioning on other folds) we have 
\$
 &\EE\bigg[ \Big(\frac{1}{|\cI_1 |} \sum_{i\in \cI_1 }
  r(X_i) \big\{  \ell\big(\hat\theta(X_i), X_i,Y_i\big) - \ell\big( \theta^\diamond(X_i), X_i,Y_i\big) \big\}  \Big)^2\bigg] \\ 
  &\leq \EE\Big[ r(X)^2 M(X ,Y(1))^2   \big\|\hat\theta(X_i)-\theta^\diamond(X_i)\big\|_2^2  \Big] = o_P(1)
\$
since $\EE[M(X,Y(1))^2\given T=1]\leq M$ for some constant $M>0$. 
We've thus completed the proof of~\eqref{eq:consist_target}.

\paragraph{Consistent $\hat{h}$.} We then show the results when $\hat{h}$ is consistent, 
but not necessarily $\hat{r}$. 
In this case, $\|\hat{h}-\bar{h}\|_{L_2(\PP_{X\given T=1})}=o_P(1)$, 
where $\bar{h} = \EE[\hat{H}(X,Y(1))\given X=x,T=1]$ viewing $\hat{H}$ as fixed. 
Note that 
\$
\hat\mu = \frac{1}{|\cI_1 |} \sum_{i\in \cI_1 }
\hat{r} (X_i) \big(\hat{H} (X_i,Y_i) - \bar{h}(X_i) \big)  + \frac{1}{|\cI_1 |} \sum_{i\in \cI_1 }\hat{r}(X_i) \big( \bar{h}(X_i) -  \hat{h}(X_i) \big)  + \frac{1}{|\cI_0 |} \sum_{i\in \cI_0 } \hat{h} (X_i).
\$
The first summation is unbiased conditional on other folds, hence 
\$
\EE\bigg[ \Big( \frac{1}{|\cI_1 |} \sum_{i\in \cI_1 }
\hat{r} (X_i) \big(\hat{H} (X_i,Y_i) - \bar{h}(X_i) \big)\Big)^2\bigg] 
= \frac{1}{|\cI_1|} \Var\big( \hat{r} (X) \{\hat{H} (X,Y(1)) - \bar{h}(X ) \} \biggiven T=1 \big) = o_P(1)
\$
due to the finite second moments. By Cauchy-Schwarz inequality, the second summation satisfies 
\$
\EE\bigg[\Big(\frac{1}{|\cI_1 |} \sum_{i\in \cI_1 }\big( \bar{h}(X_i) -  \hat{h}(X_i) \big) \Big)^2\bigg]
\leq \big\|\hat{r}\cdot(\hat{h} - \bar{h})\big\|_{L_2(\PP_{X\given T=1})}^2 = o_P(1)
\$
due to the boundedness of $\hat{r}$. 
Similarly, we know that $\frac{1}{|\cI_0 |} \sum_{i\in \cI_0 } \bar{h} (X_i) - \hat{h}(X_i) = o_P(1)$. 
Consequently, invoking Lemma~\ref{lem:cond_to_op} we drop the implicit conditioning 
and arrive at 
\$
\hat\mu &= \frac{1}{|\cI_0 |} \sum_{i\in \cI_0 } \bar{h} (X_i) + o_P(1) = \EE\big[ \bar{h}(X)\given T=0] + o_P(1)
\$  
further using the finite second moment of $\bar{h}$ (or $\hat{H}$), where we implicitly condition on other folds 
and view $\hat{h}$ as fixed. Finally, denoting $h^\diamond (x) = \EE[\ell(\theta^\diamond(x),x,Y(1))\given X=x,T=1]$, we note that by Jensen's inequality, 
\$
\Big|\EE\big[ \bar{h}(X)\given T=0] - \EE\big[ h^\diamond (X)\biggiven T=0 \big]\Big|^2
\leq \|\bar h - h^\diamond \|_{L_2(\PP_{X\given T=0})}^2 
\leq \big\| \hat{H}(X,Y(1)) - \ell(\theta^\diamond(X),X,Y(1))\big\|_{L_2(\PP_{\cdot\given T=0})}^2.
\$
By the same argument as the previous case and due to the uniform boundedness of 
the covariate shift $r_{1,0}(\cdot)$, the above term is $o_P(1)$. 
Therefore, by the change-of-measure with $r_{1,0}$, we have 
\$
\hat\mu &=  \EE\big[  {h}^\diamond(X)\given T=0] + o_P(1)
= \EE\big[  r(X) {h}^\diamond(X)\given T=1] + o_P(1) 
= \EE\big[r(X )   \ell ( \theta^\diamond(X), X ,Y(1))  \biggiven T=1\big] + o_P(1)
\$
by the tower property of conditional expectations. 
We thus complete  the proof of two cases 
and  conclude the proof of Theorem~\ref{thm:double_consist}. 
\end{proof}

\subsection{Proof of Theorem~\ref{thm:cond}}
\label{app:subsec_thm_cond}

\begin{proof}[Proof of Theorem~\ref{thm:cond}]
We  show that for each $j$, we have 
$\hat\mu_{1,0}^{(j)} = \hat\mu_{1,0}^{*,(j)}+o_P(1/\sqrt{n})$, 
where 
\$
\hat\mu_{1,0}^{*(j)} = \frac{1}{|\cI_1^{(j)}|} \sum_{i\in \cI_1^{(j)}}
r_{1,0} (X_i) \big( {H}(X_i,Y_i) -  {h}(X_i)\big) + \frac{1}{|\cI_0^{(j)}|} \sum_{i\in \cI_0^{(j)}} {h}(X_i),
\$
and we define 
\$
H(x,y) = \alpha^*(x)  f^*\Big( \frac{y + \eta^*(x) }{- \alpha^*(x)} \Big) + \eta^*(x)  + \alpha^*(x)   \rho,\quad h(x) = \EE\big[H(X,Y(1))\biggiven X=x,T=1\big].
\$
We show this result for any $j$; 
we implicitly condition on all the remaining folds 
other than $\cI_1^{(j)}$ and $\cI_0^{(j)}$, 
so that all nuisance components are viewed as fixed. 
To simplify notations, we write $\cI_1 := \cI_1^{(j)}$, 
$\cI_0 := \cI_0^{(j)}$ and $r:=r_{1,0}$, $\hat{r}:= \hat{r}^{(j)}$, $\hat{h}:= \hat{h}^{(j)}$, $\hat{H}:= \hat{H}^{(j)}$, $\bar{h}:=\bar{h}^{(j)}$. 
We also represent the parameters (functionals) with 
\$
\hat\theta(\cdot) = \big(\hat\alpha(\cdot),\hat\eta(\cdot)\big) 
:= \big(\hat\alpha^{(j)}(\cdot), \hat\eta^{(j)}(\cdot)\big),\quad 
\theta^*(\cdot) := \big( \alpha^*(\cdot), \eta^*(\cdot)\big),
\$
and recall the generic function (where $\theta = (\alpha(\cdot),\eta(\cdot))$)
\$
\ell(\theta,x,y) = \alpha(x) f^*\Big( \frac{y+\eta(x)}{-\alpha(x)} \Big) + \eta(x) + \alpha(x) \rho,
\$
so that $H(x,y) =\ell(\theta^*,x,y)$ and $\hat{H}(x,y) = \ell(\hat\theta,x,y)$. 
By definition, 
we have the decomposition 
\$
\hat\mu_{1,0}^{(j)} - \hat\mu_{1,0}^{*(j)} 
&= \frac{1}{|\cI_1|} \sum_{i \in \cI_1 } \Big[ \hat{r}(X_i) \big(\hat{H}(X_i,Y_i) - \hat{h}(X_i) \big) - r (X_i) \big( H(X_i,Y_i) - h(X_i) \big) \Big] + 
\frac{1}{|\cI_0|} \sum_{i\in \cI_0} \big( \hat{h}(X_i) - h(X_i)\big)\\
&= \frac{1}{|\cI_1|} \sum_{i \in \cI_1 }  
r(X_i) \big(\hat{H}(X_i,Y_i) -  H(X_i,Y_i)\big) 
- \frac{1}{|\cI_1|} \sum_{i \in \cI_1 }  \big(\hat{r}(X_i)-r(X_i)\big) \big(\hat{h}(X_i) - \bar{h}(X_i)\big) \\
&\qquad + \frac{1}{|\cI_1|} \sum_{i \in \cI_1 }  \big(\hat{r}(X_i)-r(X_i)\big) \big( \hat{H}(X_i,Y_i) - \bar{h}(X_i) \big) \\
&\qquad - \frac{1}{|\cI_1|} \sum_{i \in \cI_1 }   r(X_i) \big(\hat{h}(X_i) - h(X_i)\big)  + 
\frac{1}{|\cI_0|} \sum_{i\in \cI_0} \big( \hat{h}(X_i) - h(X_i)\big).
\$
In the following, we are to bound the 
several summations sparately. 
Firstly, 
by Cauchy-Schwarz inequality, 
\$
\bigg| \frac{1}{|\cI_1|} \sum_{i \in \cI_1 }  \big(\hat{r}(X_i)-r(X_i)\big) \big(\hat{h}(X_i) - \bar{h}(X_i)\big)\bigg|
&\leq \sqrt{\frac{1}{|\cI_1|} \sum_{i \in \cI_1 }  \big(\hat{r}(X_i)-r(X_i)\big)^2 }\sqrt{\frac{1}{|\cI_1|} \sum_{i \in \cI_1 }  \big(\hat{h}(X_i) - \bar{h}(X_i)\big)^2 } \\
&= O_P\big( \|\hat{r}-r\|_{L_2(\PP_{X\given T=1})} \cdot \|\hat{h}-h\|_{L_2(\PP_{X\given T=1})} \big) = o_P(1/\sqrt{n})
\$
under the given convergence rate of the product. 
Since $\bar{h}(x) = \EE\big[\hat{H}(X,Y(1))\given X=x,T=1]$ 
for the fixed function $\hat{H}$, the term 
$\big(\hat{r}(X_i)-r(X_i)\big) \big(\hat{H}(X_i,Y_i) - \bar{h}(X_i)\big)$ 
has mean zero, hence by Markov's inequality, 
\$
\frac{1}{|\cI_1|} \sum_{i \in \cI_1 }  \big(\hat{r}(X_i)-r(X_i)\big) \big(\hat{h}(X_i) - \bar{h}(X_i)\big) = O_P\big(\sqrt{ \Var(\hat{r}(X_i)-r(X_i) )  (\hat{H}(X_i,Y_i)- \bar{h}(X_i) ))}/\sqrt{n}\big),
\$
where by the consistency of $\hat{r}$, this term is $o_P(1/\sqrt{n})$. 
Furthermore, note that
\#\label{eq:h_contrast}
& \frac{1}{|\cI_1|} \sum_{i \in \cI_1 }   r(X_i) \big(\hat{h}(X_i) - h(X_i)\big)  - 
\frac{1}{|\cI_0|} \sum_{i\in \cI_0} \big( \hat{h}(X_i) - h(X_i)\big) \notag\\
&= \frac{1}{|\cI_1|} \sum_{i \in \cI_1 } \Big(   r(X_i) \big(\hat{h}(X_i) - h(X_i)\big)
- \EE\big[ r(X) (\hat{h}(X)-h(X))\biggiven T=1 \big] \Big) \notag \\
& \qquad - 
\frac{1}{|\cI_0|} \sum_{i\in \cI_0} \big( \hat{h}(X_i) - h(X_i) - \EE[\hat{h}(X_i) - h(X_i)\given T=0]\big),
\#
where we use the equivalence of the two expectations: 
this is because there is a covariate shift $r(X)$ from $\PP_{X\given T=1}$ 
to $\PP_{X\given T=0}$, 
hence $\EE[\phi(X)r(X)\given T=1]= \EE[\phi(X)\given X=0]$ 
for any integrable function $\phi\colon \cX\to \RR$. 
The two summations in~\eqref{eq:h_contrast} is thus 
both unbiased, indicating 
\$
& \frac{1}{|\cI_1|} \sum_{i \in \cI_1 }   r(X_i) \big(\hat{h}(X_i) - h(X_i)\big)  - 
\frac{1}{|\cI_0|} \sum_{i\in \cI_0} \big( \hat{h}(X_i) - h(X_i)\big) \notag\\
&= O_P\big(\sqrt{\Var(r(X)(\hat{h}(X)-h(X))\given T=1)/n} 
+ \sqrt{\Var(r(X)(\hat{h}(X)-h(X))\given T=0)/n}\big) \\
&= O_P \big(\|r(X)(\hat{h}(X)-h(X))\|_{L_2(\PP_{X\given T=1})}/\sqrt{n} + \|\hat{h}-h\|_{L_2(\PP_{X\given T=0})}/\sqrt{n}\big) = o_P(1/\sqrt{n}),
\$
where the last equality follows from 
the $L_2$-consistency of $\hat{h}$ to $\bar{h}$ and 
the fact that $\|\bar{h} - h\|_{L_2(\PP_{X\given T=1})}$ by 
the stability of the conditional expectations
induced by the stability of $g$ in Assumption~\ref{assump:cond_regularity}.
Finally, we turn to 
\$
\frac{1}{|\cI_1|} \sum_{i \in \cI_1 }  
r(X_i) \big(\hat{H}(X_i,Y_i) -  H(X_i,Y_i)\big).
\$
Since $\ell(\theta,x,y)$ is a convex function in $\theta$ for any $(x,y)$, 
for any ($\PP_{X\given T=1}$-almost all) $x$, 
\$
\EE\big[ \ell(\theta,x,y) \biggiven X=x,T=1\big] = \alpha(x) \EE\bigg[f^*\Big( \frac{Y(1)+\eta(x)}{-\alpha(x)} \Big)\bigggiven X=x,T=1\bigg] + \eta(x) + \alpha(x) \rho 
\$
is also convex and differentiable by the given regularity condition. 
In particular, by the optimality of $(\alpha^*(x),\eta^*(x))$ 
for the per-$x$ minimization problem and the 
exchangeability of differentiation and expectation, 
\$
\nabla_\theta\EE\big[ \ell(\theta,x,Y(1)) \biggiven X=x,T=1\big] \big|_{\theta = (\alpha^*(x),\eta^*(x))} = \EE\big[ \nabla_\theta \ell(\theta^*(x),x,Y(1)) \biggiven X=x,T=1\big] = 0. 
\$
Multiplying $r(X)$ and integrating over $X\given T=1$, we know that 
\#\label{eq:zero_grad}
\EE\big[ r(X)\nabla_\theta \ell(\theta^*(X),X,Y(1)) [\hat\theta(X) - \theta^*(X) ]  \biggiven T=1\big] = 0.
\#
By Lemma 2 of~\citet{chen1998sieve}, when both $\hat\theta$ and $\theta^*$ is smooth enough,  
sufficiently small $\|\hat\theta-\theta^*\|_{L_2(\PP)}$  
implies sufficiently small $\|\hat\theta-\theta^*\|_{\infty}$. 
As a result, when $\|\hat\theta(x) - \theta^*(x)\|_{L_2(\PP_{\cdot \given T=1})}$ is sufficiently small, 
by the condition that 
$\big|\ell(\hat\theta,x,y) - \ell(\theta^*,x,y) - \nabla_\theta \ell(\theta^*(x),x,y)[\theta^*(x)-\hat\theta(x)]\big|\leq M(x,y) \|\hat\theta(x) - \theta^*(x)\|_2^2$ 
as well as Jensen's inequality, 
we have 
\$
&\Big|  \EE\big[ \hat{H}(X,Y(1)) - H(X,Y(1)) \biggiven T=1\big] \Big| \\
&= \bigg| \EE\Big[  r(X )\Big\{ \ell\big(\hat\theta,X,Y(1)\big) - \ell\big(\theta^*,X,Y(1)\big) - \nabla_\theta \ell\big(\theta^*(X),X,Y(1)\big)\big[\hat\theta(X) - \theta^*(X)\big] \Big\} \Biggiven T=1\Big] \bigg| \\
&\leq  \EE\Big[ r(X)\big| \ell\big(\hat\theta,X,Y(1)\big) -  \ell\big(\theta^*,X,Y(1)\big) - \nabla_\theta \ell\big(\theta^*(X),X,Y(1)\big)\big[\hat\theta(X) - \theta^*(X)\big] \big| \Biggiven T=1\Big]    \\
&\leq    \EE\big[ r(X)M(X,Y(1)) \|\hat\theta(X) - \theta^*(X)\|_{2}^2 \biggiven T=1 \big]
= M \cdot \| r(  \hat\theta - \theta ) \|_{L_2(\PP_{X\given T=0})}^2. 
\$
Returning to our problem, we note that 
due to unbiasedness, 
\$
&\frac{1}{|\cI_1|} \sum_{i \in \cI_1 }  
\Big[ 
r(X_i) \big(\hat{H}(X_i,Y_i) -  H(X_i,Y_i)\big)
-  \EE\big[ \hat{H}(X,Y(1)) - H(X,Y(1)) \biggiven T=1\big] \Big]\\
&= O_P\big(\|r(X)\hat{H}(X,Y(1))-H(X,Y(1))\|_{L_2(\PP_{X\given T=1})}/\sqrt{n}\big),
\$
where by the given conditions, we have 
\$
\|r(X)\hat{H}(X,Y(1))-H(X,Y(1))\|_{L_2(\PP_{X\given T=1})} 
= O\big(\|\hat\theta - \theta^*\|_{L_2(\PP_{X\given T=1})}\big) = o_P(1).
\$
As a result, we have 
\$
\frac{1}{|\cI_1|} \sum_{i \in \cI_1 }  
r(X_i) \big(\hat{H}(X_i,Y_i) -  H(X_i,Y_i)\big) \leq o_P(1/\sqrt{n}) + 
M \|  \hat\theta - \theta  \|_{L_2(\PP_{X\given T=0})}^2 = o_P(1/\sqrt{n}).
\$
Putting these pieces together, we conclude 
the proof of $\hat\mu_{1,0}^{(j)} = \hat\mu_{1,0}^{*,(j)}+o_P(1/\sqrt{n})$ 
for each $j$. Therefore, averaging over the three folds, we have 
\$
\sqrt{n}\big( \hat\mu_{1,0}^- - \mu_{1,0}^-\big) = \frac{\sqrt{n}}{n_1} \sum_{T_i=1}
\Big(r_{1,0} (X_i) \big( {H}(X_i,Y_i) -  {h}(X_i) \big)- \mu_{1,0}^-\Big) + \frac{\sqrt{n}}{n_0} \sum_{T_i=0} {h}(X_i) + o_P(1/\sqrt{n}),
\$
which, by CLT and Slutsky's theorem, converges in distribution to $N(0,\sigma^2)$. Here $n_1$ is the total number of treated samples, 
and $n_0$ is the number of control samples. The asymptotic variance is 
\$
\sigma^2 = \frac{1}{p_1} \Var\Big( r_{1,0} (X ) \big( {H}(X,Y(1)) -  {h}(X ) \big) \biggiven T=1 \Big) + \frac{1}{p_0} \Var\big( h(X)\given T=0 \big).
\$
where $p_1 = \PP(T=1)$, $p_0 =\PP(T=0)$ and all the 
expectations (variances) are induced by the observed distribution. 

It now remains to show that $\hat\sigma^2\to \sigma^2$ 
in the definition of Theorem~\ref{thm:cond}. 
As $\hat p_1\stackrel{p}{\to} p_1$, $\hat p_0 \stackrel{p}{\to} p_0$, 
by the law of large numbers, 
it suffices to show that 
$\frac{1}{n_1} \sum_{i\in \cI_1}\big(  d_{1,i}^2 - (d_{1,i}^*)^2 \big) = o_P(1)$
and
$\frac{1}{n_1} \sum_{i\in \cI_1} \big(d_{1,i}  -  d_{1,i}^* \big) = o_P(1)$
and similar for $(d_{0,i},d_{0,i}^*)$, where we define the oracle 
counterparts 
\$ 
d_{1,i}^*= r_{1,0}(X_i)\big(H(X_i,Y_i) - h(X_i)\big),\quad 
d_{0,i}^* =  h(X_i). 
\$
By Cauchy-Schwarz inequality, 
\$
\frac{1}{n_1} \sum_{i\in \cI_1}\big(  d_{1,i}^2 - (d_{1,i}^*)^2 \big) 
&= \frac{1}{n_1} \sum_{i\in \cI_1}\big(  d_{1,i} -  d_{1,i}^* \big)^2 
+ \frac{1}{n_1} \sum_{i\in \cI_1} 2\big(  d_{1,i} -  d_{1,i}^* \big) d_{1,i}^* \\
&\leq  \frac{1}{n_1} \sum_{i\in \cI_1}\big(  d_{1,i} -  d_{1,i}^* \big)^2 
+ \sqrt{ \frac{1}{n_1} \sum_{i\in \cI_1}\big(  d_{1,i} -  d_{1,i}^* \big)^2} \cdot  \sqrt{ \frac{1}{n_1} \sum_{i\in \cI_1}\big( d_{1,i}^* \big)^2 }.
\$
Focusing on the summatin within $\cI_1^{(j)}$, 
we have 
\$
\frac{1}{|\cI_1^{(j)}|} \sum_{i\in \cI_1^{(j)}}\big(  d_{1,i} -  d_{1,i}^* \big)^2 
= O_P\Big( \big\| \hat{r}^{(j)}(\hat{H}^{(j)} - \hat{h}^{(j)}) - r_{1,0}(H-h)\big\|_{L_2(\PP_{\cdot\given T=1})}^2 \Big),
\$
where the right-handed side is $o_P(1)$ under the conditions of Theorem~\ref{thm:cond}. 
Other folds and other summation terms follow similar arguments hence $\hat\sigma^2\stackrel{P}{\to} \sigma^2$. 
By Slutsky's lemma, 
we conclude the proof of Theorem~\ref{thm:cond}. 
\end{proof}

\subsection{Proof of Theorem~\ref{thm:cond_robust}}
\label{app:subsec_cond_robust}

\begin{proof}[Proof of Theorem~\ref{thm:cond_robust}] 
The proof follows exactly the same arguments as the proof of Theorem~\ref{thm:cond} with 
$\theta^\diamond$ in place of $\theta^*$, 
where all the errors are controlled in the same way; 
the only difference is to show that 
\$
\EE\big[r(X)\nabla_\theta \ell\big(\theta^\diamond(X),X,Y(1)\big) [\hat\theta(X) - \theta^\diamond(X) ] \biggiven T=1\big]
\$
in parallel with~\eqref{eq:zero_grad} in the proof of Theorem~\ref{thm:cond}. 
We note that this is directly implied by our local condition. 
Therefore, 
under the conditions of Theorem~\ref{thm:cond_robust}, 
\$
\hat\mu_{1,0}^-  = - \frac{1}{n_1}\sum_{i\in \cI_1} r_{1,0}(X_i)\big[  {H}^\diamond(X_i,Y_i(1)) -  {h}^\diamond(X_i ) \big] - \frac{1}{n_0} \sum_{i\in \cI_0} h^\diamond(X_i).
\$
The terms in the summation has expectation 
\$
\mu_{1,0}^\diamond := - \EE\big[r_{1,0}(X_i)   {H}^\diamond(X_i,Y_i(1))  \biggiven T=1\big].
\$
Since $\alpha^*(x),\eta^*(x)$ is the per-$x$ minimizer of $\EE[\ell(\theta,X,Y(1))\given X=x,T=1]$, we have 
\$
\EE\big[   {H}^\diamond(X_i,Y_i(1)) \biggiven X=x,T=1\big] \geq 
\EE\big[   {H} (X_i,Y_i(1)) \biggiven X=x,T=1\big]
\$
for $\PP_{X\given T=1}$-almost all $x$, hence by tower property, we have 
$\mu_{1,0}^\diamond \leq \mu_{1,0}^-$. 
On the other hand, the consistency of $\hat\sigma^2$ to $\Var(\phi_{1,-}^{\diamond}(X,Y,T))$ 
also follows the same arguments as the proof of Theorem~\ref{thm:cond} with 
$\theta^\diamond$, which concludes our proof of Theorem~\ref{thm:cond_robust}. 
\end{proof}

\section{Technical lemmas}

\begin{lemma}\label{lem:cond_to_op}
Let $\cF_n$ be a sequence of $\sigma$-algebra, and let $A_n\geq 0$ be a sequence of nonnegative random variables. If $\EE[A_n\given \cF_n] = o_P(1)$, then $A_n=o_P(1)$. 
\end{lemma}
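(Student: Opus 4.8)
The plan is to establish the defining property of $o_P(1)$ directly: for every fixed $\epsilon>0$ I would show $\PP(A_n>\epsilon)\to 0$. The entire argument hinges on the nonnegativity of $A_n$, which is exactly what lets me invoke the conditional Markov inequality and convert the hypothesis on the conditional means into a statement about the unconditional tail probability.

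First I would fix $\epsilon>0$ and use the tower property to write $\PP(A_n>\epsilon)=\EE\big[\PP(A_n>\epsilon\given \cF_n)\big]$. Since $A_n\ge 0$, the conditional Markov inequality gives $\PP(A_n>\epsilon\given \cF_n)\le \epsilon^{-1}\EE[A_n\given\cF_n]$ almost surely; combining this with the trivial bound $\PP(A_n>\epsilon\given\cF_n)\le 1$ yields
\[
\PP(A_n>\epsilon)\;\le\;\EE\Big[\min\big\{1,\ \epsilon^{-1}\EE[A_n\given\cF_n]\big\}\Big].
\]
Thus the whole problem reduces to showing that the right-hand side tends to zero.

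The last step is where the only genuine care is needed. By hypothesis $\EE[A_n\given\cF_n]\pto 0$, and hence $\min\{1,\ \epsilon^{-1}\EE[A_n\given\cF_n]\}\pto 0$ as well, while this integrand is uniformly bounded by $1$. I would then invoke the bounded convergence theorem for convergence in probability (a uniformly bounded sequence that converges in probability converges in $L^1$) to conclude that its expectation tends to $0$. Therefore $\PP(A_n>\epsilon)\to 0$ for every $\epsilon>0$, which is precisely the assertion $A_n=o_P(1)$. The main obstacle, if one wishes to call it that, is simply avoiding the temptation to pass to the limit inside the expectation via ordinary dominated convergence, which would require almost-sure convergence; it is the combination of uniform boundedness and convergence \emph{in probability} that licenses the interchange here.
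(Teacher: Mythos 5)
Your proof is correct and follows essentially the same route as the paper: conditional Markov's inequality applied to the nonnegative $A_n$, followed by passing the limit through the expectation of a $[0,1]$-bounded sequence converging in probability to zero. The only cosmetic difference is that you cite the bounded convergence theorem for convergence in probability as a known fact, whereas the paper unpacks it via the subsequence-plus-dominated-convergence argument.
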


\begin{proof}[Proof of Lemma~\ref{lem:cond_to_op}]
By Markov's inequality, for any $\epsilon>0$, we have 
\$
B_n := \PP( A_n >\epsilon \given \cF_n) \leq \frac{\EE[A_n\given \cF_n]}{\epsilon } = o_P(1),
\$
and $B_n\in[0,1]$ are bounded random variables. For any subsequence $\{n_k\}_{k\geq 1}$ of $\mathbb{N}$, since $B_{n_k}\pto 0$, there exists a subsequence $\{n_{k_i}\}_{i\geq 1} \subset \{n_k\}_{k\geq 1}$ such that $B_{n_{k_i}} \stackrel{\text{a.s.}}{\to} 0$ as $i\to \infty$. By the dominated convergence theorem, we have $\EE[B_{n_{k_i}}]\to 0$, or equivalently, 
$
\PP(A_{n_{k_i}} >\epsilon) \to 0.
$
Therefore, for any subsequence $\{n_k\}_{k\geq 1}$ of $\mathbb{N}$, there exists a subsequence $\{n_{k_i}\}_{i\geq 1} \subset \{n_k\}_{k\geq 1}$ such that $A_{n_{k_i}} \pto 0$ as $i\to \infty$. By the arbitrariness of $\{n_k\}_{k\geq 1}$, we know $A_n\pto 0$ as $n\to \infty$, which completes the proof. 
\end{proof}

\end{document}